\newcounter{resultnum}[section]
\newcounter{conclusionnum}[section]
\newcounter{conditionnum}[section]
\newcounter{conjecturenum}[section]
\newcounter{examplenum}[section]
\newcounter{exercisenum}[section]
\newtheorem{lemma}{Lemma}[section]
\newcounter{lemmanum}[section]
\newcounter{notationnum}[section]
\newtheorem{theorem}{Theorem}[section]
\newcounter{theoremnum}[section]
\newtheorem{definition}{Definition}[section]
\newcounter{definitionnum}[section]
\newtheorem{corollary}{Corollary}[section]
\newcounter{corollarynum}[section]
\newcounter{remarknum}[section]
\newtheorem{proposition}{Proposition}[section]
\newcounter{propositionnum}[section]
\newcounter{acknowledgementnum}[section]
\newcounter{algorithmnum}[section]
\newcounter{axiomnum}[section]
\newcounter{casenum}[section]
\newcounter{claimnum}[section]
\newcounter{summarynum}[section]
\newcounter{problemnum}[section]
\newenvironment{proof}[1][]{\textbf{Proof.} }{}
\begin{document}

\title{Nonholonomic Jet Deformations, Exact Solutions\\
for Modified Ricci Soliton and Einstein Equations}
\date{September 23, 2016}
%\author{ Authors \vskip4.8cm }
\author{
%${}$ \\
%\author{
{Subhash Rajpoot}\\
%\vspace {.1 in}
{\small\it California State University at Long Beach,} \\
{\small\it Long Beach, California, USA}  \\
{\small\it email: Subhash.Rajpoot@csulb.edu}
${}$
\vspace {.1 in}  \\
{ Sergiu I. Vacaru }\\
%\vspace {.1 in}
{\small \textit{Quantum Gravity Research; 101 S. Topanga Canyon Blvd \#
1159. Topanga, CA 90290, USA}}  {\small and }\\
{\small \textit{University "Al. I. Cuza" Ia\c si, Project IDEI }} \\
{\small \textit{18 Pia\c ta Voevozilor bloc A 16, Sc. A, ap. 43, 700587 Ia\c
si, Romania }} \\
{\small \textit{email: sergiu.vacaru@gmail.com}}
}
\maketitle

\begin{abstract}
Let $\mathbf{g}$ be a pseudo--Riemannian metric of arbitrary signature on a
manifold $\mathbf{V}$ with conventional $n+n$ dimensional splitting, $\
n\geq 2,$ determined by a nonholonomic (non--integrable) distribution $%
\mathcal{N}$ defining a generalized (nonlinear) connection and associated
nonholonomic frame structures. We work with an adapted linear metric
compatible connection $\widehat{\mathbf{D}}$ and its nonzero torsion $%
\widehat{\mathcal{T}}$, both completely determined by $\mathbf{g}$. Our
first goal is to prove that there are certain generalized frame and/or jet
transforms and prolongations with $(\mathbf{g},\mathbf{V})\rightarrow (%
\widehat{\mathbf{g}},\widehat{\mathbf{V}})$ into explicit classes of
solutions of some generalized Einstein equations $\widehat{\mathbf{R}}%
\mathit{ic}=\Lambda \widehat{\mathbf{g}}$, $\Lambda =const$, encoding
various types of (nonholonomic) Ricci soliton configurations and/or jet
variables and symmetries. The second goal is to solve additional constraint
equations for zero torsion, $\widehat{\mathcal{T}}=0$, on generalized
solutions constructed in explicit forms with jet variables and extract
Levi--Civita configurations. This allows us to find generic off--diagonal
exact solutions depending on all space time coordinates on $\mathbf{V}$ via
generating and integration functions and various classes of constant jet
parameters and associated symmetries. Our third goal is to study   how such
generalized metrics and connections can be related by the so--called
"half-conformal" and/ or jet deformations of certain sub--classes of
solutions with one, or two, Killing symmetries. Finally, we
present some examples of exact solutions constructed as
nonholonomic jet prolongations of the Kerr metrics, with possible Ricci
soliton deformations, and characterized by nonholonomic jet structures and
generalized connections.

\vskip0.2cm

\textbf{Keywords:} Nonholonomic manifolds and jets, generalized connections,
geometric methods and PDE, Ricci solitons, Einstein manifolds, modified
gravity, exact solutions and mathematical relativity.

\vskip0.2cm

MSC: 58A20, 53C05, 53C43, 83C15, 83D05
\end{abstract}

%\author{ {Subhash Rajpoot}\\ {\small\it California State University at Long Beach,} \\ {\small\it Long Beach, California, USA}  \\
%{\small\it email: Subhash.Rajpoot@csulb.edu} ${}$ \vspace {.1 in}  \\
%{ Sergiu I. Vacaru }\\  {\small \textit{Quantum Gravity Research; 101 S. Topanga Canyon Blvd \#
%1159. Topanga, CA 90290, USA}}  {\small and } \\ {\small \textit{University "Al. I. Cuza" Ia\c si, Project IDEI }} \\
%{\small \textit{18 Pia\c ta Voevozilor bloc A 16, Sc. A, ap. 43, 700587 Ia\c si, Romania }} \\
%{\small \textit{email: sergiu.vacaru@gmail.com}}}

%\tableofcontents

\section{Introduction}

Various results and methods of the theory of nonholonomic manifolds, jets
and connections can be combined and applied to the study of symmetries of
systems of nonlinear partial differential equations, PDEs, and constructing
exact and approximate solutions. In modern physics, such fundamental field
and evolution equations are related to the Ricci soliton geometry,
mathematical relativity, particle physics and geometric mechanics \cite%
{saunders,torreand,torre,vexsol1}. For instance, a jet space technique was
elaborated upon to analyze special features of the vacuum Einstein equations in
general relativity, GR, that allows to define certain generalized
symmetries and conservation laws. In a more general context, a Lagrangian
formalism was elaborated  on the jet--gauge and jet--diffeomorphism groups with
the aim of unifying gravity with internal gauge symmetries \cite{aldaya}.
Another direction related to Finsler--Lagrange geometry and nonholonomic
mechanics was considered by the authors of papers \cite%
{balanstavr,neagu,atanasiu,balan}, where certain generalizations of Einstein
equations were formulated on jet spaces endowed with nonlinear connection
structures.

In recent years,  series of works have been devoted to elucidating geometric
methods  that allow for the  decoupling of (modified) Einstein equations for certain
"auxiliary" connections with respect to adapted nonholonomic frames, and
constructing generic off--diagonal solutions\footnote{%
which can not be diagonalized by coordinate transforms in a finite spacetime
region, for instance, in GR} depending on all spacetime coordinates, see
reviews of results in \cite{vexsol1,vexsol2,gheorghiu}. Following the
so--called anholonomic frame deformation method, AFDM, the solutions are
generated in explicit forms via formulae determined by generating and
integration functions, and various commutative and noncommutative parameters.
Such solutions may exhibit Killing, non--Killing solitonic and/or other type of
symmetries, which for respective boundary/ initial / source conditions can
be with nontrivial spacetime topology. The solutions may also describe evolution and/or
dynamical processes, or result in stochastic behaviour. We can extract
Levi--Civita configurations with zero torsion if we impose additional
nonholonomic constraints on certain classes of generalized solutions. It
should be noted that because such systems are nonlinear it is important to
consider the restrictions via integration / generation functions and
constants, symmetry / boundary / initial conditions "at the end", on some
defined integral varieties. By prescribing from the very beginning only some
special ansatz for the metrics and connections  may result in a simplified system of
equations (for instance, to transform it into a nonlinear system of ordinary
differential equations), that may not  decouple the PDEs in a general form. This can
result in  reducing the number of  the bulk of nonlinear off--diagonal multi-variables.

The goal of this work is to study the basic properties of nonholonomic Ricci
soliton and (modified) Einstein equations with metrics and (generalized)
connections generated by jet prolongations of exact solutions. We
study also the constraints under which various classes of solutions with
generalized jet variables and symmetries are transformed into standard
Einstein metrics with jet parametric dependence of generic off--diagonal
metrics. Readers are referred to monographs \cite{saunders,kolar} on main
results on jets and jet bundle geometry. The literature on nonholonomic jet
manifolds and bundles is less popular and more sophisticated than that on
holonomic jets. Experts on mathematical relativity and PDEs are less
familiar with the geometry of nonholonomic manifolds elaborated as in the Vr\v{a}%
nceanu--Horak approach \cite{vranceanu1,vranceanu2,vranceanu3,horak}, see
recent results and applications in \cite{bejancu,vjmp14}. We cite here some
important works on generalized connections developed by different schools of
differential geometry on nonholonomic jets, quasi-jets and the theory of higher
order connections, see \cite{ehresmann54,pradines,tomas,dekret,virsik}. We will
 sketch a few essential notions and necessary results using recent
approaches formulated in Refs. \cite{kures98,kures01}.

In this work, we follow three explicit goals motivated and stated in section %
\ref{ssgoals}. The first goal is to develop the anholonomic frame deformation
methods, (AFDM, see reviews of results in \cite{vexsol1,vex3,vexsol2}) in
such a form that will allow us to decouple the nonholonomic $r$--jet
deformations of the Ricci soliton and Einstein equations, and integrate
such equations for general classes of generic off--diagonal metric and
nonlinear connection structures. The second goal is to show how we can
extract from extra dimensional jet configurations the Levi--Civita connections (in
particular, physically important solutions in GR with jet parameters) by
solving the nonholonomic constraints for zero torsion conditions. Finally, the third
goal is to  analyze explicit examples of exact solutions depending on
jet parameters defining nonholonomic deformations of black hole solutions
and gravitational solitonic waves. We study how nonholonomic and/or $r$--jet
deformations of the Kerr metric may model the physical effects of Ricci
solitons in massive gravity and other modified gravity models \cite%
{capoz,odints1,odints2,drg1,drg3,hr1,hr2,kour,stavr,mavr,gheorghiu}.

The article is organized as follow. In section \ref{nhmj}, we recall basic
facts and definitions concerning nonholonomic manifolds and jets and
elaborate on the concept of generalized connection structure. We provide an
introduction to the geometry of nonholonomic manifolds and bundles endowed
with nonlinear connection structures. There are outlined main results and
stated respective denotations on nonholonomic maps and jets of (non)
holonomic manifolds. \textit{The first important result is formulated in
Theorem \ref{tcandist}} stating that there is a canonical distinguished
connection structure on $r$--jet prolongation of (modified) Einstein
manifolds which will allow to prove the main results  in the next section.
Then, we elaborate in details the formalism of nonholonomic $r$--jet
prolongation of Ricci soliton and (generalized/modified) equations. This
allows us to formulate and prove \textit{the second important result, i.e.
Theorem \ref{th2.4}}, which provides the N--adapted equations for gradient
canonical Ricci jet-solitons and generalizes the jet-extensions of Einstein
manifolds.

In section \ref{s3}, we formulate and prove \textit{the main theorems (the
first two main results of this work) on decoupling, see Theorem \ref%
{tdecoupling}, and integration, see Theorem \ref{th3.2},} of (modified)
Ricci soliton and Einstein equations. The approach consists of the  generalization
of the results for nonholonomic jet prolongations of fundamental geometric and
physical objects in generalized/ modified gravity theories and further
developments of AFDM. The key idea is to consider nonholonomic 2+2+2+...
splitting with two dimensional (2-d) shells of jet coordinates and adapting
the geometric constructions for such nonholonomic spacetime and jet
distributions.

Section \ref{s4} is devoted to explicit examples of exact solutions
depending on jet coordinates, jet parameters, symmetries,  Killing
and non-Killing symmetries, deformations by Ricci soliton configurations,
modified gravity contributions, mimicking massive gravity terms with
effective cosmological constant and gravitational polarizations. \textit{The
third main result of this paper, Theorem \ref{th4.1}}, is related to Ricci
soliton modifications and $r$--jet prolongations of the Kerr metric which
play an essential role in the physics of black holes. Such black hole
metrics can be extended to generic off--diagonal forms for various classes
of modified gravity theories with extra dimensions \cite%
{vexsol1,vex3,vexsol2,gheorghiu}. The AFDM even allows us to construct very
general integral varieties for such gravitational and geometric evolution
like nonlinear systems of PDEs. For $r$--jet configurations, it is clear
that new classes of gravitational and matter field equations at least
possess certain jet type local symmetries and possible association with  nonlinear
gauge interior degrees of freedom. We show that such solutions can be
constructed both with zero or non--zero canonical torsion, with possible
rotoid symmetries for Kerr -- de Sitter configurations and other classes of
vacuum and non-vacuum jet prolongations.

In Appendix, we provide a summary of the   most important and necessary N--adapted
coefficient formulas and provide technical details of  some theorems.

\vskip5pt

\textbf{Acknowledgments:\ } The work was partially supported by the Program
IDEI, PN-II-ID-PCE-2011-3-0256 and visiting research programs at CERN and
visiting DAAD fellowships. The results were communicated at the Marcel Grossman
Conference in Rome in 2015.

\section{Nonholonomic Manifolds, Jet Bundles and Generalized Connections}

\label{nhmj}We start by recalling a few basic definitions on the geometry of
nonholonomic manifolds and bundles, related jet spaces and theory of
generalized (nonlinear) connections \cite{saunders,kolar,kures98,kures01}.
The geometric approach is generalized in a form to unify both the concepts
of nonholonomic manifolds \cite%
{vranceanu1,vranceanu2,vranceanu3,horak,bejancu,vjmp14} and that of
nonholonomic jet spaces \cite{ehresmann54,pradines,kures01}. \

We shall work in the category of $n+m$ dimensional nonholonomic manifolds $%
\mathcal{V},$ with $n,m\geq 2,$ of necessary smooth class (for instance, of
class $\mathcal{C}^{\infty }$), Hausdorff, finite dimensional and without
boundaries. The solutions of certain systems of nonlinear partial
differential equations (PDE) can be topologically nontrivial, with
singularities and various type of Killing and non--Killing symmetries. Such
PDEs, nonholonomic constraints\footnote{%
equivalently, anholonomic (nonholonomic), i.e. non--integrable} and their solutions are for
geometric models of (modified) gravity theories and Ricci soliton equations
defined as certain stationary configurations in a nonholonomic geometric
evolution system, with possible Wick rotations (for small deformations) and
frame transformations  between Lorentzian and Euclidean signatures of metrics.

\subsection{Holonomic jets}
\label{ssholj}
Jets are certain equivalence classes of smooth maps between two manifolds $%
M,\dim M=n,$ and $Q,\dim Q=m,$ when the maps are represented by Taylor
polynomials. One writes this as $f,g:M\rightarrow Q$ and says that a $r$-jet
is determined at a point $u\in M$ if there is a $r$-th order contact at $u.$
The idea is formalized mathematically using the concept as the $r$-th order
contact of two curves on a manifold.

\begin{definition}
-\textbf{Lemma:} Two curves $\gamma ,\delta :$ $\mathbb{R}\rightarrow V$
have the $r$-th contact at zero if for every smooth function $\varphi $ on \
$M$ the difference $\varphi \circ \gamma -\varphi \circ \delta $ vanishes to
$r$-th order at $0\in \mathbb{R}.$ In this case, we have an equivalence
relation $\gamma \sim _{r}\delta $ when $r=0$ implies $\gamma (0)=\delta (0).$
If $\gamma \sim _{r}\delta ,$ then $f\circ \gamma \sim _{r}f\circ \delta $
for every map $f:b\rightarrow Q$
\end{definition}

Two maps $f,g:V\rightarrow Q$ are said to determine the same $r$--jet at $%
x\in M,$ if for every curve $\gamma :$ $\mathbb{R}\rightarrow V$ with $%
\gamma (0)=x$ the curves $f\circ \gamma $ and $g\circ \gamma $ have the $r$%
-th order contact at zero. In such a case, we write $j_{x}^{r}f=j_{x}^{r}g,$
or $j^{r}f(x)=j^{r}g(x).$ An equivalence class of this relation is called an
$r$-jets of $M$ into $Q.$

\begin{definition}
\label{drjet}The set of all $r$--jets of $M$ into $Q$ is denoted by $%
J^{r}(M,Q)$; for an element $X=j_{x}^{r}f\in J^{r}(M,Q),$ the point $%
x:=\alpha X$ is the source of $X$ and the point $f(x)=:\beta X$ is the
target of $X.$
\end{definition}

One denotes by $\pi _{s}^{r},0\leq s\leq r$ the projection $%
j_{x}^{r}f\rightarrow j_{x}^{s}f$ of $r$--jets into $s$--jets. All $r$--jets
form a category, the units of which are the $r$--jets of the identity maps
of manifolds.

By $J_{x}^{r}(M,Q),$ or $J_{x}^{r}(M,Q)_{y}$ we mean the set of all $r$-jets
of $x$ onto $Q$ with source $x\in M,$ or tangent $y\in Q,$
\begin{equation*}
J_{x}^{r}(M,Q)_{y}=J_{x}^{r}(M,Q)\cap J_{x}^{r}(M,Q)_{y}\mbox{ and }%
L_{n,m}^{r}=J_{0}^{r}(\mathbb{R}^{n},\mathbb{R}^{m})_{0}
\end{equation*}%
In local coordinates $x^{i},$ the value $\partial _{\check{i}}f:=\frac{%
\partial ^{|\check{i}|}f}{(\partial x^{1})^{i_{1}}...(\partial x^{n})^{i_{n}}%
}$ is the partial derivative of a function $f:U\subset \mathbb{R}%
^{n}\rightarrow \mathbb{R},$ with a \textbf{multi-index} $\check{i}$ of
range $n,$ which is a $m$--tuple $\check{i}=(i_{1},...,i_{n})$ of
non-negative integers. We write $|\check{i}|=i_{1}+...+i_{n},$ with $\check{i%
}!=i_{1}!i_{2}!...i_{n}!,$ $0!=1,$ and $x^{\check{i}%
}=(x^{1})^{i_{1}}...(x^{n})^{i_{n}}$ for $x=(x^{1},x^{2},...,x^{n})\in
\mathbb{R}^{n}.$\footnote{%
Our definition of multi-index derivative $\partial _{\check{i}}f$ \ is
similar to $D_{i}f$ used in \cite{kolar}. For clarity, we need to modify the system of
notations in order to elaborate a geometric method of constructing exact
solutions of PDEs with jet variables.}

Consider a local coordinate system $x^{i}$ on $M$ and a local coordinate
system $y^{a}$ on $Q.$ Two maps $f,g:M\rightarrow Q$ satisfy $%
j_{x}^{r}f=j_{x}^{r}g$ if and only if all the \ partial derivatives up to
order $r$ of the components $f^{a}$ and $g^{a}$ of their coordinate
expressions coincide at $x.$ In this a case the chain rule implies $f\circ
\gamma \sim _{r}g\circ \gamma .$ For the curves $x^{i}=\zeta ^{i}t$ with
arbirtary $\zeta ^{i},$ these conditions read $\sum\limits_{|\check{i}%
|=k}(\partial _{\check{i}}f^{a}(x))\zeta ^{\check{i}}=\sum\limits_{|\check{i}%
|=k}(\partial _{\check{i}}g^{a}(x))\zeta ^{\check{i}}$, for $k=0,1,...,r.$

The elements of $L_{n,m}^{r}$ can be identified with the $r$-th order Taylor
expansions of the generating maps, i.e. with  $m$-tuples of polynomials
of degree $r$ in $m$ variables without the absolute term. Such an expression $%
\sum\limits_{1\leq |\check{i}|\leq r}\zeta _{\check{i}}^{a}x^{\check{i}}$ is
the polynomial representative of a $r$--jet. Hence $L_{n,m}^{r}$ is a
numerical space of the variables $\zeta _{\check{i}}^{a}.$ Standard
combinatorics yields $\dim L_{n,m}^{r}=m\left[ \left(
\begin{array}{c}
n+r \\
n%
\end{array}%
\right) -1\right] .$ The coordinates on $L_{n,m}^{r}$ are sometimes denoted
more explicitly by $\zeta _{i}^{a},\zeta _{ij}^{a},...,\zeta
_{i_{1}...i_{r}}^{a},$ symmetric in all subscripts. The projection $\pi
_{s}^{r}:$ $L_{m,n}^{r}\rightarrow L_{m,n}^{s}$ consists in suppressing all
terms of degree $>s.$

The set of all invertible elements of $L_{n,m}^{r}$ with the jet composition
is a Lie group $G_{m}^{r}$ called the $r$-th differential group of the $r$%
-th jet group in dimension $m.$ For $r=1,$ the group $G_{m}^{1}$ is
identified with $GL(m,\mathbb{R}).$

Let $p:Q\rightarrow M$ be a fibered manifold.

\begin{definition}
\label{dprolj}A map $j^{r}f:M\rightarrow J^{r}(M,Q)$ is called a $r$-th jet
prolongation of $f:M\rightarrow Q.$ The set $J^{r}Q$ of all $r$--jets of the
local sections of $Y$ is called the $r$--th jet prolongation of $Q$ and $%
J^{r}Q\subset J^{r}(M,Q)$ is a closed submanifold.
\end{definition}

We note that if $Q\rightarrow M$ is a vector bundle, then $J^{r}Q$ is a also
a vector bundle.

\subsection{Nonholonomic manifolds and nonlinear connections}

\label{ssnm}The concept of nonholonomic jet is elaborated in Refs. \cite%
{ehresmann54,pradines,kures01}, when multi--indices are not symmetric and
the jet spaces are subject to certain non--integrable conditions.
Nonholonomic structures with non--integrable constraints can be defined not only on
the space of jets but also on the 'prime', $M,$ and 'target', $Q,$
manifolds. In our approach, we shall elaborate a geometric formalism
encoding nonholonomic geometric structures both on manifolds and maps, i.e.
on $M,Q$ and $J^{r}(M,Q).$

By definition, a \textit{nonholonomic manifold} $\mathbf{V}$ \textit{is a
manifold endowed with a nonholonomic distribution.} In this work we follow
the approach elaborated by G. Vr\u{a}nceanu \cite%
{vranceanu1,vranceanu2,vranceanu3} and Z. Horak \cite{horak}, see reviews
\cite{bejancu,vjmp14}. For our purposes (to construct jet-generalizations of
the Einstein equations and physically relavant solutions), it is enough to
consider a nonholonomic distrubution determined by a \textit{nonlinear
connection} (N--connection) structure $\mathbf{N}=\{N_{i}^{a}(x,y)\}.$ Such
a N--connection can be introduced as a Whitney sum%
\begin{equation}
\mathbf{N:\ }T\mathbf{V}:=h\mathbf{V}\oplus v\mathbf{V,}  \label{whitney}
\end{equation}%
where $T\mathbf{V}$ is the tangent bundle of $\mathbf{V}$ and $h\mathbf{V}$
and $v\mathbf{V}$ are, respectively the horizontal (h) and vertical (v)
subspaces for a nonholonomic fibration.\footnote{%
Local coordinates are with a conventional 2+2 splitting, $u^{\alpha
}=(x^{i},y^{a}),$ with $i,j,...=1,2$ and $a,b,...=3,4,...;$ in brief, $%
u=(x,y)\in \mathbf{V}$ for any point and its coordinates. We shall use
boldface symbols in order to emphasize that certain spaces and/or geometric
objects are provided with or adapted to a N--connection structure.} N-connections
were used in coordinate form by E. Cartan in his model of Finsler geometry
\cite{cartanf} by considering $\mathbf{V}=TM$ as a tangent bundle to a manifold
$M.$ In a similar form, we can work with a vector bundle, $V=$ $E,$ on $%
M,\dim E=n+m,\dim M=n$ (for $n,m\geq 2)$ instead of $TM.$ The global
definition of N--connection is due to C. Ehresmann \cite{ehresmann50}. In
\cite{kolar}, such connections are studied for fiber bundles and are called
generalized (Ehresmann) connections.\footnote{%
We refer the readers to this monograph for a modern approach to differential
geometry and main results on jets, Weil bundles and generalized connections.}
We will follow a different system of notations that were elaborated upon and
used in the theory of nonholonomic (non) commutative Ricci flows,
nonholonomic Dirac operators and Clifford bundles, and,  deformations and
quantization of generalized geometries and gravity theories \cite%
{vnrf,vncrfvnrf,vjmp14}.

Any $\mathbf{N}$ defines a N--adapted frame structure $\mathbf{e}_{\alpha }=(%
\mathbf{e}_{i},e_{a})$, on $T\mathbf{V,}$ and co--frame structure $\mathbf{e}%
^{\beta }=(e^{j},\mathbf{e}^{b}),$ on the dual tangent bundle $T^{\ast }\mathbf{%
V,}$
\begin{equation}
\mathbf{e}_{\alpha }=(\mathbf{e}_{i}=\partial _{i}-N_{i}^{b}\partial
_{a},e_{a}=\partial _{a})\mbox{\ and\ }\mathbf{e}^{\beta }=(e^{j}=dx^{j},%
\mathbf{e}^{b}=dy^{b}+N_{i}^{b}dx^{i}),  \label{dderdif}
\end{equation}%
where the Einstein summation convention is applied on  repeated indices and $%
\partial _{i}=\partial /\partial x^{i}$ and $\partial _{a}=\partial
/\partial y^{b}.$ In general, such local bases are nonholonomic, i.e. $%
\mathbf{e}_{\alpha }\mathbf{e}_{\beta }-\mathbf{e}_{\beta }\mathbf{e}%
_{\alpha }=W_{\alpha \beta }^{\gamma }\mathbf{e}_{\gamma },$ with nontrivial
nonholonomy coefficients $W_{\alpha \beta }^{\gamma }.$ \ In this work, we
take these frame structures to be canonical in the sense that they
are linear on N--connection coefficients and admit (see Theorem \ref%
{tdecoupling}) the decoupling of (modified) Einstein equations in general form.
Here we note that although there are different canonical N--connection structures in
different models of Finler-Lagrange geometry, Hamilton geometry etc., see
details in \cite{cartanf,vnrf,vncrfvnrf,vjmp14},  those constructions can
not be used for constructing exact solutions in gavity theories. We call
certain geometric objects to be distinguished objects (d--objects), for
instance d--tensors, d--vectors if they are determined by the coefficients in
N-adapted form\footnote{%
this mean that certain geometric constructions are adapted to a horizontal $h$- and vertical $v$%
--splitting stated by a N--connection distribution (\ref{whitney})}, i.e.
with respect to N--elongated (co) bases (\ref{dderdif}) and their tensor
products. \ For instance, a vector $X\in T\mathbf{V}$ can be written in a
"non N-adapted" coordinate form, $X=X^{\alpha }\partial _{\alpha },$ or as a
d--vector, $\mathbf{X}=hX\oplus vX=\mathbf{X}^{\alpha }\mathbf{e}_{\alpha
}=X^{i}\mathbf{e}_{i}+X^{a}e_{a}.$

Two important characteristics of a N--connection are 1) the almost
complex  structure $\mathbf{J},$ where $\mathbf{J(e}_{i}\mathbf{)=-}e_{2+i}$
and $\mathbf{J(}e_{2+i}\mathbf{)=e}_{i},$ with $\mathbf{J}$ satisfying the symplectic relation $\mathbf{J\circ J=-}\mathbb{I}%
,$ where  $\mathbb{I}$ is the unit matrix and 2) the Neijenhuis tensor (called also
the curvature of N--connection) defined as %
\begin{equation*}
\ ^{N}\mathbf{J}[\mathbf{X,Y}]:=-[\mathbf{X,Y}]+[\mathbf{JX,JY}]-\mathbf{J}[%
\mathbf{JX,Y}]-\mathbf{J}[\mathbf{X,JY}],\ \forall \mathbf{X,Y\in }T\mathbf{V%
}.
\end{equation*}

Linear connections on $(\mathbf{V},\mathbf{N})$ can be defined in N--adapted
form as distinguished connections, \textit{d--connections}, in order to
preserve under parallel transport the distribution (\ref{whitney}). Such a
covariant differential operator splits as $\mathbf{D}=(hD,vD).$ We can
associate to $\mathbf{D}$ a 1--form $\mathbf{\Gamma }_{\ \alpha }^{\gamma }=%
\mathbf{\Gamma }_{\ \alpha \beta }^{\gamma }\mathbf{e}^{\beta }$ and
elaborate a N--adapted differential form calculus. The torsion and curvature
are defined, respectively, in terms of standard formulae:%
\begin{equation}
\mathbf{T}(\mathbf{X,Y)}:=\mathbf{D}_{\mathbf{X}}\mathbf{Y-D}_{\mathbf{Y}}%
\mathbf{X}+[\mathbf{X,Y}]\mbox{ \ and \ }\mathbf{R}(\mathbf{X,Y)}:=\mathbf{D}%
_{\mathbf{X}}\mathbf{D}_{\mathbf{Y}}-\mathbf{D}_{\mathbf{Y}}\mathbf{D}_{%
\mathbf{X}}\mathbf{-D}_{[\mathbf{X,Y}]}.  \label{dtordcurv}
\end{equation}%
Also, in the usual way, the Ricci d--tensor $\mathbf{R}ic$ is constructed by the contraction of indices
in  the curvature tensor $\mathbf{R}=\{\mathbf{R}_{\ \beta \gamma \mu
}^{\alpha }\},$ $\mathbf{R}ic:=\{\mathbf{R}_{\beta \gamma }=\mathbf{R}_{\
\beta \gamma \alpha }^{\alpha }\},$ \cite{vnrf,vncrfvnrf,vjmp14}. Readers
may study such papers, and references therein, on deformation quantization
of gravity based on almost complex structures characterizing generic
off-diagonal solutions.

Let $\mathbf{g}$ be a metric of arbitrary signature on a nonholonomic
manifold/ bundle $\left( \mathbf{V,N}\right) $ which in N--adapted form (\ref%
{whitney}) is represented as a symmetric d--tensor,
\begin{equation*}
\mathbf{g}=hg\oplus vh=\mathbf{g}_{\alpha \beta }(u)\mathbf{e}^{\alpha
}\otimes \mathbf{e}^{\beta }=g_{ij}(x,y)dx^{i}\otimes dx^{j}+g_{ab}(x,y)%
\mathbf{e}^{a}\otimes \mathbf{e}^{b}.
\end{equation*}%
For any metric structure $\mathbf{g}$ on a nonholonomic manifold $\left(
\mathbf{V,N}\right) ,$ there are two "preferred" linear connections,
completely and uniquely, defined by
\begin{equation}
\mathbf{g\rightarrow }\left\{
\begin{array}{ccccc}
\nabla : &  & \nabla \mathbf{g=0;\ }^{\nabla }\mathbf{T}=0, &  &
\mbox{ the
Levi--Civita connection;} \\
\widehat{\mathbf{D}}: &  & \widehat{\mathbf{D}}\mathbf{g=0;\ }h\widehat{%
\mathbf{T}}=0,v\widehat{\mathbf{T}}=0, &  &
\mbox{ the canonical
d--connection.}%
\end{array}%
\right.  \label{doublecon}
\end{equation}%
It should be noted that $\nabla $ is not a d--connection because it's
parallel transport does not preserve the horizontal $h$- and the vertical $v$-splitting (\ref%
{whitney}). Nevertheless, there is a unique N--adapted distortion relation
\begin{equation}
\widehat{\mathbf{D}}=\nabla +\widehat{\mathbf{Z}}  \label{distrel}
\end{equation}%
when both linear connections $\widehat{\mathbf{D}}$ and $\nabla $ and the
distorting d--tensor $\widehat{\mathbf{Z}}$ are completely determined by the
metric structure $\mathbf{g}$ for a prescribed N--connection structure $%
\mathbf{N.}$ The Ricci and Riemannian tensors are different for $\widehat{%
\mathbf{D}}$ and $\nabla $ because, in general, $\widehat{\mathbf{T}}\neq 0$
but $\mathbf{\ }^{\nabla }\mathbf{T}=0.$ All geometric constructions with $(%
\mathbf{g},\nabla ;\mathbf{V})$ can be transformed equivalently into similar
ones with $(\mathbf{g},\mathbf{N,D};\mathbf{V}),$ and conversely, if
distortion relations (\ref{distrel}) are utilized.

There are two canonical scalars determined by a d--metric $\mathbf{g}$ via $%
\widehat{\mathbf{D}},$ $\ ^{s}\widehat{\mathbf{R}}:=\mathbf{g}^{\beta \gamma
}\widehat{\mathbf{R}}_{\beta \gamma }$ and the standard (pseudo) Riemannian
scalar determined by $\nabla ,R:=\mathbf{g}^{\beta \gamma }R_{\beta \gamma
}. $ Both values are related by a distortion relation which can be found by
contracting with $\mathbf{g}^{\beta \gamma }$ nonholonomic deformations of
the Ricci tensor, $\widehat{\mathbf{R}}ic=Ric+\widehat{\mathbf{Z}}ic,$ which
are computed by substituting (\ref{distrel}) in formulae (\ref{dtordcurv}).

\subsection{Nonholonomic jets and N--adapted manifolds and maps}

Nonholonomic jet structures can be introduced even if the prime and target
manifolds are considered only with holonomic distributions. In a more
general context, all maps and manifolds can be nonholonomic.

\subsubsection{Nonholonomic maps of holonomic manifolds}

Let us consider two holonomic manifolds $M$ and $Q$ and introduce the set of
nonholonomic 1--jets $\mathbf{J}^{1}(M,Q):=J^{1}(M,Q)$ for $r=1.$\footnote{%
In \cite{kures01}, this is written $\tilde{J}^{1}(M,Q)$ instead of boldface $%
\mathbf{J}^{1}(M,Q).$ As we mentioned above, we use boldface letters in
order to emphasize horizontal $h$- and vertical $v$--splittings via a N--connection structure
of a class of geometric objects/ maps / spaces. We can consider such
decompositions from the maps defining a jet structure (and write $\mathbf{J}%
^{1}$) even when the respective prime and target manifold are holonomic ones,
when $M$ and $Q$ are not boldface.} By induction, we can consider the source
projection $\alpha :\mathbf{J}^{r-1}(M,Q)\rightarrow M$ and the target
projection $\beta :\mathbf{J}^{r-1}(M,Q)\rightarrow Q$ as the target projection
of $(r-1)$--th nonholonomic jets.

\begin{definition}
An $\mathcal{X}\in $ $\mathbf{J}^{r}(M,Q)$ is said to be a nonholonomic $r$%
--jet with the source $x\in M$ and the target $y\in Q$ if there is a local
section $\sigma :M\rightarrow \mathbf{J}^{r}(M,Q)$ such that $\mathcal{X}=%
\mathbf{j}_{x}^{1}\sigma $ and $\beta (\sigma (x))=y$
\end{definition}

We write $\mathcal{X}=\mathbf{j}_{x}^{1}\sigma $ (with calligraphic $%
\mathcal{X}$) instead of $X=\mathbf{j}_{x}^{1}\sigma $ from Definition \ref%
{drjet} in order to emphasize that the jet map is defined, in general, in
nonholonomic form. There is a natural embedding $J^{r}(M,Q)\subset \mathbf{J}%
^{r}(M,Q).$ In general, any $\mathcal{X}$ induces a nonholonomic map $\mu
\mathcal{X}:(\underbrace{TT\ldots T}_{r-\mbox{times}}\ M)_{x}\rightarrow (%
\underbrace{TT\ldots T}_{r-\mbox{times}}\ Q)_{y},$ \cite{kures01}.

\subsubsection{Nonholonomic maps of nonholonomic manifolds}

We can generalize the constructions with nonholonomic jets by considering that
the geometric objects and transforms are defined by equivalence classes of
smooth maps between two nonholonomic manifolds $\mathbf{V},\dim \mathbf{V}%
=n+n,$ and $\mathbf{Q},\dim \mathbf{Q}=m+m,$ and such maps are represented
by Taylor polynomials in certain N--adapted local frames. Other types of
nonholonomic geometric models can  also be elaborated on in a similar manner for
N--adapted maps of type $\mathbf{V\rightarrow V}^{\prime },$ \ where $\dim
\mathbf{V}=n+m$ and $\dim \mathbf{V}^{\prime }=n^{\prime }+m^{\prime }$ and
there are defined N--connection decompositions $T\mathbf{V}=h\mathbf{V\oplus
}v\mathbf{V}$ and $T\mathbf{V}^{\prime }=h^{\prime }\mathbf{V}^{\prime }%
\mathbf{\oplus }v^{\prime }\mathbf{V}^{\prime }$ of type (\ref{whitney})
with corresponding mappings $\mathbf{N\rightarrow N}^{\prime }.$

\begin{definition}
An $\mathcal{X}\in $ $\mathbf{J}^{r}(\mathbf{V},\mathbf{V}^{\prime })$ is
said to be a complete nonholonomic $r$--jet with the source $\mathbf{u}\in
\mathbf{V}$ and the target $\mathbf{u}^{\prime }\in \mathbf{V}^{\prime }$ if
there is a local section $\sigma :\mathbf{V}\rightarrow \mathbf{J}^{r}(%
\mathbf{V},\mathbf{V}^{\prime })$ such that $\mathcal{X}=\mathbf{j}%
_{u}^{1}\sigma $ and $\beta (\sigma (\mathbf{u}))=\mathbf{u}^{\prime }%
\mathbf{.}$
\end{definition}

For simplicity, we use the same nonholonomic jet symbol $\mathcal{X}=\mathbf{%
j}_{\mathbf{u}}^{1}\sigma $ with boldface point $\mathbf{u}\in \mathbf{V.}$
There are also defined natural embeddings $J^{r}(V,V^{\prime })\subset
\mathbf{J}^{r}(V,V^{\prime })\subset \mathbf{J}^{r}(\mathbf{V},\mathbf{V}%
^{\prime }),$ that can be parameterized by local coordinate and/or
N--adapted frame systems and integrable or non-integrable maps. In general,
any $\mathcal{X}$ induces a nonholonomic map $\mu \mathcal{X}:(\underbrace{%
TT\ldots T}_{r-\mbox{times}}\ \mathbf{V})_{\mathbf{u}}\rightarrow (%
\underbrace{T^{\prime }T^{\prime }\ldots T^{\prime }}_{r-\mbox{times}}\
\mathbf{V}^{\prime })_{\mathbf{u}^{\prime }}$ that splits into horizontal and
vertical components with $h,v,...\rightarrow h^{\prime },v^{\prime },...$

We can generalize the concept of jet\ prolongation of fibered manifold, see
Definition \ref{dprolj}, to cases with nonholonomic maps and to prime and
target nonholonomic manifolds. Let $\mathbf{p}:\mathbf{Q}\rightarrow \mathbf{%
V}$ be a fibered manifold when, in general, both $\mathbf{Q}$ and $\mathbf{V}
$ are with nontrivial N--connection structures.

\begin{definition}
A nonholonomic map $\mathbf{j}^{r}\mathbf{f}:\mathbf{V}\rightarrow \mathbf{J}%
^{r}(\mathbf{V},\mathbf{Q})$ is called a $r$-th jet prolongation of $\mathbf{%
f}:\mathbf{V}\rightarrow \mathbf{Q}.$ The set $\mathbf{J}^{r}\mathbf{Q}$ of
all $r$--jets of the local sections of $\mathbf{Q}$ is called the $r$--th
jet prolongation of $\mathbf{Q}$ and $\mathbf{J}^{r}\mathbf{Q}\subset
\mathbf{J}^{r}(\mathbf{V},\mathbf{Q})$ is a closed submanifold.
\end{definition}

We note that if $\mathbf{Q}\rightarrow \mathbf{V}$ is a distinguished vector
bundle with nonholonomic base and nonholonomic total spaces, then $\mathbf{J}%
^{r}\mathbf{Q}$ is a also a distinguished vector bundle.

\subsubsection{Local expressions and $h$- $v$-coordinates}

In order to construct exact solutions in explicit forms by using geometric
methods, it is important to use certain local coordinate and N--adapted
constructions even if geometric models are (locally and/or globally)
intrinsically formulated. Let us establish the necessary conventions:\ We use $%
x=\{x^{i}\}$ as local coordinates on a prime manifold $M,$ when $%
i,j,...=1,2,...n.$ We use $y=\{y^{a}\}$ as local coordinates on a target
manifold $Q,$ when $a=n+1,...n+m.$ On $J^{r}(M,Q),$ our local
coordinates are $x^{i},y^{a}$ and the induced coordinates $v_{i_{1}...i_{p}}^{a}$
are symmetric on the low indices $i_{1,}i_{2},...i_{p}=1,...,n,$ for $p=1,...,r.$
Working with nonholonomic jet spaces $\mathbf{J}^{r}(M,Q)$ for the same
prime and target manifolds we use boldface induced coordinates $\mathbf{v}%
_{i_{1}...i_{p}}^{a}$ which are not symmetric on $i_{1,}i_{2},...i_{p}.$ We
can consider corresponding coordinate systems with the same coordinate
description for any $J^{r}Y,\mathbf{J}^{r}Y$ or $\mathbf{J}^{r}\mathbf{Y}.$

Let us introduce parameterizations for indices and coordinates of N--adapted
maps $\mathbf{V\rightarrow V}^{\prime },$ when $\mathbf{u=(x,y)}=\{u^{\alpha
}=(x^{i},y^{a})\}$ are local coordinates on $\mathbf{V}$ and $\mathbf{u}%
^{\prime }\mathbf{=(x}^{\prime }\mathbf{,y}^{\prime }\mathbf{)}=\{u^{\alpha
^{\prime }}=(x^{i^{\prime }},y^{a^{\prime }})\}$ are local coordinates on $%
\mathbf{V}^{\prime }\mathbf{.}$ We write $\partial _{\check{\alpha}}f:=\frac{%
\partial ^{|\check{\alpha}|}f}{(\partial u^{1})^{\alpha _{1}}...(\partial
u^{n+m})^{\alpha _{n+m}}}$ for the partial derivative of a function $f:%
\mathbf{U}\subset \mathbb{R}^{n+m}\rightarrow \mathbb{R},$ with a \textbf{%
multi-index} $\check{\alpha}$ of range $n+m,$ which is a $(n+m)$--tuple $%
\check{\alpha}=(\alpha _{1},...,\alpha _{n+m})$ of non-negative integers.
For such nonholonomic spaces, we write $|\check{\alpha}|=\alpha
_{1}+...+\alpha _{n+m},$ with $\check{\alpha}!=\check{\alpha}_{1}!\check{%
\alpha}_{2}!...\check{\alpha}_{n+m}!,$ $0!=1,$ and $u^{\check{\alpha}%
}=(u^{1})^{_{1}}...(u^{n+m})^{\alpha _{n+m}}$ for $%
u=(x^{1},x^{2},...,x^{n};y^{n+1},...,y^{n+m})\in \mathbb{R}^{n+m}.$

The local coordinate system is conventionally split on both nonholonomic
manifolds. Two N--adapted maps $^{1}\mathbf{f:V\rightarrow V}^{\prime }$ and
$^{2}\mathbf{f:V\rightarrow V}^{\prime }$ satisfy $\mathbf{j}_{\mathbf{u}%
}^{r}\mathbf{\ }^{1}\mathbf{f}=j_{\mathbf{u}}^{r}$\textbf{\ }$^{2}\mathbf{f}$
if and only if,  for the curves $u^{\alpha }=\zeta ^{\alpha }t$ with arbitrary
$\zeta ^{\alpha }$,
\begin{equation*}
\sum\limits_{|\check{\alpha}|=k}(\partial _{\check{\alpha}}\mathbf{\ }^{1}%
\mathbf{f}^{\alpha ^{\prime }}(u))\zeta ^{\check{\alpha}}=\sum\limits_{|%
\check{\alpha}|=k}(\partial _{\check{\alpha}}\mathbf{\ }^{2}\mathbf{f}%
^{\alpha ^{\prime }}(u))\zeta ^{\check{\alpha}},\mbox{ for }k=0,1,...,r.
\end{equation*}

We  define jet distinguished groups, d--groups,  with elements $%
L_{n+m,n^{\prime }+m^{\prime }}^{r}$ identified with the $r$-th order Taylor
expansions of the generating maps. These are $n+m$-tuples of polynomials of
degree $r$ in $n^{\prime }+m^{\prime }$ variables without absolute term,
with a polynomial representative of a $r$--jet which can written in the
form $\sum\limits_{1\leq |\check{\alpha}|\leq r}\zeta _{\check{\alpha}%
}^{\alpha ^{\prime }}u^{\check{\alpha}}.$ $L_{n+m,n^{\prime }+m^{\prime
}}^{r}$ in a numerical space of the variables $\zeta _{\check{\alpha}%
}^{\alpha ^{\prime }}.$ A standard combinatoric calculus gives
\begin{equation*}
\dim L_{n+m,n^{\prime }+m^{\prime }}^{r}=(n^{\prime }+m^{\prime })\left[
\left(
\begin{array}{c}
n+m+r \\
n+m%
\end{array}%
\right) -1\right] .
\end{equation*}%
In explicit form, the coordinates on $L_{n+m,n^{\prime }+m^{\prime }}^{r}$
are denoted by $\zeta _{\check{\alpha}}^{\alpha ^{\prime }},\zeta _{\check{%
\alpha}\check{\beta}}^{\alpha ^{\prime }},...,\zeta _{\check{\alpha}_{1}...%
\check{\alpha}_{r}}^{\alpha ^{\prime }}$ which are symmetric in all
subscripts if such values are taken in natural coordinate frames. The
projection $\pi _{s}^{r}:$ $L_{m,n}^{r}\rightarrow L_{m,n}^{s}$ consists in
suppressing all terms of degree $>s.$

The set of all invertible elements of $L_{n+m,n^{\prime }+m^{\prime }}^{r}$
with the jet composition is a Lie d--group $G_{n+m}^{r}$ called the $r$-th
differential d--group of the $r$-th jet d--group in dimension $n+m.$ For $%
r=1,$ the group $G_{n+m}^{1}$ is identified with a nonholonomic group
decomposition $GL(n+m,\mathbb{R})\rightarrow GL(n,\mathbb{R})\oplus GL(m,%
\mathbb{R})$ corresponding to a horizontal (h) and vertical (v) splitting (\ref{whitney}).

In this work, we study nonholonomic jet prolongations of the geometric
objects from section \ref{ssnm} in $\mathbf{J}^{r}(\mathbf{V},\mathbf{V}%
^{\prime })$--framework with local coordinates
\begin{equation}
u^{\alpha _{s}}=(x^{i},y^{a},\zeta _{\check{\alpha}_{1}...\check{\alpha}%
_{r}}^{\alpha ^{\prime }})=(x^{i},y^{a},\zeta ^{a_{s}}).  \label{standjcoord}
\end{equation}%
We use the label $s$ in order to perform a conventional splitting of
dimensions, $\dim \ ^{s}V=4+2s=2+2+...+2\geq 4;s\geq 0$ for conventional
finite dimensional (pseudo) Riemannian space$\ ^{s}V.$ The jet coordinates $%
v_{\check{\alpha}_{1}...\check{\alpha}_{r}}^{\alpha ^{\prime }}$ are
re--grouped in oriented two shells \footnote{%
In a similar form, we can split odd dimensions, for instance, $\dim
V=3+2+...+2.$} which allows us to apply the AFDM and to construct exact
solutions for generalized Einstein equations and metrics $\ ^{s}\mathbf{g}$
with arbitrary signatures $(\pm 1,\pm 1,\pm 1,...\pm 1).$ Such shells are
determined by nonholonomic data which transforms into $\zeta _{\check{\alpha}%
_{1}...\check{\alpha}_{r}}^{\alpha ^{\prime }}$ with symmetric lower indices
if the constructions are performed in coordinate bases. Let us establish
conventions on (abstract) indices and coordinates $u^{\alpha
_{s}}=(x^{i_{s}},y^{a_{s}}),$ for $s=0,1,2,3,....$ labellings of the oriented
number of two dimensional, 2-d, "shells" added to a 4--d spacetime. For $s=0$
(in a conventional form), we write $u^{\alpha }=(x^{i},y^{a})$ and consider
the following local systems of coordinates: {\small
\begin{eqnarray}
s &=&1:u^{\alpha _{1}}=(x^{\alpha }=u^{\alpha
},v^{a_{1}})=(x^{i},y^{a},\zeta ^{a_{1}}),  \label{jcoord} \\
\ s &=&2:u^{\alpha _{2}}=(x^{\alpha _{1}}=u^{\alpha
_{1}},v^{a_{2}})=(x^{i},y^{a},\zeta ^{a_{1}},\zeta ^{a_{2}}),  \notag \\
\ s &=&3:u^{\alpha _{3}}=(x^{\alpha _{2}}=u^{\alpha
_{2}},v^{a_{3}})=(x^{i},y^{a},\zeta ^{a_{1}},\zeta ^{a_{2}},\zeta
^{a_{3}}),...  \notag
\end{eqnarray}%
} for $i,j,...=1,2;a,b,...=3,4;a_{1},b_{1}...=5,6;a_{2},b_{2}...=7,8;$ $%
a_{3},b_{3}...=9,10,...$ and $i_{1},j_{1},...=1,2,3,4;i_{2},$ $j_{2},...$ $%
=1,2,3,4,5,6;\ i_{3},j_{3},...=1,2,3,4,5,6,7,8;...$ In compact notation, we  write
$u=(x,y);$ $\ ^{1}u=(u,\ ^{1}\zeta )=(x,y,\ ^{1}\zeta ),\ ^{2}u=(\ ^{1}u,\
^{2}\zeta )=(x,y,\ ^{1}\zeta ,\ ^{2}\zeta ),...$

We underline the indices in order to emphasize that certain values are
with respect to local coordinate bases. The transformations between local
frames, $e_{\alpha _{s}},$ and coordinate frames, $\partial _{\underline{%
\alpha }_{s}}=\partial /\partial u^{\underline{\alpha }_{s}}$ on $\ ^{s}V$
are written as $\ e_{\alpha _{s}}=e_{\ \alpha _{s}}^{\underline{\alpha }_{s}}(\
^{s}u)\partial /\partial u^{\underline{\alpha }_{s}}.$ General
parameterizations of coefficients $e_{\ \alpha _{s}}^{\underline{\alpha }%
_{s}}$ give nonholonomy relations $e_{\alpha _{s}}e_{\beta
_{s}}-e_{\beta _{s}}e_{\alpha _{s}}=W_{\alpha _{s}\beta _{s}}^{\gamma
_{s}}e_{\gamma _{s}}.$ The nonholonomy coefficients $W_{\alpha _{s}\beta
_{s}}^{\gamma _{s}}=W_{\beta _{s}\alpha _{s}}^{\gamma _{s}}(u)$ vanish for
holonomic configurations. Using the condition $e^{\alpha _{s}}\rfloor
e_{\beta _{s}}=\delta _{\beta _{s}}^{\alpha _{s}},$ where the 'hook'
operator $\rfloor $ corresponds to the inner derivative and $\delta _{\beta
_{s}}^{\alpha _{s}}$ is the Kronecker symbol, we  construct dual frames, $%
e^{\alpha _{s}}=e_{\ \underline{\alpha }_{s}}^{\ \alpha _{s}}(\ ^{s}u)du^{%
\underline{\alpha }_{s}}.$

It is important to distinguish the partial derivatives on spacetime
coordinates (for instance, $\partial _{i}=\partial /\partial x^{i},\partial
_{a}=\partial /\partial y^{a}$ and $\partial _{\alpha }=\partial /\partial
u^{\alpha })$ and on $r$--jet variables, when $\eth _{a_{s}}=\partial /\zeta
^{a_{s}}$ is used for a $2+2+...$conventinal splitting of partial
derivatives $\partial /\partial \zeta _{\check{\alpha}_{1}...\check{\alpha}%
_{r}}^{\alpha ^{\prime }}.$ In some sense, $\zeta _{\check{\alpha}_{1}...%
\check{\alpha}_{r}}^{\alpha ^{\prime }}$ can be considered as extra
dimension coordinates but with certain additional Lie d--group properties of
$G_{n+m}^{r}$ considered above.

\subsection{Jet prolongation of Ricci soliton and Einstein equations}

We can define canonical N--connection, frame, metric and distinguished
metric sturcures on $\mathbf{J}^{r}(\mathbf{V},\mathbf{V}^{\prime })$
determined by prolongations of respective prime objects on $\mathbf{V,}$ see
Definition \ref{dprolj}.

\subsubsection{Shell parameterized N--connection and associated frame
structures}

A map $j^{r}f:M\rightarrow J^{r}(M,Q)$ is called a $r$-th jet prolongation
of $f:M\rightarrow Q.$ The set $J^{r}Q$ of all $r$--jets of the local
sections of $Y$ is called the $r$--th jet prolongation of $Q$ and $%
J^{r}Q\subset J^{r}(M,Q)$ is a closed submanifold.

\begin{theorem}
Any N--connection structure $\mathbf{N}$ on $\mathbf{V}$ determines a $r$%
--th jet prolongation of N--connection $\ ^{s}\mathbf{N}$ on $\mathbf{J}^{r}(%
\mathbf{V},\mathbf{V}^{\prime })$ as the  Whitney sum
\begin{equation}
\ ^{s}\mathbf{N}:T\ ^{s}\mathbf{V}=h\mathbf{V}\oplus v\mathbf{V}\oplus \
^{1}v\mathbf{V}\oplus \ ^{2}v\mathbf{V}\oplus ...\oplus \ ^{s}v\mathbf{V},
\label{ncshell}
\end{equation}%
for a conventional horizontal (h) and vertical (v) "shell by shell"
splitting.
\end{theorem}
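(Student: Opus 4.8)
The statement asserts that a single N--connection $\mathbf{N}$ on $\mathbf{V}$ canonically lifts to an N--connection $\ ^{s}\mathbf{N}$ on the nonholonomic jet prolongation $\mathbf{J}^{r}(\mathbf{V},\mathbf{V}^{\prime})$, realized as the iterated Whitney sum (\ref{ncshell}). The plan is to produce this splitting constructively, shell by shell, and then verify that the resulting distribution is genuinely an N--connection in the sense of (\ref{whitney}), i.e. that it furnishes a smooth complement to the vertical distribution and is compatible with the tower of projections $\pi^{r}_{s}$. Concretely, I would first recall that $T\mathbf{V}=h\mathbf{V}\oplus v\mathbf{V}$ with the local frame (\ref{dderdif}) given by $\mathbf{e}_i=\partial_i-N_i^b\partial_a$, $e_a=\partial_a$, and then analyze $T\,^{s}\mathbf{V}$ in the adapted jet coordinates (\ref{jcoord}), $\ ^{s}u=(x^{i},y^{a},\ ^{1}\zeta,\dots,\ ^{s}\zeta)$.

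The main step is the inductive construction. At shell $s=0$ the claim is just the given $\mathbf{N}$. Suppose $\ ^{s-1}\mathbf{N}$ on $\mathbf{J}^{r-1}$ has been built, yielding $T\,^{s-1}\mathbf{V}=h\mathbf{V}\oplus v\mathbf{V}\oplus\,^{1}v\mathbf{V}\oplus\cdots\oplus\,^{s-1}v\mathbf{V}$. The prolongation $\mathbf{J}^{r}(\mathbf{V},\mathbf{V}^{\prime})\to\mathbf{J}^{r-1}(\mathbf{V},\mathbf{V}^{\prime})$ is (by the nonholonomic analogue of Definition \ref{dprolj}, noting the remark that the prolongation of a distinguished vector bundle is again a distinguished vector bundle) a d--vector bundle; its fiber coordinates are the new jet variables $\ ^{s}\zeta=\zeta^{a_s}$. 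I would then define $\ ^{s}v\mathbf{V}:=\ker d\pi^{r}_{r-1}$, spanned locally by $\eth_{a_s}=\partial/\partial\zeta^{a_s}$, and prolong the N--connection coefficients along this new shell exactly as $\mathbf{N}$ acts on $v\mathbf{V}$ — i.e. the lifted coefficients are again governed by $N_i^b$ (and their jet-prolonged counterparts), producing N--elongated frame vectors $\mathbf{e}_{i}=\partial_i-N_i^b\partial_a-\cdots$ that absorb each new vertical shell. Taking the Whitney sum of the inherited $\ ^{s-1}\mathbf{N}$-splitting with this new summand $\ ^{s}v\mathbf{V}$ gives (\ref{ncshell}). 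Smoothness and the Whitney-sum (fiberwise direct sum, globally complementary) property follow because each summand is a smooth subbundle and, by construction, the horizontal lift is linear in the (prolonged) N--connection coefficients — precisely the ``canonical'' frame structure singled out before Theorem \ref{tdecoupling}.

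Finally I would check consistency of the whole tower: the projections $\pi^{r}_{s}$ intertwine $\ ^{r}\mathbf{N}$ with $\ ^{s}\mathbf{N}$ (this is immediate since $\pi^{r}_{s}$ suppresses the higher-shell fiber coordinates and hence maps the higher Whitney sum onto the lower one), and the associated coframe (\ref{dderdif}), extended shell by shell via $\mathbf{e}^{b}=dy^{b}+N_i^b dx^i$ and its jet-prolonged analogues on each $\ ^{s}v\mathbf{V}$, is dual to the constructed frame. The nonholonomy coefficients $W^{\gamma_s}_{\alpha_s\beta_s}$ of this frame are computed from the brackets and, as in the $s=0$ case, are expressed entirely through the N--connection data, confirming we have an honest nonholonomic distribution rather than an integrable one.

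I expect the only real obstacle to be bookkeeping: making the inductive prolongation of the coefficients $N_i^a$ onto each successive shell precise in the nonholonomic (non-symmetric multi-index) setting, so that the splitting (\ref{ncshell}) is globally well defined and independent of the N--adapted chart. Everything else is a routine unwinding of the definitions of jet prolongation and Whitney sum.
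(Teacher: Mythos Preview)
Your approach is correct and aligned with the paper's, though far more elaborate: the paper's proof is a single sentence, simply declaring the construction ``natural'' and writing the prolonged N--connection in chart form as $\ ^{s}\mathbf{N}=N_{i_{s}}^{a_{s}}(\ ^{s}u)\,dx^{i_{s}}\otimes \partial/\partial\zeta^{a_{s}}$ on each chart of $\mathbf{J}^{r}(\mathbf{V},\mathbf{V}')$. Your inductive shell-by-shell construction, identification of $\ ^{s}v\mathbf{V}=\ker d\pi^{r}_{r-1}$, and compatibility checks with the tower $\pi^{r}_{s}$ are all sound elaborations of that same idea; the paper omits them entirely.

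One minor caution: your wording suggests the lifted coefficients on higher shells are \emph{canonically determined} by the original $N_i^{a}$ via jet prolongation, whereas the paper treats the shell coefficients $N_{i_{s}}^{a_{s}}(\ ^{s}u)$ as freely specified data (cf.\ the subsequent Corollary and the ansatz (\ref{ansk})). The theorem's verb ``determines'' is used loosely in the paper --- it means only that an N--connection on $\mathbf{V}$ \emph{induces the structure} of a shell-wise Whitney splitting on the jet space, not that it fixes the higher $N_{i_{s}}^{a_{s}}$ uniquely. Your inductive argument still works under this reading; just avoid overclaiming uniqueness of the prolongation.
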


\begin{proof}
It is a natural construction when the coefficients of N--connection are
defined by jet prolongations and parameterized as $\ ^{s}\mathbf{N}%
=N_{i_{s}}^{a_{s}}(\ ^{s}u)dx^{i_{s}}\otimes \partial /\partial \zeta
^{a_{s}}$ on every chart on $\mathbf{J}^{r}(\mathbf{V},\mathbf{V}^{\prime
}), $ i.e. for $\ ^{s}\mathbf{V}.$

$\square $ (end proof).
\end{proof}

\vskip5pt

Using the coefficients of N--connection, we prove the following.

\begin{corollary}
$r$--th jet prolongations induce on $\mathbf{J}^{r}(\mathbf{V},\mathbf{V}%
^{\prime })$ a system of N-elongated bases/ partial derivatives, $\mathbf{e}%
_{\nu _{s}}=(\mathbf{e}_{i_{s}},e_{a_{s}}),$ and cobases, N--adapted
differentials, $\mathbf{e}^{\mu _{s}}=(e^{i_{s}},\mathbf{e}^{a_{s}}).$
\end{corollary}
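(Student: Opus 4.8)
The plan is to write the frame and coframe explicitly on each N--adapted coordinate chart of $\mathbf{J}^{r}(\mathbf{V},\mathbf{V}^{\prime})$, iterating the $2+2$ construction (\ref{dderdif}) "shell by shell" with the prolonged N--connection coefficients $N_{i_{s}}^{a_{s}}(\ ^{s}u)$ produced by the preceding Theorem, and then to check that these local formulas are compatible under the admissible changes of coordinates of the form (\ref{jcoord}), so that they glue into global (co)frames adapted to the Whitney sum (\ref{ncshell}).

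First I would fix the chart (\ref{standjcoord}) and set, for each shell $s\geq 0$,
\[
\mathbf{e}_{i_{s}}=\partial _{i_{s}}-N_{i_{s}}^{a_{s}}(\ ^{s}u)\,\eth _{a_{s}},\qquad e_{a_{s}}=\eth _{a_{s}}=\partial /\partial \zeta ^{a_{s}},
\]
and dually
\[
e^{i_{s}}=dx^{i_{s}},\qquad \mathbf{e}^{a_{s}}=d\zeta ^{a_{s}}+N_{i_{s}}^{a_{s}}(\ ^{s}u)\,dx^{i_{s}},
\]
with the index $i_{s}$ ranging over the horizontal part together with all previously attached lower shells, as prescribed in (\ref{jcoord}); in this way the elongation of $\mathbf{e}_{i_{s}}$ automatically collects the correction terms coming from shells $1,\dots ,s$. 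A direct computation with the hook operator then gives $e^{\mu _{s}}\rfloor e_{\nu _{s}}=\delta _{\nu _{s}}^{\mu _{s}}$, so that $\mathbf{e}_{\nu _{s}}=(\mathbf{e}_{i_{s}},e_{a_{s}})$ and $\mathbf{e}^{\mu _{s}}=(e^{i_{s}},\mathbf{e}^{a_{s}})$ are mutually dual, and the spans $\{\mathbf{e}_{i}\},\{e_{a}\},\{e_{a_{1}}\},\dots ,\{e_{a_{s}}\}$ reproduce exactly the summands of (\ref{ncshell}).

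Next I would establish global well--definedness: under a change of N--adapted coordinates of the triangular form (\ref{jcoord}) --- $x^{i}\to x^{i^{\prime }}(x)$, $y^{a}\to y^{a^{\prime }}(x,y)$, $\zeta ^{a_{s}}\to \zeta ^{a_{s}^{\prime }}(\ ^{s}u)$ fibred over the lower shells --- the coefficients $N_{i_{s}}^{a_{s}}$ transform as the components of an (Ehresmann) N--connection on the corresponding shell, which is precisely the condition making $\partial _{i_{s}}-N_{i_{s}}^{a_{s}}\eth _{a_{s}}$ and $d\zeta ^{a_{s}}+N_{i_{s}}^{a_{s}}dx^{i_{s}}$ transform tensorially within each shell. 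Hence the local frames agree on overlaps and define global objects $\mathbf{e}_{\nu _{s}},e^{\mu _{s}}$. I would close the argument by recording that in general these bases are anholonomic, $\mathbf{e}_{\alpha _{s}}\mathbf{e}_{\beta _{s}}-\mathbf{e}_{\beta _{s}}\mathbf{e}_{\alpha _{s}}=W_{\alpha _{s}\beta _{s}}^{\gamma _{s}}\mathbf{e}_{\gamma _{s}}$, with $W_{\alpha _{s}\beta _{s}}^{\gamma _{s}}$ expressed through $\eth _{a_{s}}N_{i_{s}}^{b_{s}}$ and $\partial _{i_{s}}N_{j_{s}}^{a_{s}}$, so the induced structure is genuinely N--adapted and not reducible to a holonomic one.

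The main obstacle is bookkeeping rather than conceptual depth: one must be careful about the nested ranges of the shell indices $i_{s}$ (each shell's horizontal index absorbs the previous shells' coordinates), so that the Whitney decomposition (\ref{ncshell}) is respected block by block and each cross--shell correction term lands in the correct vertical summand $\ ^{k}v\mathbf{V}$. Once the index conventions of (\ref{jcoord})--(\ref{standjcoord}) are fixed, the verifications of duality, of the transformation rule, and of the nonholonomy relations are straightforward iterations of the $s=0$ case (\ref{dderdif}).
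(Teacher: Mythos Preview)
Your proposal is correct and follows essentially the same approach as the paper: the paper's proof consists only of writing down the explicit shell formulas $\mathbf{e}_{i_{s}}=\partial _{i_{s}}-N_{i_{s}}^{a_{s}}\eth _{a_{s}}$, $e_{a_{s}}=\eth _{a_{s}}$, $e^{i_{s}}=dx^{i_{s}}$, $\mathbf{e}^{a_{s}}=d\zeta ^{a_{s}}+N_{i_{s}}^{a_{s}}dx^{i_{s}}$, which is precisely what you do in your first substantive step. Your additional checks of duality, global well--definedness under shell--fibred coordinate changes, and the anholonomy relations go beyond what the paper records inside the proof (the paper states the nonholonomy relations only as a remark after the corollary), so your argument is more complete but not different in spirit.
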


\begin{proof}
Taking (\ref{dderdif}) for $\mathbf{V,}$ we prolongate on $s\geq 1$ shells,
\begin{eqnarray}
\mathbf{e}_{i_{s}} &=&\frac{\partial }{\partial x^{i_{s}}}-\
N_{i_{s}}^{a_{s}}\eth _{a_{s}},\ e_{a_{s}}=\eth _{a_{s}}=\frac{\partial }{%
\partial \zeta ^{a_{s}}},  \label{naders} \\
e^{i_{s}} &=&dx^{i_{s}},\mathbf{e}^{a_{s}}=d\zeta ^{a_{s}}+\
N_{i_{s}}^{a_{s}}dx^{i_{s}}.  \label{nadifs}
\end{eqnarray}%
$\square $
\end{proof}

\vskip5pt

The N--adapted operators (\ref{naders}) satisfy  nonholonomy relations:
\begin{equation}
\lbrack \mathbf{e}_{\alpha _{s}},\mathbf{e}_{\beta _{s}}]=\mathbf{e}_{\alpha
_{s}}\mathbf{e}_{\beta _{s}}-\mathbf{e}_{\beta _{s}}\mathbf{e}_{\alpha
_{s}}=W_{\alpha _{s}\beta _{s}}^{\gamma _{s}}\mathbf{e}_{\gamma _{s}},
\label{anhrel1}
\end{equation}%
when $W_{i_{s}a_{s}}^{b_{s}}=\partial _{a_{s}}N_{i_{s}}^{b_{s}}$ and $%
W_{j_{s}i_{s}}^{a_{s}}=\ ^{J}N_{i_{s}j_{s}}^{a_{s}},$ where the Neijenhuis
tensor, i.e. the curvature of the $r$--th jet prolongation of N--connection,
is $\ ^{J}N_{i_{s}j_{s}}^{a_{s}}=\mathbf{e}_{j_{s}}\left(
N_{i_{s}}^{a_{s}}\right) -\mathbf{e}_{i_{s}}\left( N_{j_{s}}^{a_{s}}\right)$.

\subsubsection{N--adapted shell prolongation of d--connections}

On $\mathbf{J}^{r}(\mathbf{V},\mathbf{V}^{\prime })$ with prolongation of
geometric objects from $\mathbf{V,}$ we define linear connection structures
in N--adapted form in the following.

\begin{theorem}
\textbf{--Definition: } There are distinguished connection, d--connection,
structures, $\ ^{s}\mathbf{D=\{D}_{\alpha _{s}}\},$ with $\mathbf{D}%
=(hD;vD),\ ^{1}\mathbf{D=}(\ ^{1}hD;\ ^{1}vD),...,\ ^{s-1}\mathbf{D=}(\
^{s-2}hD;\ ^{s-1}vD),\ ^{s}\mathbf{D=}(\ ^{s-1}hD;\ ^{s}vD)$, preserving
under parallelism the N--connection splitting (\ref{ncshell}) and
coefficients computed with respect to N--adapted bases (\ref{naders}) and (%
\ref{nadifs}).
\end{theorem}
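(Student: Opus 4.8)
The plan is to mimic, shell by shell, the classical construction of the canonical d--connection on a nonholonomic manifold $(\mathbf{V},\mathbf{N})$ recalled in (\ref{doublecon}), but now carried out on the $r$--jet prolongation $\mathbf{J}^{r}(\mathbf{V},\mathbf{V}^{\prime})$ equipped with the prolonged N--connection $\ ^{s}\mathbf{N}$ of (\ref{ncshell}) and the N--adapted frames (\ref{naders})--(\ref{nadifs}). First I would fix a d--metric $\ ^{s}\mathbf{g}=hg\oplus vg\oplus\ ^{1}vg\oplus\cdots\oplus\ ^{s}vg$ adapted to the splitting (\ref{ncshell}) and then define a candidate d--connection $\ ^{s}\mathbf{D}$ by writing down, on each chart with coordinates (\ref{jcoord}), explicit coefficient formulas of the usual canonical type: the $h$-components built from the $\mathbf{e}_{i_{s}}$-derivatives of $g_{i_{s}j_{s}}$ and the consecutive $v$-shell components built from the $\eth_{a_{s}}=\partial/\partial\zeta^{a_{s}}$-derivatives of $g_{a_{s}b_{s}}$, together with mixed terms involving the N--connection coefficients $N_{i_{s}}^{a_{s}}$ and the nonholonomy coefficients $W_{\alpha_{s}\beta_{s}}^{\gamma_{s}}$ computed in (\ref{anhrel1}). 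The point of the shell labelling $(hD;vD),\ (\ ^{1}hD;\ ^{1}vD),\ldots$ is that these formulas are purely algebraic in each block, so the construction is literally the $s=0$ one iterated, with the previous shell playing the role of the ``base'' for the next.

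The second step is to verify the two defining properties: (i) that $\ ^{s}\mathbf{D}$ preserves the Whitney decomposition (\ref{ncshell}) under parallel transport, i.e. $\ ^{s}\mathbf{D}$ maps $h$-sections to $h$-valued forms and each $\ ^{k}v$-section to $\ ^{k}v$-valued forms — this is immediate from the block structure of the chosen coefficients; and (ii) that the coefficients transform as a linear connection under the frame changes $e_{\alpha_{s}}=e_{\ \alpha_{s}}^{\underline{\alpha}_{s}}(\ ^{s}u)\,\partial/\partial u^{\underline{\alpha}_{s}}$ introduced above, so that the local pieces glue into a global d--connection on $\mathbf{J}^{r}(\mathbf{V},\mathbf{V}^{\prime})$. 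For (ii) I would check the inhomogeneous term in the transformation law of the $\Gamma$-coefficients separately in the $h$-block and in each $v$-shell block; because the prolonged N--connection $\ ^{s}\mathbf{N}$ is itself defined by jet prolongation (chart-compatibly, by the first Theorem of this subsection) and the N--adapted coframes (\ref{nadifs}) transform correctly, the hook condition $e^{\alpha_{s}}\rfloor e_{\beta_{s}}=\delta_{\beta_{s}}^{\alpha_{s}}$ guarantees that the connection 1-forms $\mathbf{\Gamma}_{\ \alpha_{s}}^{\gamma_{s}}=\mathbf{\Gamma}_{\ \alpha_{s}\beta_{s}}^{\gamma_{s}}\mathbf{e}^{\beta_{s}}$ patch. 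Uniqueness (if one wants the ``canonical'' epithet and not merely ``a'') follows as in the $s=0$ case by imposing, shell by shell, metric compatibility $\ ^{s}\mathbf{D}\,^{s}\mathbf{g}=0$ together with the vanishing of the pure-$h$ and pure-$v$ torsion components $h\widehat{\mathbf{T}}=0$, $\ ^{k}v\widehat{\mathbf{T}}=0$ for each $k$, which pins down every coefficient.

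I would organize the write-up as an induction on the shell index $s$: the base case $s=0$ is exactly (\ref{doublecon}); assuming the statement on $\mathbf{J}^{r-1}$-type data (equivalently on the $(s-1)$-shell space $\ ^{s-1}V$), one adjoins the new $\ ^{s}v\mathbf{V}$ block, treats $\ ^{s-1}V$ as the base of the fibration $\mathbf{p}:\ ^{s}\mathbf{Q}\to\ ^{s-1}\mathbf{V}$, and repeats the canonical construction one more time; the compatibility of the new block with the old ones is where the nonholonomy coefficients $W_{i_{s}a_{s}}^{b_{s}}=\partial_{a_{s}}N_{i_{s}}^{b_{s}}$ and $W_{j_{s}i_{s}}^{a_{s}}=\ ^{J}N_{i_{s}j_{s}}^{a_{s}}$ enter, exactly as in (\ref{anhrel1}). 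The routine coefficient formulas themselves I would relegate to the Appendix (as the paper announces it does for technical details).

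The main obstacle I expect is purely bookkeeping rather than conceptual: keeping the multi-shell index conventions (\ref{standjcoord})--(\ref{jcoord}) consistent while checking that the mixed derivative terms — those coupling the $\mathbf{e}_{i_{s}}$ and the $\eth_{a_{s}}$ across different shells — do not spoil either the N--connection-preserving property or metricity. In particular one must make sure that the non-symmetry of the nonholonomic jet multi-indices $\mathbf{v}_{i_{1}\ldots i_{p}}^{a}$ (versus the symmetric holonomic $v_{i_{1}\ldots i_{p}}^{a}$) is harmless here, i.e. that the d--connection is defined intrinsically in terms of $\ ^{s}\mathbf{N}$ and $\ ^{s}\mathbf{g}$ and never uses the symmetrization; this is the one place where the ``nonholonomic'' qualifier in the Theorem is doing real work, and it is worth an explicit remark. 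Everything else reduces, block by block, to the already-established $s=0$ theory.
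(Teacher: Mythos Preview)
Your proposal is correct as far as it goes, but it proves considerably more than the Theorem--Definition asks for and in effect merges this statement with the later Theorem~\ref{tcandist}. The paper's own ``proof'' here is minimal: it simply declares that a d--connection $\ ^{s}\mathbf{D}$ is, by definition, any covariant derivative whose N--adapted coefficients split into the blocks
\[
\mathbf{\Gamma}_{\ \beta_{s}\gamma_{s}}^{\alpha_{s}} = (L_{\beta_{s-1}\gamma_{s-1}}^{\alpha_{s-1}},\ L_{b_{s}\gamma_{s-1}}^{a_{s}};\ C_{\beta_{s-1}c_{s}}^{\alpha_{s-1}},\ C_{b_{s}c_{s}}^{a_{s}}),
\]
computed via $\mathbf{D}_{\alpha_{s}}\mathbf{e}_{\beta_{s}}=\mathbf{\Gamma}_{\ \beta_{s}\gamma_{s}}^{\alpha_{s}}\mathbf{e}_{\gamma_{s}}$ with respect to the frames (\ref{naders})--(\ref{nadifs}). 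Existence is then trivial: any choice of such block coefficients gives a d--connection preserving the splitting (\ref{ncshell}), and no metric, no torsion condition, and no uniqueness claim enter at this stage.

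By contrast, you fix a d--metric $\ ^{s}\mathbf{g}$, build the \emph{canonical} d--connection from it, and check metric compatibility and vanishing pure-shell torsions. That is a legitimate way to exhibit one example, hence to prove existence, but it imports structure the statement does not require and duplicates the work of Theorem~\ref{tcandist} (and Appendix~\ref{prooftcands}). What your route buys is a concrete, geometrically distinguished instance; what the paper's route buys is the clean conceptual separation between ``d--connection'' as a class of objects (this Theorem--Definition) and the canonical representative singled out by $(\ ^{s}\mathbf{g},\ ^{s}\mathbf{N})$ (Theorem~\ref{tcandist}). If you keep your write-up, you should at least note that the metric and the torsion/compatibility constraints are extraneous to the present statement and are only being used to produce a witness.
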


\begin{proof}
We shall use, for instance, the term \textbf{Theorem--Definition} when a
geometric object is defined by an explicit construction which consists also of
a proof of a Theorem. In this case, we prove the existence of a
d--connection by considering N--adapted covariant derivatives {\small
\begin{eqnarray*}
\mathbf{D}_{\alpha } &=&(D_{i};D_{a}),\mathbf{D}_{\alpha _{1}}=(\
^{1}D_{\alpha };D_{a_{1}}),\ \mathbf{D}_{\alpha _{2}}=(\ ^{2}D_{\alpha
_{1}};D_{a_{2}}),...,\mathbf{D}_{\alpha _{s}}=(\ ^{s}D_{\alpha
_{s-1}};D_{a_{s}}), \mbox{ for } \\
hD &=&(L_{jk}^{i},L_{bk}^{a}),vD=(C_{jc}^{i},C_{bc}^{a}), \ ^{1}hD =
(L_{\beta \gamma }^{\alpha },L_{b_{1}\gamma }^{a_{1}}),\ ^{1}vD=(C_{\beta
c_{1}}^{\alpha },C_{b_{1}c_{1}}^{a_{1}}),\ \ ^{2}hD=(L_{\beta _{1}\gamma
_{1}}^{\alpha _{1}},L_{b_{2}\gamma _{1}}^{a_{2}}), \\
&& \ ^{2}vD=(C_{\beta _{1}c_{2}}^{\alpha _{1}},C_{b_{2}c_{2}}^{a_{2}}), ...,
\ ^{s}hD = (L_{\beta _{s-1}\gamma _{s-1}}^{\alpha _{s-1}},L_{b_{s}\gamma
_{s-1}}^{a_{s}}),\ ^{s}vD=(C_{\beta _{s-1}c_{s}}^{\alpha
_{s-1}},C_{b_{s}c_{s}}^{a_{s}}),
\end{eqnarray*}
} when the coefficients{%\small
\begin{eqnarray}
\mathbf{\Gamma }_{\ \beta \gamma }^{\alpha }
&=&(L_{jk}^{i},L_{bk}^{a};C_{jc}^{i},C_{bc}^{a}), \mathbf{\Gamma }_{\ \beta
_{1}\gamma _{1}}^{\alpha _{1}} =(L_{\beta \gamma }^{\alpha },L_{b_{1}\gamma
}^{a_{1}};C_{\beta c_{1}}^{\alpha },C_{b_{1}c_{1}}^{a_{1}}),\   \label{coefd}
\\
\mathbf{\Gamma }_{\ \beta _{2}\gamma _{2}}^{\alpha _{2}} &=& (L_{\beta
_{1}\gamma _{1}}^{\alpha _{1}},L_{b_{2}\gamma _{1}}^{a_{2}};C_{\beta
_{1}c_{2}}^{\alpha _{1}},C_{b_{2}c_{2}}^{a_{2}}),..., \mathbf{\Gamma }_{\
\beta _{s}\gamma _{s}}^{\alpha _{s}} = (L_{\beta _{s-1}\gamma
_{s-1}}^{\alpha _{s-1}},L_{b_{s}\gamma _{s-1}}^{a_{s}};C_{\beta
_{s-1}c_{s}}^{\alpha _{s-1}},C_{b_{s}c_{s}}^{a_{s}})  \notag
\end{eqnarray}%
} of such a d--connection $\ ^{s}\mathbf{D=\{D}_{\alpha _{s}}\mathbf{\}}$
are computed in N--adapted form with respect to the frames (\ref{naders})--(%
\ref{nadifs}) following equations $\mathbf{D}_{\alpha _{s}}\mathbf{e}_{\beta
_{s}}=\mathbf{\Gamma }_{\ \beta _{s}\gamma _{s}}^{\alpha _{s}}\mathbf{e}%
_{\gamma _{s}}.$

$\square $
\end{proof}

\vskip5pt

It is possible always to consider such frame transforms when all shell
frames are N-adapted and $\ ^{1}D_{\alpha }=\mathbf{D}_{\alpha },\
^{2}D_{\alpha _{1}}=\mathbf{D}_{\alpha _{1}},...\ ^{s}D_{\alpha _{s-1}}=%
\mathbf{D}_{\alpha _{s-1}}$.

\begin{corollary}
\textbf{--Definition:} There are natural $r$--th jet prolongations of the
torsion and curvature d--tensors (\ref{dtordcurv}) defined on a prime $%
\mathbf{V}$ and elongated in N--adapted form on $\mathbf{J}^{r}(\mathbf{V},%
\mathbf{V}^{\prime })$ with prescribed shell splitting on $\ ^{s}\mathbf{V,}$
\begin{eqnarray}
\ ^{s}\mathbf{T}(\mathbf{X,Y)} &:=&\ ^{s}\mathbf{D}_{\mathbf{X}}\mathbf{Y-}\
^{s}\mathbf{D}_{\mathbf{Y}}\mathbf{X}+[\mathbf{X,Y}]\mbox{ \ and \ }
\label{torsshell} \\
\ ^{s}\mathbf{R}(\mathbf{X,Y)} &:=&\ ^{s}\mathbf{D}_{\mathbf{X}}\ ^{s}%
\mathbf{D}_{\mathbf{Y}}-\ ^{s}\mathbf{D}_{\mathbf{Y}}\ ^{s}\mathbf{D}_{%
\mathbf{X}}\mathbf{-}\ ^{s}\mathbf{D}_{[\mathbf{X,Y}]},  \label{curvshell}
\end{eqnarray}%
for any d--vectors $\mathbf{X,Y\subset }T\ ^{s}\mathbf{V.}$
\end{corollary}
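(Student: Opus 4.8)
The plan is to treat this as a packaging result: by the preceding Theorem--Definition, $\ ^{s}\mathbf{D}$ is already a bona fide linear connection on $T\ ^{s}\mathbf{V}$ whose parallel transport preserves the shell splitting (\ref{ncshell}), so the universal expressions (\ref{dtordcurv}) apply verbatim to it and produce (\ref{torsshell}) and (\ref{curvshell}). First I would verify that these formulas are $\mathcal{C}^{\infty }(\ ^{s}\mathbf{V})$--multilinear in $\mathbf{X},\mathbf{Y}$ (and, for the curvature, in its third argument): as in the classical computation, whenever a coefficient function is differentiated, the resulting non--tensorial term is cancelled by the matching contribution of the Lie bracket $[\mathbf{X},\mathbf{Y}]$. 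This shows $\ ^{s}\mathbf{T}$ and $\ ^{s}\mathbf{R}$ are genuine tensor fields, independent of the chosen coordinate or N--adapted frame, and in particular invariant under the jet d--group $G_{n+m}^{r}$ acting on the fibres.

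Next I would establish the distinguished (N--adapted) character. Because $\ ^{s}\mathbf{D}$ sends $h$--vectors to $h$--vectors and shell--$k$ vertical vectors to shell--$k$ vertical vectors, and because the commutators of the N--elongated frame fields (\ref{naders}) are controlled by the nonholonomy coefficients (\ref{anhrel1}) with $W_{i_{s}a_{s}}^{b_{s}}=\partial _{a_{s}}N_{i_{s}}^{b_{s}}$ and $W_{j_{s}i_{s}}^{a_{s}}=\ ^{J}N_{i_{s}j_{s}}^{a_{s}}$ --- which are themselves $h$-- and $v$--adapted --- I would evaluate (\ref{torsshell}) and (\ref{curvshell}) on the frames $\mathbf{e}_{\alpha _{s}}$ and read off coefficients that split shell by shell. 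Using $\mathbf{D}_{\alpha _{s}}\mathbf{e}_{\beta _{s}}=\mathbf{\Gamma }_{\ \beta _{s}\gamma _{s}}^{\alpha _{s}}\mathbf{e}_{\gamma _{s}}$ with (\ref{coefd}), the nonzero d--torsion pieces on each shell come out as the iterated analogues of $T_{jk}^{i},T_{ja}^{i},T_{ij}^{a},T_{bi}^{a},T_{bc}^{a}$ assembled from the relevant $L$'s, $C$'s and the Neijenhuis tensor $\ ^{J}N_{i_{s}j_{s}}^{a_{s}}$, and similarly the d--curvature components $\ ^{s}\mathbf{R}_{\ \beta _{s}\gamma _{s}\mu _{s}}^{\alpha _{s}}$; I would relegate the explicit N--adapted coefficient formulas to the Appendix.

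Finally, I would check naturality by comparing with the projections $\pi _{s}^{r}$ and with the holonomic limit. Since the construction of $\ ^{s}\mathbf{D}$ can be arranged so that $\ ^{1}D_{\alpha }=\mathbf{D}_{\alpha },\ \ldots ,\ ^{s}D_{\alpha _{s-1}}=\mathbf{D}_{\alpha _{s-1}}$, restriction of (\ref{torsshell})--(\ref{curvshell}) to lower shells reproduces the torsion and curvature of the lower--order prolongations, and setting $W_{\alpha _{s}\beta _{s}}^{\gamma _{s}}=0$ recovers the ordinary prolongation of (\ref{dtordcurv}) on $J^{r}(M,Q)$. The step I expect to be the main obstacle is the bookkeeping of the nested $2+2+\ldots$ shell structure: one must confirm that no "cross--shell" torsion or curvature components appear beyond those forced by the N--connection curvature $\ ^{J}N_{i_{s}j_{s}}^{a_{s}}$, and this follows from the "shell by shell" form of (\ref{ncshell}) together with the block pattern of the coefficients (\ref{coefd}).
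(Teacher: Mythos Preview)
Your proposal is correct but takes a somewhat different route from the paper. You work directly with the operator definitions (\ref{torsshell})--(\ref{curvshell}), verify $\mathcal{C}^{\infty}$--multilinearity via the standard cancellation argument, and then evaluate on the N--adapted frames $\mathbf{e}_{\alpha_s}$ to extract the shell--by--shell coefficients, adding checks of naturality under the projections $\pi_s^r$ and of the absence of unexpected cross--shell components. The paper instead passes immediately to the Cartan differential--form formalism: it introduces the connection $1$--form $\mathbf{\Gamma}_{\ \beta_s}^{\alpha_s}=\mathbf{\Gamma}_{\ \beta_s\gamma_s}^{\alpha_s}\mathbf{e}^{\gamma_s}$ and defines the torsion and curvature as the structure--equation $2$--forms $\mathcal{T}^{\alpha_s}=d\mathbf{e}^{\alpha_s}+\mathbf{\Gamma}_{\ \beta_s}^{\alpha_s}\wedge\mathbf{e}^{\beta_s}$ and $\mathcal{R}_{\ \beta_s}^{\alpha_s}=d\mathbf{\Gamma}_{\ \beta_s}^{\alpha_s}-\mathbf{\Gamma}_{\ \beta_s}^{\gamma_s}\wedge\mathbf{\Gamma}_{\ \gamma_s}^{\alpha_s}$, referring to earlier work for the explicit coefficient formulae and noting that in the $2+2+\ldots$ shell coordinates these mirror the extra--dimensional (pseudo)~Riemannian case. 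Your approach makes the tensorial and N--adapted nature transparent from first principles; the paper's form calculus is shorter and feeds directly into the Appendix computations of $\mathbf{T}_{\ \beta_s\gamma_s}^{\alpha_s}$ and $\mathbf{R}_{\ \beta_s\gamma_s\delta_s}^{\alpha_s}$. The naturality and cross--shell discussion you add is not in the paper's proof, but it is consistent with the surrounding framework and does no harm.
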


\begin{proof}
To perform computations in N--adapted--shell form we  consider a
differential connection 1--form $\mathbf{\Gamma }_{\ \beta _{s}}^{\alpha
_{s}}=\mathbf{\Gamma }_{\ \beta _{s}\gamma _{s}}^{\alpha _{s}}\mathbf{e}%
^{\gamma _{s}}$ and elaborate a differential form calculus with respect to
skew symmetric tensor products of N--adapted frames (\ref{naders})--(\ref%
{nadifs}). Respectively, the torsion $\mathcal{T}^{\alpha _{s}}=\{\mathbf{T}%
_{\ \beta _{s}\gamma _{s}}^{\alpha _{s}}\}$ and curvature $\mathcal{R}%
_{~\beta _{s}}^{\alpha _{s}}=\{\mathbf{\mathbf{R}}_{\ \ \beta _{s}\gamma
_{s}\delta _{s}}^{\alpha _{s}}\}$ d--tensors of $\ \ ^{s}\mathbf{D}$ are
computed
\begin{eqnarray}
&&\mathcal{T}^{\alpha _{s}}:=\ ^{s}\mathbf{De}^{\alpha _{s}}=d\mathbf{e}%
^{\alpha _{s}}+\mathbf{\Gamma }_{\ \beta _{s}}^{\alpha _{s}}\wedge \mathbf{e}%
^{\beta _{s}},\   \label{dt} \\
&&\mathcal{R}_{~\beta _{s}}^{\alpha _{s}}:=\ ^{s}\mathbf{D\Gamma }_{\ \beta
_{s}}^{\alpha _{s}}=d\mathbf{\Gamma }_{\ \beta _{s}}^{\alpha _{s}}-\mathbf{%
\Gamma }_{\ \beta _{s}}^{\gamma _{s}}\wedge \mathbf{\Gamma }_{\ \gamma
_{s}}^{\alpha _{s}}=\mathbf{R}_{\ \beta _{s}\gamma _{s}\delta _{s}}^{\alpha
_{s}}\mathbf{e}^{\gamma _{s}}\wedge \mathbf{e}^{\delta _{s}},  \label{dc}
\end{eqnarray}%
see Refs. \cite{vex3,vexsol2} for explicit calculations of coefficients $\mathbf{R%
}_{\ \beta _{s}\gamma _{s}\delta _{s}}^{\alpha _{s}}$ in higher dimensions.
The formulae in the jet shell adapted coordinates (\ref{jcoord}) are very
similar to those in N--adapted bases for extra dimensional (pseudo)
Riemannian spaces. In standard $r$--jet coordinates (\ref{standjcoord}) for $%
\mathbf{J}^{r}(\mathbf{V},\mathbf{V}^{\prime }),$ $u^{\underline{\alpha }%
_{s}}=(x^{i},y^{a},\zeta _{\check{\alpha}_{1}...\check{\alpha}_{r}}^{\alpha
^{\prime }}),$ additional contraction of up-down indices and symmetrization
lead to very cumbersome coefficient formulae.

$\square $
\end{proof}

\vskip5pt

In Appendix \ref{sscoefcurv}, we present two Theorems on computing
N--adapted coefficients formulas for $\mathbf{T}_{\ \beta _{s}\gamma
_{s}}^{\alpha _{s}}$ and $\mathbf{\mathbf{R}}_{\ \ \beta _{s}\gamma
_{s}\delta _{s}}^{\alpha _{s}}.$

\subsubsection{Jet prolongation of d--metrics}

On $\ ^{s}\mathbf{V,}$ a metric tensor can be written in the form
\begin{equation}
\ ^{s}\mathbf{g} = g_{\alpha _{s}\beta _{s}}e^{\alpha _{s}}\otimes e^{\beta
_{s}}=g_{\underline{\alpha }_{s}\underline{\beta }_{s}}du^{\underline{\alpha
}_{s}}\otimes du^{\underline{\beta }_{s}} = g_{\underline{\alpha }\underline{%
\beta }}du^{\underline{\alpha }}\otimes du^{\underline{\beta }}+g_{%
\underline{\alpha }_{s+1}\underline{\beta }_{s+1}}d\zeta ^{\underline{\alpha
}_{s+1}}\otimes d\zeta ^{\underline{\beta }_{s+1}}, \ s=0,1,2,...,
\label{metr}
\end{equation}%
where $du^{\underline{\alpha }}\in T^{\ast }\mathbf{V}$ and the indices are
underlined in order to emphasize  coordinate dual bases. The
coefficients of such a metric are subject to frame transform rules, $%
g_{\alpha _{s}\beta _{s}}=e_{\ \alpha _{s}}^{\underline{\alpha }_{s}}e_{\
\beta _{s}}^{\underline{\beta }_{s}}g_{\underline{\alpha }_{s}\underline{%
\beta }_{s}}$, which can be respectively generalized for any tensor object.
We can not preserve a $2+2+2+...$ splitting of the dimensions under general
frame/coordinate transforms.

\begin{lemma}
Any metric structure $\ ^{s}\mathbf{g=\{g}_{\alpha _{s}\beta _{s}}\mathbf{\}}
$ on $\ ^{s}\mathbf{V}$ can be written as a distinguished metric (d--metric)
\begin{eqnarray}
\ \ ^{s}\mathbf{g} &=&\ g_{i_{s}j_{s}}(\ ^{s}u)\ e^{i_{s}}\otimes
e^{j_{s}}+\ g_{a_{s}b_{s}}(\ ^{s}u)\mathbf{e}^{a_{s}}\otimes \mathbf{e}%
^{b_{s}}  \label{dm} \\
&=&g_{ij}(x)\ e^{i}\otimes e^{j}+g_{ab}(u)\ \mathbf{e}^{a}\otimes \mathbf{e}%
^{b}+g_{a_{1}b_{1}}(\ ^{1}u)\ \mathbf{e}^{a_{1}}\otimes \mathbf{e}%
^{b_{1}}+....+\ g_{a_{s}b_{s}}(\ ^{s}u)\mathbf{e}^{a_{s}}\otimes \mathbf{e}%
^{b_{s}}.  \notag
\end{eqnarray}
\end{lemma}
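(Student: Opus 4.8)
The plan is to prove the statement by induction on the number $s$ of two--dimensional shells, at each step ``peeling off'' the top shell and reducing to a metric on $\ ^{s-1}\mathbf{V}$; the base case $s=0$ is the standard nonholonomic block--diagonalization of a metric with conventional $n+n$ splitting.

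For $s=0$ I would start from the coordinate representation $\mathbf{g}=g_{\underline{\alpha}\underline{\beta}}\,du^{\underline{\alpha}}\otimes du^{\underline{\beta}}$ with $u=(x^{i},y^{a})$, assume the vertical $2$--block $g_{\underline{a}\underline{b}}$ is nondegenerate (this can be arranged, on an open dense set, by a coordinate transform reshuffling which pair of coordinates plays the role of the fibre; generic off--diagonal metrics have this property), set $N_{i}^{a}:=g^{ab}g_{\underline{i}\underline{b}}$ with $g^{ab}$ the inverse of $g_{\underline{a}\underline{b}}$, and pass to the N--adapted coframe $e^{i}=dx^{i}$, $\mathbf{e}^{a}=dy^{a}+N_{i}^{a}dx^{i}$ of (\ref{dderdif}). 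Substituting $dy^{a}=\mathbf{e}^{a}-N_{i}^{a}dx^{i}$ and collecting terms, the mixed $dx\otimes\mathbf{e}$ contributions cancel by the choice of $N_{i}^{a}$, and the $dx\otimes dx$ part reduces to the Schur complement $g_{ij}:=g_{\underline{i}\underline{j}}-g_{\underline{i}\underline{a}}g^{ab}g_{\underline{j}\underline{b}}$, which is nondegenerate precisely because $\mathbf{g}$ is; this yields $\mathbf{g}=g_{ij}\,e^{i}\otimes e^{j}+g_{ab}\,\mathbf{e}^{a}\otimes\mathbf{e}^{b}$ with $g_{ab}:=g_{\underline{a}\underline{b}}$, i.e. the d--metric form, adapted to the N--connection $\mathbf{N}=\{N_{i}^{a}\}$ whose existence (in prolonged form) is the content of the earlier Theorem.

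For the inductive step, assume the claim on $\ ^{s-1}\mathbf{V}$. Given $\ ^{s}\mathbf{g}$ on $\ ^{s}\mathbf{V}$ with coordinates $u^{\alpha_{s}}=(u^{\alpha_{s-1}},\zeta^{a_{s}})$, treat the whole of $\ ^{s-1}\mathbf{V}$ as the ``horizontal'' factor and the $2$--dimensional $\zeta^{a_{s}}$--fibre as the ``vertical'' one, and repeat the base--case construction verbatim: with $g^{a_{s}b_{s}}$ the inverse of the (nondegenerate) top--shell block $g_{\underline{a}_{s}\underline{b}_{s}}$, define $N_{i_{s}}^{a_{s}}:=g^{a_{s}b_{s}}g_{\underline{i}_{s}\underline{b}_{s}}$ and the coframe $\mathbf{e}^{a_{s}}=d\zeta^{a_{s}}+N_{i_{s}}^{a_{s}}dx^{i_{s}}$ of (\ref{nadifs}); the same cancellation gives
\[
\ ^{s}\mathbf{g}=\check g_{\underline{\alpha}_{s-1}\underline{\beta}_{s-1}}\,du^{\underline{\alpha}_{s-1}}\otimes du^{\underline{\beta}_{s-1}}+g_{\underline{a}_{s}\underline{b}_{s}}\,\mathbf{e}^{a_{s}}\otimes\mathbf{e}^{b_{s}},\qquad \check g_{\underline{\alpha}_{s-1}\underline{\beta}_{s-1}}:=g_{\underline{\alpha}_{s-1}\underline{\beta}_{s-1}}-g^{a_{s}b_{s}}g_{\underline{\alpha}_{s-1}\underline{a}_{s}}g_{\underline{\beta}_{s-1}\underline{b}_{s}},
\]
where $\check g$ is a nondegenerate symmetric $2$--tensor on $\ ^{s-1}\mathbf{V}$ (depending, in general, on $\zeta^{a_{s}}$ as a parameter, consistently with the $\ ^{s}u$--dependence allowed in the first line of (\ref{dm})). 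Applying the inductive hypothesis to $\check g$ expands the first term through the lower shells, and assembling the $N_{i_{t}}^{a_{t}}$, $t=1,\dots,s$, into the shell N--connection $\ ^{s}\mathbf{N}$ of (\ref{ncshell}) with N--adapted frames (\ref{naders})--(\ref{nadifs}) produces exactly (\ref{dm}).

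The main obstacle is the nondegeneracy of the relevant vertical $2$--block at each shell: a generic $\ ^{s}\mathbf{g}$ need not have $\det g_{\underline{a}_{t}\underline{b}_{t}}\neq 0$ for the default index split, so one must argue that an N--adapted--compatible coordinate/frame transform reshuffling the fibre coordinates achieves this on an open dense subset without spoiling the prescribed (arbitrary) signature, after which the decoupling is the routine computation sketched above. A secondary point is that the graded coordinate dependence displayed in the second line of (\ref{dm}) ($g_{ij}=g_{ij}(x)$, $g_{ab}=g_{ab}(u)$, etc.) is a normalization of the working ansatz rather than something automatic; it is reached by additional N--adapted transforms and choices of jet parametrization and is the form used in the subsequent AFDM constructions, whereas the Lemma itself only asserts the block (d--metric) decomposition.
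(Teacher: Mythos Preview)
Your proposal is correct and follows essentially the same approach as the paper: both rely on the standard block--diagonalization of the metric by choosing the N--connection coefficients so that the mixed horizontal--vertical components are absorbed into the N--elongated cobasis, applied shell by shell. The paper simply displays the resulting block matrices $\underline{g}_{\alpha_{s}\beta_{s}}$ and asserts that regrouping with respect to the $\mathbf{e}^{a_{s}}$ gives (\ref{dm}), whereas you make the construction explicit via $N_{i_{s}}^{a_{s}}=g^{a_{s}b_{s}}g_{\underline{i}_{s}\underline{b}_{s}}$, the Schur complement, and induction on $s$; your remark that the graded coordinate dependence $g_{ij}(x)$, $g_{ab}(u)$, $\ldots$ in the second line of (\ref{dm}) is an additional normalization rather than an automatic consequence is a valid and useful caveat that the paper does not make explicit.
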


\begin{proof}
Using frame/coordinate transforms, we can always parameterize any metric (\ref{metr}%
) in such form:%
\begin{equation*}
\ \ \underline{g}_{\alpha \beta }\left( \ u\right) =\left[
\begin{array}{cc}
\ g_{ij}+\ h_{ab}N_{i}^{a}N_{j}^{b} & h_{ae}N_{j}^{e} \\
\ h_{be}N_{i}^{e} & \ h_{ab}%
\end{array}%
\right] ,
\end{equation*}%
on the prime manifold and, for $r$--prolongations,
\begin{eqnarray*}
\underline{g}_{\alpha _{1}\beta _{1}}\left( \ ^{1}\zeta \right) &=&\left[
\begin{array}{cc}
\ \underline{g}_{\alpha \beta } & h_{a_{1}e_{1}}N_{\beta _{1}}^{e_{1}} \\
\ h_{b_{1}e_{1}}N_{\alpha _{1}}^{e_{1}} & \ h_{a_{1}b_{1}}%
\end{array}%
\right] ,\ \ \ \underline{g}_{\alpha _{2}\beta _{2}}\left( \ ^{2}\zeta
\right) =\left[
\begin{array}{cc}
\ \underline{g}_{\alpha _{1}\beta _{1}} & h_{a_{2}e_{2}}N_{\beta
_{1}}^{e_{2}} \\
\ h_{b_{2}e_{2}}N_{\alpha _{1}}^{e_{2}} & \ h_{a_{2}b_{2}}%
\end{array}%
\right] ,\ ...\  \\
\underline{g}_{\alpha _{s}\beta _{s}}\left( \ ^{s}\zeta \right) &=&\left[
\begin{array}{cc}
\ g_{i_{s}j_{s}}+\ h_{a_{s}b_{s}}N_{i_{s}}^{a_{s}}N_{j_{s}}^{b_{s}} &
h_{a_{s}e_{s}}N_{j_{s}}^{e_{s}} \\
\ h_{b_{s}e_{s}}N_{i_{s}}^{e_{s}} & \ h_{a_{s}b_{s}}%
\end{array}%
\right] .
\end{eqnarray*}%
By re--grouping terms shell by shell with respect to the bases (\ref{nadifs}), we
obtain (\ref{dm}).

$\square $
\end{proof}

\vskip5pt

In diverse dimensions, such parameterizations are similar to those introduced
in  Kaluza--Klein type theories when $\zeta ^{a_{s}},s\geq 1,$ are considered as
extra dimension coordinates with cylindrical compactification and $N_{\alpha
}^{e_{s}}(\ ^{s}u)\sim A_{a_{s}\alpha }^{e_{s}}(u)y^{\alpha }$ represent
certain (non) Abelian gauge fields $A_{a_{s}\alpha }^{e_{s}}(u).$ Jet
generalized gauge theories possess different symmetries than those with
potentials taking values in the Lie group algebras, see Ref. \cite{aldaya} on
unification of gravity with internal gauge interactions.

\subsubsection{Canonical jet distortions and linear connections on jet
bundles}

For any $r$--jet prolongation (pseudo) Riemannian metric$\ ^{s}\mathbf{g,}$
we can construct in standard form the Levi--Civita connection
(LC--connection), $\ ^{s}\nabla =\{\ _{\shortmid }\Gamma _{\ \beta
_{s}\gamma _{s}}^{\alpha _{s}}\}.$ By definition such a connection is metric
compatible, $\ ^{s}\nabla (\ ^{s}\mathbf{g)}=0,$ and with zero torsion, $\ \
_{\shortmid }T^{\alpha _{s}}=0$ (we  use formulae (\ref{dt}) for $\ ^{s}%
\mathbf{D\rightarrow }\ ^{s}\nabla ).$ It should be emphasized that such a
linear connection is not a d--connection because it does not preserve under
general coordinate transforms a N--connection splitting (\ref{ncshell}).

\begin{theorem}
\label{tcandist}There is a canonical distortion relation
\begin{equation}
\ ^{s}\widehat{\mathbf{D}}=\ ^{s}\nabla +\ ^{s}\widehat{\mathbf{Z}},
\label{distorsrel}
\end{equation}%
for a canonical d--connection $\ ^{s}\widehat{\mathbf{D}}$ which is
completely and uniquely defined by a (pseudo) Riemannian metric $\ ^{s}%
\mathbf{g}$ (\ref{dm}) for a chosen nonholonomic distribution $\ ^{s}\mathbf{%
N=\{}N_{i_{s}}^{a_{s}}\}$ when $\ \ ^{s}\widehat{\mathbf{D}}(\ ^{s}\mathbf{g)%
}=0$ and the horizontal and vertical torsions are zero, i.e. $h\widehat{%
\mathbf{T}}=\{\widehat{\mathbf{T}}_{\ jk}^{i}\}=0,$ $v\widehat{\mathbf{T}}=\{%
\widehat{\mathbf{T}}_{\ bc}^{a}\}=0,\ ^{1}v\widehat{\mathbf{T}}=\{\widehat{%
\mathbf{T}}_{\ b_{1}c_{1}}^{a_{1}}\}=0,...,\ ^{s}v\widehat{\mathbf{T}}=\{%
\widehat{\mathbf{T}}_{\ b_{s}c_{s}}^{a_{s}}\}=0;$ the distorting tensor $\
^{s}\widehat{\mathbf{Z}}=\{\widehat{\mathbf{\ Z}}_{\ \beta _{s}\gamma
_{s}}^{\alpha _{s}}\}$ is uniquely defined by the same data  $(^{s}\mathbf{g,}%
\ ^{s}\mathbf{N).}$
\end{theorem}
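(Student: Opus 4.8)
The plan is to build $\ ^{s}\widehat{\mathbf{D}}$ shell by shell, mirroring exactly the construction of the canonical d--connection on $(\mathbf{V},\mathbf{N})$ recalled in (\ref{doublecon})--(\ref{distrel}), and then to obtain $\ ^{s}\widehat{\mathbf{Z}}$ simply as a difference of linear connections. First I would write down a candidate coefficient array $\ ^{s}\widehat{\mathbf{\Gamma}}=\{\widehat{\mathbf{\Gamma}}_{\ \beta _{s}\gamma _{s}}^{\alpha _{s}}\}$ with respect to the N--adapted frames (\ref{naders})--(\ref{nadifs}) in generalized Koszul/Christoffel form: on the base ($s=0$) the usual quadruple $(L_{jk}^{i},L_{bk}^{a},C_{jc}^{i},C_{bc}^{a})$ built from $g_{ij}$, $g_{ab}$ and $N_{i}^{a}$, and then the same pattern repeated over each higher shell, i.e. $(L_{\beta _{s-1}\gamma _{s-1}}^{\alpha _{s-1}},L_{b_{s}\gamma _{s-1}}^{a_{s}},C_{\beta _{s-1}c_{s}}^{\alpha _{s-1}},C_{b_{s}c_{s}}^{a_{s}})$ built from $g_{i_{s}j_{s}}$, $g_{a_{s}b_{s}}$, $N_{i_{s}}^{a_{s}}$ and the $h$-- and $v$--derivative operators (\ref{naders}), consistently with the bookkeeping in (\ref{coefd}).

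Next I would verify, by a direct N--adapted computation using the Cartan formula (\ref{dt}), that this d--connection is metric compatible, $\ ^{s}\widehat{\mathbf{D}}(\ ^{s}\mathbf{g})=0$, and that the purely horizontal and the purely vertical--on--each--shell components of its torsion (\ref{torsshell}) vanish, while the surviving torsion components are precisely those forced by the nonholonomy coefficients of (\ref{anhrel1}) --- in particular the Neijenhuis / N--connection curvature pieces $\ ^{J}N_{i_{s}j_{s}}^{a_{s}}$ and the $\partial _{a_{s}}N_{i_{s}}^{b_{s}}$ terms, which belong to the admissible nonzero torsion and cannot be removed. For uniqueness I would observe that any d--connection obeying the same constraints has, in the N--adapted frame, coefficients solving the same system of linear equations, which on each shell decouples into independent $h$-- and $v$--blocks of Koszul type; solving these blocks returns exactly the candidate coefficients above, so $\ ^{s}\widehat{\mathbf{D}}$ is unique and therefore canonical, i.e. depends only on $(\ ^{s}\mathbf{g},\ ^{s}\mathbf{N})$.

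With $\ ^{s}\widehat{\mathbf{D}}$ in hand, I would set $\ ^{s}\widehat{\mathbf{Z}}:=\ ^{s}\widehat{\mathbf{D}}-\ ^{s}\nabla$. As the difference of two linear connections on the ordinary total space $\ ^{s}\mathbf{V}$, this is automatically a $(1,2)$--tensor field; its components are obtained by subtracting the two coefficient arrays in the same N--adapted frame, giving an expression built from $g_{\alpha _{s}\beta _{s}}$, $N_{i_{s}}^{a_{s}}$ and their frame derivatives. Since $\ ^{s}\nabla$ is the unique torsionless metric connection of $\ ^{s}\mathbf{g}$ (Levi--Civita theorem applied to $\ ^{s}\mathbf{V}$ viewed as a plain pseudo--Riemannian manifold) and $\ ^{s}\widehat{\mathbf{D}}$ was just shown to be fixed by $(\ ^{s}\mathbf{g},\ ^{s}\mathbf{N})$, the distorting d--tensor $\ ^{s}\widehat{\mathbf{Z}}$ is likewise fixed by $(\ ^{s}\mathbf{g},\ ^{s}\mathbf{N})$, which establishes (\ref{distorsrel}).

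The routine part is the shell--wise Koszul algebra, formally identical to the $s=0$ case. The genuinely delicate point is the inter--shell bookkeeping: one must check that imposing vanishing $h$--torsion of the base together with vanishing $v$--torsion on every shell $k\le s$ --- rather than full vanishing of the torsion --- leaves exactly enough freedom to pin down the mixed coefficients $L_{b_{s}\gamma _{s-1}}^{a_{s}}$ and $C_{\beta _{s-1}c_{s}}^{\alpha _{s-1}}$, and no residual freedom, with the leftover torsion being precisely the non--removable nonholonomy part. I would organize this by induction on $s$ through the projections $\pi _{s-1}^{r}$: assuming the statement for the $(s-1)$--truncation, the $s$--th shell adds one further copy of the $h$--$v$ block over frozen lower data, so the inductive step reduces to the single--shell computation already carried out, and the distortion formula propagates additively.
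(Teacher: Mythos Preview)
Your proposal is correct and follows essentially the same approach as the paper: the paper's proof (Appendix~A.2) writes down the explicit shell--by--shell Christoffel--type coefficient formulas (\ref{candcon}) for $\ ^{s}\widehat{\mathbf{D}}$, asserts that metric compatibility and vanishing $h$-- and shell--$v$--torsions are verified by direct computation, displays the surviving torsion components (\ref{dtors}), and then defines $\ ^{s}\widehat{\mathbf{Z}}$ as the coefficient difference $\widehat{\mathbf{\Gamma}}_{\ \alpha_{s}\beta_{s}}^{\gamma_{s}}-\ _{\shortmid}\Gamma_{\ \alpha_{s}\beta_{s}}^{\gamma_{s}}$. Your Koszul--style uniqueness argument and the induction on shells through $\pi_{s-1}^{r}$ make explicit the structure that the paper merely sketches, but the underlying strategy is the same.
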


\begin{proof}
We sketch a proof in appendix \ref{prooftcands}.

$\square $
\end{proof}

\vskip5pt

The N--adapted coefficients of the distortion d--tensor $\widehat{\mathbf{\ Z%
}}_{\ \beta _{s}\gamma _{s}}^{\alpha _{s}}$ are algebraic combinations of $%
\widehat{T}_{\ \beta _{s}\gamma _{s}}^{\alpha _{s}}$ and vanish for zero
torsion. The nonholonomic variables $(\ ^{s}\mathbf{g}$ (\ref{dm})$\mathbf{,}%
\ ^{s}\mathbf{N,}\ ^{s}\widehat{\mathbf{D}})$ are equivalent to the standard
(pseudo) Riemannian ones $(\ ^{s}\mathbf{g}$ (\ref{metr}) $\ ^{s}\nabla ).$
For instance,  GR  in 4-d can be formulated equivalently using the
connection $\nabla $ and/or $\widehat{\mathbf{D}}$ if the distortion
relation (\ref{distrel}) is used, see details in \cite{vexsol1,vex3,vexsol2}%
. \ The $r$--jet prolongations give distortions (\ref{distorsrel}). We
consider nonholonomic jet deformations of a 4--d (pseudo) Riemannian space
to a $\mathbf{J}^{r}(\mathbf{V},\mathbf{V}^{\prime })$ with a canonical
nonzero d--torsion. In such cases, we are able to decouple modified Einstein
equations and construct integral varieties with jet variables. At the end,
we can impose additional nonholonomic constraints and fix the jet
coordinates in order to generate exact solutions of Ricci soliton/ Einstein
equations in 4--d, or higher dimensions, with \ $r$--jet symmetries.

Here we note that $\ ^{s}\nabla $ and $\ ^{s}\widehat{\mathbf{D}}$ are not
tensor objects. $\ ^{s}\widehat{\mathbf{D}}$ is a d--connection and such
linear connections are subject to different rules with respect to coordinate
transformations. It is possible to consider frame transformations with certain $\ ^{s}%
\mathbf{N=\{}N_{i_{s}}^{a_{s}}\}$ when the conditions $\ _{\shortmid }\Gamma
_{\ \alpha _{s}\beta _{s}}^{\gamma _{s}}=\widehat{\mathbf{\Gamma }}_{\
\alpha _{s}\beta _{s}}^{\gamma _{s}}$ are satisfied with respect to some
N--adapted frames (\ref{naders})-- (\ref{nadifs}). In general, $\ ^{s}\nabla
\neq \ ^{s}\widehat{\mathbf{D}}$ and the corresponding curvature tensors $\
_{\shortmid }R_{\ \beta _{s}\gamma _{s}\delta _{s}}^{\alpha _{s}}\neq
\widehat{\mathbf{R}}_{\ \beta _{s}\gamma _{s}\delta _{s}}^{\alpha _{s}}$ are
different, but the Ricci tensor components may coincide for certain classes
of nonholonomic constraints.

\subsubsection{Prolongation of Ricci soliton and Einstein equations on
nonholonomic jet configurations}

\label{ssgoals}In this section, we introduce important geometric and
physical equations in nonholonomic variables on $\mathbf{V}$ and consider
generalizations on $\mathbf{J}^{r}(\mathbf{V},\mathbf{V}^{\prime }).$

\begin{definition}
The geometric data $\left( \mathbf{g},\mathbf{N,D};\mathbf{V}\right) $
defines a gradient nonholonomic Ricci soliton if there exists a smooth
potential function $\kappa (x,y)$ such that%
\begin{equation}
\widehat{\mathbf{R}}_{\beta \gamma }+\widehat{\mathbf{D}}_{\beta }\widehat{%
\mathbf{D}}_{\gamma }\kappa =\lambda \mathbf{g}_{\beta \gamma }.
\label{nriccisol}
\end{equation}%
There are three types of such Ricci solitons determined by a constant $%
\lambda :$ steady ones for $\lambda =0;$ shrinking ones for $\lambda >0;$ and
expanding ones for $\lambda <0.$
\end{definition}

The above classification is determined by the Levi--Civita, LC, limits when
shrinking solutions help us to understand the asymptotic behaviour of the
ancient (old) solutions of the Ricci flow theory \cite{ham1,ham2,perelm}.
By generalizing and adapting the constructions to N--connection structures, one
can describe geometric flows with nonholonomic constraints \cite%
{vnrf,vncrfvnrf}. Here, we omit a study of geometric analysis issues and
generalized Ricci flow models  and restrict our research to
nonholonomic $r$--jet prolongations of equations and important classes of
solutions.

The N--adapted coefficients of the Ricci d--tensor $Ric=\{\mathbf{R}_{\alpha
_{s}\beta _{s}}:=\mathbf{R}_{\ \alpha _{s}\beta _{s}\tau _{s}}^{\tau _{s}}\}$
of a d--connection $\ ^{s}\mathbf{D}$ in $\mathbf{J}^{r}(\mathbf{V},\mathbf{V%
}^{\prime })$ are computed from the curvature
tensor (\ref{dc}),
\begin{equation}
\mathbf{R}_{\alpha _{s}\beta _{s}}=\{R_{i_{s}j_{s}}:=R_{\
i_{s}j_{s}k_{s}}^{k_{s}},\ \ R_{i_{1}a_{1}}:=-R_{\
i_{1}k_{1}a_{1}}^{k_{1}},...,\ R_{a_{s}i_{s}}:=R_{\
a_{s}i_{s}b_{s}}^{b_{s}}\}.  \label{dricci}
\end{equation}%
Using the inverse matrix of $\ ^{s}\mathbf{g}$ (\ref{dm}), we  compute
the scalar curvature of $\ ^{s}\mathbf{D,}$
\begin{equation}
\ ^{s}R:= \mathbf{g}^{\alpha _{s}\beta _{s}}\mathbf{R}_{\alpha _{s}\beta
_{s}}=g^{i_{s}j_{s}}R_{i_{s}j_{s}}+h^{a_{s}b_{s}}R_{a_{s}b_{s}} =R+S+\
^{1}S+...+\ ^{s}S,  \label{rdsc}
\end{equation}%
with respective to the horizontal (h) and vertical (v) components of the scalar curvature, $R=g^{ij}R_{ij},$
$S=h^{ab}R_{ab},$ $\ ^{1}S=h^{a_{1}b_{1}}R_{a_{1}b_{1}},...,\
^{s}S=h^{a_{s}b_{s}}R_{a_{s}b_{s}}.$

The Einstein d--tensor $\ ^{s}\mathbf{E}nst=\{\ ^{s}\mathbf{E}_{\alpha
_{s}\beta _{s}}\}$ for any nonholonomic $r$--jet data $(\ ^{s}\mathbf{g}$ $%
\mathbf{,}\ ^{s}\mathbf{N,}\ ^{s}\mathbf{D})$ is determined in standard
forms as,%
\begin{equation}
\ ^{s}\mathbf{E}_{\alpha _{s}\beta _{s}}:=\ ^{s}\mathbf{R}_{\alpha _{s}\beta
_{s}}-\frac{1}{2}\mathbf{g}_{\alpha _{s}\beta _{s}}\ ^{s}R.  \label{einstdt}
\end{equation}%
Such nonholonomic jet prolongations of a prime Einstein tensor are not
symmetric, and the d--tensor $\ ^{s}\mathbf{R}_{\alpha _{s}\beta _{s}}$ is
not symmetric for a general $\ ^{s}\mathbf{D}$ and $\ ^{s}\mathbf{D(}\ ^{s}%
\mathbf{\mathbf{E}}nst)\neq 0.$ For a canonical $\ ^{s}\widehat{\mathbf{D}}%
\mathbf{,}$ we can always compute $\ ^{s}\widehat{\mathbf{D}}\mathbf{(}\ ^{s}%
\widehat{\mathbf{\mathbf{E}}}nst\mathbf{)}$ as a unique distortion relation
determined by (\ref{distorsrel}).

\begin{proposition}
For a N--connection splitting (\ref{ncshell}) on $\mathbf{J}^{r}(\mathbf{V},%
\mathbf{V}^{\prime }),$ the  Levi--Civita and canonical
d--connection with prolongations are defined by the conditions
\begin{equation*}
\ ^{s}\mathbf{g\rightarrow }\left\{
\begin{array}{ccccc}
\ ^{s}\nabla : &  & \ ^{s}\nabla \ ^{s}\mathbf{g=0;\ }^{\ ^{s}\nabla }%
\mathbf{T}=0, &  & \mbox{ the
Levi--Civita connection;} \\
\ ^{s}\widehat{\mathbf{D}}: &  & \ ^{s}\widehat{\mathbf{D}}\ ^{s}\mathbf{%
g=0;\ }h\widehat{\mathbf{T}}=0,\ ^{1}v\widehat{\mathbf{T}}=0,...,\ ^{s}v%
\widehat{\mathbf{T}}=0. &  & \mbox{ the canonical
d--connection.}%
\end{array}%
\right.
\end{equation*}
\end{proposition}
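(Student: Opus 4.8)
The plan is to reduce the Proposition to the prime--manifold statement (\ref{doublecon}) together with Theorem \ref{tcandist}, using the observation that the triple $(\ ^{s}\mathbf{V},\ ^{s}\mathbf{N},\ ^{s}\mathbf{g})$ is nothing but a nonholonomic (pseudo--)Riemannian manifold carrying the $2+2+\dots+2$ Whitney decomposition (\ref{ncshell}). First I would dispose of $\ ^{s}\nabla$: since $\ ^{s}\mathbf{g}$ (\ref{metr}) is an ordinary metric tensor of arbitrary signature on the $(4+2s)$--dimensional manifold $\ ^{s}\mathbf{V}$, the classical fundamental theorem of (pseudo--)Riemannian geometry yields a unique linear connection with $\ ^{s}\nabla\ ^{s}\mathbf{g}=0$ and $\ ^{\ ^{s}\nabla }\mathbf{T}=0$; as already remarked just before Theorem \ref{tcandist}, this connection generically fails to preserve the splitting (\ref{ncshell}), which is precisely why it is not a d--connection. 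No jet--specific input is needed here.

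Next I would establish $\ ^{s}\widehat{\mathbf{D}}$. Existence is immediate from Theorem \ref{tcandist}: the distortion relation $\ ^{s}\widehat{\mathbf{D}}=\ ^{s}\nabla +\ ^{s}\widehat{\mathbf{Z}}$ produces a linear connection, and metric compatibility $\ ^{s}\widehat{\mathbf{D}}(\ ^{s}\mathbf{g})=0$ together with the shell--by--shell conditions $h\widehat{\mathbf{T}}=0,\ ^{1}v\widehat{\mathbf{T}}=0,\dots ,\ ^{s}v\widehat{\mathbf{T}}=0$ is exactly the content of that Theorem. To make the statement self--contained I would also exhibit the N--adapted coefficients (\ref{coefd}): shell by shell these are the usual canonical d--connection coefficients built from the d--metric (\ref{dm}) and the N--connection (\ref{ncshell}), e.g. $\widehat{L}_{j_sk_s}^{i_s}=\frac{1}{2}g^{i_sr_s}(\mathbf{e}_{k_s}g_{j_sr_s}+\mathbf{e}_{j_s}g_{k_sr_s}-\mathbf{e}_{r_s}g_{j_sk_s})$ for the $h$--part and the analogous expressions for each $v$--part, and then verify by a routine N--adapted computation, deferred to Appendix \ref{sscoefcurv} and \cite{vex3,vexsol2}, that with these coefficients $\widehat{\mathbf{D}}_{\alpha _s}\mathbf{g}_{\beta _s\gamma _s}=0$ and the prescribed torsion pattern holds, using (\ref{dt}) with $\ ^{s}\mathbf{D}\rightarrow \ ^{s}\widehat{\mathbf{D}}$.

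For uniqueness I would argue iteratively over the shells. Once the $h$--, $v$--, $\ ^{1}v$--, \dots, $\ ^{s}v$--blocks of $\ ^{s}\mathbf{g}$ and of $\ ^{s}\mathbf{N}$ are fixed, the requirements that $\ ^{s}\widehat{\mathbf{D}}$ be a d--connection, that $\ ^{s}\widehat{\mathbf{D}}(\ ^{s}\mathbf{g})=0$, and that the intra--block torsions vanish become, block by block, the same linear algebraic system that determines the canonical d--connection on the prime manifold in (\ref{doublecon}); because at shell $k$ the operator $\eth_{a_k}$ differentiates only the $k$--th fibre coordinates, only the data of shells $\le k$ enter, so the system is triangular in the shell index and uniquely solvable at each stage. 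I expect the genuine work --- beyond formally transcribing (\ref{doublecon}) shell by shell --- to be bookkeeping rather than conceptual: one must check that the jet--specific nonholonomy coefficients appearing in (\ref{anhrel1}), both $W_{i_sa_s}^{b_s}=\partial_{a_s}N_{i_s}^{b_s}$ and the Neijenhuis term $\ ^{J}N_{i_sj_s}^{a_s}$, feed only into the torsion components that are \emph{allowed} to be nonzero (the mixed $\widehat{\mathbf{T}}_{\ j_sa_s}^{i_s}$, $\widehat{\mathbf{T}}_{\ i_sa_s}^{b_s}$, etc.), so that they neither obstruct existence nor spoil uniqueness. That verification is the main obstacle and the real substance of the Proposition.
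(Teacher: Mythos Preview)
Your proposal is correct and follows essentially the same approach as the paper: both reduce the Proposition to the prime--manifold statement (\ref{doublecon}), viewing the $r$--jet prolongation as a straightforward shell--by--shell N--adapted extension. The paper's proof is a single sentence (``a trivial N--adapted construction with nonholonomic $r$--jet prolongations of (\ref{doublecon})''), so your version is considerably more detailed---in particular your explicit invocation of Theorem~\ref{tcandist}, the shell--triangular uniqueness argument, and the check that the nonholonomy coefficients feed only into the allowed torsion components are all implicit in the paper's one line but not spelled out there.
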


\begin{proof}
It is a trivial N--adapted construction with nonholonomic $r$--jet
prolongations of (\ref{doublecon}).

$\square $
\end{proof}

\vskip5pt

Einstein equations for a metric $\ ^{s}\mathbf{g}$ are natural
prolongations that can be formulated in standard form using the
LC--connection $\ ^{s}\nabla .$ By computing the corresponding Ricci tensor, $\
_{\shortmid }R_{\alpha _{s}\beta _{s}},$ curvature scalar, $\ _{\shortmid
}^{s}R,$ and Einstein tensor, $\ _{\shortmid }E_{\alpha _{s}\beta _{s}},$ we arrive at %
\begin{equation}
\ _{\shortmid }E_{\alpha _{s}\beta _{s}}:=\ _{\shortmid }R_{\alpha _{s}\beta
_{s}}-\frac{1}{2}g_{\alpha _{s}\beta _{s}}\ _{\shortmid }^{s}R=\varkappa \
_{\shortmid }T_{\alpha _{s}\beta _{s}},  \label{einsteq}
\end{equation}%
where $\varkappa $ is the gravitational constant and $\ _{\shortmid
}T_{\alpha _{s}\beta _{s}}$ is the stress--energy tensor for matter fields.
In 4-d, there are well-defined geometric/variational and physically
motivated procedures for constructing $\ _{\shortmid }T_{\alpha _{s}\beta
_{s}}.$ Such values can be similarly (at least geometrically) re--defined
with respect to N--adapted frames using distortion relations (\ref%
{distorsrel}) and introducing extra dimensions.

The gravitational field equations (\ref{einsteq}) can be rewritten
equivalently in N--adapted form for the canonical d--connection $\ ^{s}%
\widehat{\mathbf{D}},$ as%
\begin{eqnarray}
&&\ ^{s}\widehat{\mathbf{R}}_{\ \beta _{s}\delta _{s}}-\frac{1}{2}\mathbf{g}%
_{\beta _{s}\delta _{s}}\ ^{s}R=\mathbf{\Upsilon }_{\beta _{s}\delta _{s}},
\label{cdeinst} \\
&&\widehat{L}_{a_{s}j_{s}}^{c_{s}}=e_{a_{s}}(N_{j_{s}}^{c_{s}}),\ \widehat{C}%
_{j_{s}b_{s}}^{i_{s}}=0,\ ^{N}\widehat{J}_{\ j_{s}i_{s}}^{a_{s}}=0.
\label{lcconstr}
\end{eqnarray}%
The sources $\mathbf{\Upsilon }_{\beta _{s}\delta _{s}}$ are constructed as
in GR but with nonholonomic jet deformations for the formal extra dimensions,
when $\mathbf{\Upsilon }_{\beta _{s}\delta _{s}}\rightarrow \varkappa
T_{\beta _{s}\delta _{s}}$ \ for $\ ^{s}\widehat{\mathbf{D}}\rightarrow \
^{s}\nabla .$ The solutions of \ (\ref{cdeinst}) contain nonholonomically
induced torsion (\ref{dt}).

If the conditions (\ref{lcconstr}) are satisfied, the d-torsion coefficients
(\ref{dtors}) are zero and we get the LC--connection, i.e. it is possible to
"extract" solutions like as for the standard Einstein equations. The decoupling
property can be proved in explicit form by working with $\ ^{s}\widehat{\mathbf{%
D}}$ and nonholonomic torsion configurations. Having constructed certain
classes of solutions in explicit form, with nonholonomically induced
torsions and depending on various sets of integration and generating
functions and parameters, we can "extract" the solutions for $\ ^{s}\nabla $
by imposing at the end additional constraints that give zero torsion.

Using natural prolongations from $\mathbf{V}$ on $\mathbf{J}^{r}(\mathbf{V},%
\mathbf{V}^{\prime })$, we prove

\begin{theorem}
\label{th2.4}In nonholonomic N--adapted $r$--jet variables, the gradient
canonical Ricci jet--solitons are defined by equations%
\begin{equation}
\widehat{\mathbf{R}}_{\ \beta _{s}\gamma _{s}}+\widehat{\mathbf{D}}_{\beta
_{s}}\widehat{\mathbf{D}}_{\gamma _{s}}\kappa =\lambda \mathbf{g}_{\beta
_{s}\gamma _{s}}.  \label{nriccisol1}
\end{equation}
\end{theorem}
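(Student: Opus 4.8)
The plan is to obtain (\ref{nriccisol1}) as the shell--by--shell $r$--jet prolongation of the prime soliton equation (\ref{nriccisol}), using the canonical geometric objects constructed in the preceding statements. First I would fix geometric data $(\mathbf{g},\mathbf{N},\widehat{\mathbf{D}};\mathbf{V})$ satisfying (\ref{nriccisol}) for a potential $\kappa(x,y)$ and prolong it to $\mathbf{J}^{r}(\mathbf{V},\mathbf{V}^{\prime})$: the N--connection $\mathbf{N}$ prolongs to $\ ^{s}\mathbf{N}$ via the Whitney sum (\ref{ncshell}); the d--metric $\mathbf{g}$ prolongs to $\ ^{s}\mathbf{g}$ in the block form (\ref{dm}); and by Theorem \ref{tcandist} there is a unique canonical d--connection $\ ^{s}\widehat{\mathbf{D}}$ with distortion (\ref{distorsrel}), metric compatible and with vanishing $h$-- and all $\ ^{k}v$--torsions. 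The potential is extended to a smooth function on $\ ^{s}\mathbf{V}$ (in the simplest case as the pull--back of $\kappa(x,y)$ along the source projection $\alpha$, more generally as any admissible $\kappa(\ ^{s}u)$ compatible with the shell coordinates (\ref{jcoord})).

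Second, I would compute $\widehat{\mathbf{R}}_{\ \beta_{s}\gamma_{s}}$ by contracting the prolonged curvature d--tensor (\ref{dc}) according to (\ref{dricci}), working in the N--adapted frames (\ref{naders})--(\ref{nadifs}). Because the prolongation is performed "shell by shell" and the frames on shell $s$ differ from those on shell $s-1$ only through the new $\eth_{a_{s}}$--terms, the curvature and Ricci coefficients carried by the prime block $\mathbf{V}$ are left unchanged while each added shell contributes an analogous block; this is exactly the content of the Corollary--Definition on $\ ^{s}\mathbf{T}$, $\ ^{s}\mathbf{R}$. Next I would form the N--adapted Hessian $\widehat{\mathbf{D}}_{\beta_{s}}\widehat{\mathbf{D}}_{\gamma_{s}}\kappa=\mathbf{e}_{\beta_{s}}(\mathbf{e}_{\gamma_{s}}\kappa)-\widehat{\mathbf{\Gamma}}_{\ \beta_{s}\gamma_{s}}^{\tau_{s}}\mathbf{e}_{\tau_{s}}\kappa$ with the coefficients (\ref{coefd}), which again splits into a prime block reproducing $\widehat{\mathbf{D}}_{\beta}\widehat{\mathbf{D}}_{\gamma}\kappa$ of (\ref{nriccisol}) plus the new shell contributions. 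Matching the sum of these two pieces against $\lambda\,\mathbf{g}_{\beta_{s}\gamma_{s}}$, which is block--diagonal by (\ref{dm}), yields (\ref{nriccisol1}) on each shell; restricting to the prime block recovers (\ref{nriccisol}), so the displayed system is the correct N--adapted $r$--jet prolongation, and for trivial shells (or the LC reduction) it collapses to the jet--extension of the Einstein equations.

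The main obstacle I anticipate is the consistency of the iterated construction together with the correct handling of the \emph{torsionful} canonical d--connection in the Hessian term: one must verify that adding shell $s$ does not disturb the blocks established for shells $0,1,\dots,s-1$, and that the mixed $h$--$v_{k}$ components of $\widehat{\mathbf{D}}_{\beta_{s}}\widehat{\mathbf{D}}_{\gamma_{s}}\kappa$ (which involve the $\widehat{C}$-- and $\widehat{L}$--coefficients) are the genuine N--adapted analogue of a second covariant derivative. This reduces to two coefficient checks of the type collected in the Appendix: (i) that the distortion $\ ^{s}\widehat{\mathbf{Z}}$ of Theorem \ref{tcandist} is indeed built shell by shell out of $\widehat{T}_{\ \beta_{s}\gamma_{s}}^{\alpha_{s}}$, so that the only nonzero torsion pieces are the expected ones and vanish on holonomic/LC reductions, and (ii) that the contraction (\ref{dricci}) producing $\widehat{\mathbf{R}}_{\ \beta_{s}\gamma_{s}}$ respects the block structure of (\ref{dm}). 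Once these are in place, (\ref{nriccisol1}) follows by simply projecting onto each $h$-- and $\ ^{k}v$--block and comparing with (\ref{nriccisol}).
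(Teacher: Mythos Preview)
Your proposal is correct and follows essentially the same route as the paper: the paper's ``proof'' is the single sentence ``Using natural prolongations from $\mathbf{V}$ on $\mathbf{J}^{r}(\mathbf{V},\mathbf{V}^{\prime})$, we prove'' immediately preceding the theorem, and your outline is precisely an unpacking of what ``natural prolongation'' means here --- carrying $(\mathbf{g},\mathbf{N},\widehat{\mathbf{D}},\kappa)$ shell by shell to $(\ ^{s}\mathbf{g},\ ^{s}\mathbf{N},\ ^{s}\widehat{\mathbf{D}},\kappa)$ via (\ref{ncshell}), (\ref{dm}), Theorem \ref{tcandist}, and then reading off the block form of Ricci and Hessian in the frames (\ref{naders})--(\ref{nadifs}). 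Your anticipated obstacles (block stability under adding a shell, correct behaviour of the mixed $h$--$v_{k}$ Hessian terms) are exactly the coefficient checks the paper defers to the Appendix, so nothing is missing.
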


Our first goal is to elaborate a geometric method for decoupling the
equations (\ref{nriccisol}) and (\ref{nriccisol1}). This is possible for
certain classes of nonholonomic constraints when the systems of nonlinear
PDE (\ref{cdeinst}) is supplemented with  additional zero torsion conditions (\ref%
{lcconstr}). We are able to find general classes of solutions when $%
\kappa $ is parameterized so as to satisfy%
\begin{eqnarray}
\widehat{\mathbf{R}}_{ij} &=&\ ^{h}\Upsilon (x^{k})\mathbf{g}_{ij},
\label{geq1} \\
\widehat{\mathbf{R}}_{ab} &=&\ ^{v}\Upsilon (x^{k},y^{a})\mathbf{g}_{ab},
\label{geq2} \\
\widehat{\mathbf{R}}_{\beta \gamma } &=&0,\mbox{ for }\beta \neq \gamma ,
\label{geq3}
\end{eqnarray}%
\begin{eqnarray}
\widehat{\mathbf{R}}_{a_{1}b_{1}} &=&\ ^{^{1}v}\Upsilon (x^{k},y^{a},\zeta
^{a_{1}})\mathbf{g}_{a_{1}b_{1}},  \label{geq4} \\
\widehat{\mathbf{R}}_{\beta _{1}\gamma _{1}} &=&0,\mbox{ for }\beta _{1}\neq
\gamma _{1},  \label{geq5}
\end{eqnarray}%
\begin{equation*}
...
\end{equation*}%
\begin{eqnarray}
\widehat{\mathbf{R}}_{a_{s}b_{s}} &=&\ ^{^{s}v}\Upsilon (x^{k},y^{a},\zeta
^{a_{1}},...,\zeta ^{a_{s}})\mathbf{g}_{a_{s}b_{s}},  \label{geqs} \\
\widehat{\mathbf{R}}_{\beta _{s}\gamma _{s}} &=&0,\mbox{ for }\beta _{s}\neq
\gamma _{s},  \label{geqs1}
\end{eqnarray}%
with respect to N--adapted frames (\ref{naders}) and (\ref{nadifs}). The
effective source (anisotropically polarized cosmological constant)
\begin{equation}
\Upsilon _{\beta }^{\alpha }=diag[\Upsilon _{1}^{1}=\Upsilon _{2}^{2}=\
^{h}\Upsilon ,\Upsilon _{3}^{3}=\Upsilon _{4}^{4}=\ ^{v}\Upsilon ,\Upsilon
_{5}^{5}=\Upsilon _{6}^{6}=\ ^{^{1}v}\Upsilon ,...,\Upsilon
_{2s-1}^{2s-1}=\Upsilon _{2s}^{2s}=\ ^{^{s}v}\Upsilon ],  \label{sourse}
\end{equation}%
is parameterized accordingly that allows the integration of the differential
equations in explicit forms. We shall prove that the general classes of
solutions $\mathbf{g}_{\alpha \beta }(u)$ depend generically on all
spacetime coordinates via the corresponding generating and integration functions,
and various integrations constants. The solutions will represent an explicit application of
the geometry of nonholonomic distributions and generalized connections in
mathematical relativity, modified gravity theories and the theory of
physically important nonlinear systems of PDEs.

The second goal is to find explicit solutions for the Levi-Civita
(LC-configurations) (\ref{lcconstr}) , i.e. for the additional constraints when
\begin{equation}
\ ^{s}\widehat{\mathbf{T}}=0,  \label{lccond}
\end{equation}%
(this formula follows from (\ref{torsshell})), when for some classes
of solutions $(\ ^{s}\mathbf{g,}\ ^{s}\mathbf{N,}\ ^{s}\widehat{\mathbf{D}})$
of (\ref{geq1})--(\ref{geqs1}) we can extract certain subvarieties of
solutions $(\ ^{s}\mathbf{\check{g},}\ ^{s}\mathbf{\check{N},}\ ^{s}\nabla
=\ ^{s}\mathbf{\check{D}=}\ ^{s}\mathbf{D}_{\ \shortmid \ ^{s}\mathbf{\check{%
T}\rightarrow 0}})$ for zero torsion, after re--scaling the generating
functions and sources, $\ ^{s}\mathbf{g\rightarrow }\ ^{s}\mathbf{\check{g}}%
,\ ^{s}\mathbf{N\rightarrow }\ ^{s}\mathbf{\check{N}}$ and $\ ^{h}\Upsilon
(x^{k}),..,\ ^{^{s}v}\Upsilon (x^{k})\rightarrow \lambda =const. $ With this procedure,
we formulate a geometric method of constructing exact solutions of the
Einstein equations (\ref{einsteq}) with re--defined source and N--adapted
frame structures when
\begin{equation}
R_{\alpha _{s}\beta _{s}}[\ ^{s}\nabla ]=\lambda \mathbf{\check{g}}_{\alpha
_{s}\beta _{s}}  \label{enstm}
\end{equation}%
on nonholonomic manifolds/bundles and $r$--jet prolongations.  The
metrics contain generic off--diagonal elements (i.e. can not be diagonalized via
coordinate transforms), which may depend on all spacetime coordinates and can be
prescribed (via generating/integrating functions and constants) to satisfy
various necessary types of  symmetry, boundary, Cauchy, and topological conditions,
with possible singularities and horizons. We note that the system (\ref%
{geq1})--(\ref{geqs1}) together with LC--conditions (\ref{lccond}) is
equivalent to (\ref{enstm}). Both such systems of PDEs are nonlinear and
parameter dependent. So, it is important at what stage  certain nonholonomic constraints
and ansatz conditions for frames, metrics and connections are imposed i.e., at
the end, when some solutions for $\ ^{s}\widehat{\mathbf{D}}$ have been
found, or at the beginning, when $\ ^{s}\widehat{\mathbf{D}}\rightarrow \
^{s}\nabla .$ We can not decouple such systems of equations in a general
form if we work from the very beginning with the Levi--Civita connection $\
^{s}\nabla .$

The third goal is to lay down certain geometric conditions on when
a general (pseudo) Riemannian manifold $(\mathbf{g,V)}$ (the metric $\mathbf{%
g}$ may  or not be a solution of any (modified)\ Einstein or Ricci soliton
equations) can be nonholonomically deformed via the corresponding nonhololonomic
jet maps with generalized connection structures into certain geometrically/
physically important classes of solutions of systems of the type (\ref{geq1})--(%
\ref{lccond}), or (\ref{enstm}). In such cases $(\mathbf{g,V)}%
\begin{array}{c}
\mbox{ nonholonomic }Jet \\
----------\longrightarrow \\
{\ }%
\end{array}%
(\mathbf{\check{g},\check{N},\check{D},\check{V}})$, when the target space $%
\mathbf{\check{V}}$ and the fundamental geometric structures $(\mathbf{\check{g},%
\check{N},\check{D})}$ are constructed with nonholonomic jet transformations and ($Jet$)
are solutions of certain (modified/generalized) Einstein or Ricci soliton
equations that depend on generalized jet parameters and corresponding jet
symmetries. We note, that for such geometric and physical models the jet
variables are prescribed certain constant values (we suppress the left label
$s$).

\section{Decoupling and Integration of Jet Prolongation of Einstein Equations%
}

\label{s3}In this section, we prove that the system of nonlinear PDEs (\ref%
{geq1})--(\ref{geqs1}) with possible constraints (\ref{lccond}) giving (\ref%
{enstm}), can be decoupled in very general forms with respect to N--adapted
frames with two dimensional shell parameterizations of jet variables. We
show how such decoupled systems can be integrated in general forms for
vacuum and non--vacuum solutions in (modified) gravity and Ricci soliton
theories.

\subsection{Off--diagonal metrics for $r$--jet configurations with one
Killing symmetry}

We study nonholonomic jet deformations of the "primary" geometric/physical data
into "target" data,
\begin{equation*}
\mbox{[ primary ]}(\ _{\circ }^{s}\mathbf{g,}\ _{\circ }^{s}\mathbf{N,}\
_{\circ }^{s}\widehat{\mathbf{D}})\ \rightarrow \mbox{[ target ]}(\ _{\eta
}^{s}\mathbf{g}=\ ^{s}\mathbf{\mathbf{g},}\ _{\eta }^{s}\mathbf{N}=\ ^{s}%
\mathbf{\mathbf{N},}\ _{\eta }^{s}\widehat{\mathbf{D}}=\ ^{s}\widehat{%
\mathbf{D}}).
\end{equation*}%
The values labeled by "$\circ "$  define exact solutions in a Ricci
soliton or gravity theory. The metrics with left label "$\eta $"  define
a solution of modified gravitational field equations (\ref{geq1})--(\ref%
{geqs1}). The prime ansatz is written
\begin{eqnarray}
\ \ _{\circ }^{s}\mathbf{g} &=&\ \mathring{g}_{i}(x^{k})dx^{i}\otimes dx^{i}+%
\mathring{h}_{a}(x^{k},y^{4})\mathbf{\mathring{e}}^{a}\otimes \mathbf{%
\mathring{e}}^{b}+\epsilon _{a_{1}}\ dy^{a_{1}}\otimes \ dy^{a_{1}}+....+\
\epsilon _{a_{s}}dy^{a_{s}}\otimes \ dy^{a_{s}},  \notag \\
\mathbf{\mathring{e}}^{a} &=&dy^{a}+\mathring{N}_{i}^{a}(x^{k},y^{4})dx^{i},%
\mbox{ with }\mathring{N}_{i}^{3}=\mathring{n}_{i},\mathring{N}_{i}^{4}=%
\mathring{w}_{i},  \label{ansprime}
\end{eqnarray}%
for $\epsilon _{a_{s}}=\pm 1$ depending on the signature of extra dimensions and
$(\mathring{g}_{i},\mathring{h}_{a};\mathring{N}_{i}^{a})$ defining, for
instance, the Kerr black hole solution trivially imbedded into a $4+2s$ jet
prolongation of spacetime. We express the N--adapted coefficients of a
target ansatz (\ref{dm}) as
\begin{equation}
g_{\alpha _{s}} =\eta _{\alpha _{s}}(u^{\beta _{s}})\mathring{g}_{\alpha
_{s}};N_{i_{s}}^{a_{s}}=\ _{\eta }N_{i_{s}}^{a_{s}}(u^{\beta
_{s-1}},y^{4+2s});\ n_{i} = \eta _{i}^{3}\mathring{n}_{i},w_{i}=\eta _{i}^{4}%
\mathring{w}_{i},\mbox{ not summation over i};  \label{etad}
\end{equation}%
with the so--called gravitational "polarization" functions and extra dimensional
N-coefficients, $\eta _{\alpha _{s}},\eta _{i}^{a}$ and$\ _{\eta
}N_{i_{s}}^{a_{s}}.$ To be able to study certain limits $(\ _{\eta }^{s}%
\mathbf{g,}\ _{\eta }^{s}\mathbf{N,}\ _{\eta }^{s}\widehat{\mathbf{D}}%
)\rightarrow (\ _{\circ }^{s}\mathbf{g,}\ _{\circ }^{s}\mathbf{N,}\ _{\circ
}^{s}\widehat{\mathbf{D}}),$ for $\varepsilon \rightarrow 0,$ depending on a
small parameter $\varepsilon ,0\leq \varepsilon \ll 1,$ we  introduce
"small" polarizations of the type $\eta =1+\varepsilon \chi (u...)$ and $_{\eta
}N_{i_{s}}^{a_{s}}=\varepsilon n_{i_{s}}^{a_{s}}(u...).$

The decoupling property of modified Einstein equations can be proven in the
simplest form for certain ansatz with at least one Killing symmetry on a
spacetime coordinate and certain parameterizations of nonholonomic $r$--jet
prolongations. \ We consider target metrics of type (\ref{dm}) parameterized
in the form
\begin{eqnarray}
\ \ _{\omega }^{s}\mathbf{g} &=&\ g_{i}(x^{k})dx^{i}\otimes
dx^{i}+h_{a}(x^{k},y^{4})\mathbf{e}^{a}\otimes \mathbf{e}^{b}+  \label{ansk}
\\
&&h_{a_{1}}(u^{\alpha },\zeta ^{6})\ \mathbf{e}^{a_{1}}\otimes \mathbf{e}%
^{a_{1}}+h_{a_{2}}(u^{\alpha _{1}},\zeta ^{8})\ \mathbf{e}^{a_{2}}\otimes
\mathbf{e}^{b_{2}}+....+\ h_{a_{s}}(\ u^{\alpha _{s-1}},\zeta ^{a_{s}})%
\mathbf{e}^{a_{s}}\otimes \mathbf{e}^{a_{s}},  \notag \\
\mbox{ where } \mathbf{e}^{a} &=&dy^{a}+N_{i}^{a}dx^{i},\mbox{\ for \ }%
N_{i}^{3}=n_{i}(x^{k},y^{4}),N_{i}^{4}=w_{i}(x^{k},y^{4});  \notag \\
\mathbf{e}^{a_{1}} &=&d\zeta ^{a_{1}}+N_{\alpha }^{a_{1}}du^{\alpha },%
\mbox{\ for \ }N_{\alpha }^{5}=\ ^{1}n_{\alpha }(u^{\beta },\zeta
^{6}),N_{\alpha }^{6}=\ ^{1}w_{\alpha }(u^{\beta },\zeta ^{6});  \notag \\
\mathbf{e}^{a_{2}} &=&d\zeta ^{a_{2}}+N_{\alpha _{1}}^{a_{2}}du^{\alpha
_{1}},\mbox{\ for \ }N_{\alpha _{1}}^{7}=\ ^{2}n_{\alpha _{1}}(u^{\beta
_{1}},\zeta ^{8}),N_{\alpha _{1}}^{8}=\ ^{2}w_{\alpha }(u^{\beta _{1}},\zeta
^{8});  \notag \\
&&....  \notag \\
\mathbf{e}^{a_{s}} &=&d\zeta ^{a_{s}}+N_{\alpha _{s-1}}^{a_{s}}du^{\alpha
_{s-1}},\mbox{\ for \ }N_{\alpha _{s-1}}^{4+2s-1}=\ ^{s}n_{\alpha
_{1}}(u^{\beta _{s-1}},\zeta ^{4+2s}),N_{\alpha _{1}}^{4+2s}=\ ^{s}w_{\alpha
}(u^{\beta _{s-1}},\zeta ^{4+2s}).  \notag
\end{eqnarray}%
Such ansatz also contains  a jet Killing vector $\partial /\partial \zeta
^{s-1}$ because the jet coordinate $\zeta ^{s-1}$ is not contained in the
coefficients of such metrics.

\subsection{Decoupling in nonholonomic $r$--jet shell variables}

Let us consider an ansatz (\ref{ansk}) with $\ g_{i}(x^{k})=\epsilon
_{i}e^{\psi (x^{k})},$ where $\epsilon _{i}=\pm 1,$ and the $\gamma ,\alpha
,\beta $--coefficients are defined by respective generating functions $\phi
,\ ^{s}\phi $ following the formulae
\begin{eqnarray}
\gamma &:= &\partial _{4}(\ln |h_{3}|^{3/2}/|h_{4}|),\ \ \alpha
_{i}=(\partial _{i}\phi )(\partial _{4}h_{3})/2h_{3},\ \beta =(\partial
_{4}\phi )(\partial _{4}h_{3})/2h_{3},  \label{ca1} \\
&&\mbox{ for generating function }\phi =\ln \left\vert \partial _{4}h_{3}/%
\sqrt{|h_{3}h_{4}|}\right\vert ,  \label{c1} \\
\ ^{1}\gamma &:=&\eth _{6}(\ln |h_{5}|^{3/2}/|h_{6}|),\ ^{1}\alpha _{\tau
}=(\eth _{\tau }\ ^{1}\phi )(\eth _{6}h_{5})/2h_{5},\ ^{1}\beta =(\eth
_{\tau }\ ^{1}\phi )(\eth _{6}h_{5})/2h_{5},  \label{ca2} \\
&&\mbox{ for r--jet generating function }\ ^{1}\phi =\ln \left\vert (\eth
_{6}h_{5})/\sqrt{|h_{5}h_{6}|}\right\vert ,  \label{c2} \\
\ ^{2}\gamma &:= &\eth _{8}(\ln |h_{7}|^{3/2}/|h_{8}|),\ \ ^{2}\alpha _{\tau
_{1}}=(\eth _{\tau _{1}}\ ^{2}\phi )(\eth _{8}h_{7})/2h_{7},\ \ ^{2}\beta
=(\eth _{\tau _{1}}\ ^{2}\phi )(\eth _{8}h_{7})/2h_{7},  \notag \\
&&\mbox{ for r--jet generating function }\ ^{s}\phi =\ln \left\vert \eth
_{2s}h_{2s-1}/\sqrt{|h_{2s-1}h_{2s}|}\right\vert ,  \notag \\
&&....,  \notag
\end{eqnarray}%
with nonzero $\partial _{4}\phi ,\partial _{4}h_{a},$ $\eth _{6}\ ^{1}\phi
,\eth _{6}h_{a_{1}},\eth _{2s}\ ^{2}\phi ,\eth _{2s}h_{a_{2}}.$

We assume that via the N--adapted frame transformations the sources $\mathbf{\Upsilon
}_{\beta _{s}\delta _{s}}$ (\ref{sourse}) in the equations (\ref{geq1}), (\ref%
{geq2}), (\ref{geq4}) and (\ref{geqs}) can be parameterized in the form
\begin{eqnarray}
\mathbf{\Upsilon }_{1}^{1} &=&\mathbf{\Upsilon }_{2}^{2}=\ ^{v}\Lambda
(x^{k},y^{4})+\ _{1}^{v}\Lambda (u^{\beta },\zeta ^{6})+\ _{2}^{v}\Lambda
(u^{\beta _{1}},\zeta ^{8}),\mathbf{\Upsilon }_{3}^{3}=\mathbf{\Upsilon }%
_{4}^{4}=\Lambda (x^{k})+\ _{1}^{v}\Lambda (u^{\beta },\zeta ^{6})+\
_{2}^{v}\Lambda (u^{\beta _{1}},\zeta ^{8}),  \notag \\
\mathbf{\Upsilon }_{5}^{5} &=&\mathbf{\Upsilon }_{6}^{6}=\Lambda (x^{k})+\
^{v}\Lambda (x^{k},y^{4})+\ _{2}^{v}\Lambda (u^{\beta _{1}},\zeta ^{8}),%
\mathbf{\Upsilon }_{7}^{7}=\mathbf{\Upsilon }_{8}^{8}=\Lambda (x^{k})+\
^{v}\Lambda (x^{k},y^{4})+\ _{1}^{v}\Lambda (u^{\beta },\zeta ^{6}).
\label{dsource}
\end{eqnarray}%
Such parameterizations are very general for (effective) $\mathbf{\Upsilon }%
_{\beta _{s}\delta _{s}}$ with arbitrary contributions from Ricci soliton or
(modified) gravity and matter fields and further $r$-jet generalizations
when the N--adapted coefficients are modeled in certain systems of
references by "polarized" cosmological constants $\Lambda (x^{k}),\
^{v}\Lambda (x^{k},y^{4}),\ _{1}^{v}\Lambda (u^{\beta },\zeta ^{6}),$ $\
_{2}^{v}\Lambda (u^{\beta _{1}},\zeta ^{8})$ etc. For certain models of
 gravity in extra dimensions , we  consider, for simplicity, $\Lambda =\
^{v}\Lambda =$ $\ _{1}^{v}\Lambda =\ _{2}^{v}\Lambda =\ ^{\circ }\Lambda
=const.$ Such effective sources can always be introduced by re--defining the
generating functions (see below) for non--vacuum configurations.

\begin{theorem}
\label{tdecoupling}For a general off--diagonal ansatz (\ref{ansk}) with
Killing symmetry $\partial _{3}$ and N--adapted parameterizations for
generating functions (\ref{ca1})--(\ref{c2}) and sources (\ref{dsource}),
the system of modified Einstein equations (see N--adapted equations (\ref%
{equ1})--(\ref{equ4d2s})) decouple in the following form:\newline
For a nonholonomic 2+2 spacetime splitting,
\begin{equation}
\epsilon _{1}\partial _{11}\psi +\epsilon _{2}\partial _{22}\psi =2\Lambda
(x^{k}),  \label{e1}
\end{equation}%
\begin{eqnarray}
(\partial _{4}\phi )(\partial _{4}h_{3}) &=&2h_{3}h_{4}\ ^{v}\Lambda
(x^{k},y^{4}),\   \label{e2} \\
\partial _{44}n_{i}+\gamma \partial _{4}n_{i} &=&0,  \label{e3} \\
\beta w_{i}-\alpha _{i} &=&0;\   \label{e4}
\end{eqnarray}%
and, on nonholonomic $r$--jet variables,%
\begin{eqnarray}
(\eth _{6}\ ^{1}\phi )(\eth _{6}h_{5}) &=&2h_{5}h_{6}\ _{1}^{v}\Lambda
(u^{\beta },\zeta ^{6}),  \label{e2aa} \\
\eth _{66}^{2}\ ^{1}n_{\tau }+\ ^{1}\gamma \eth _{6}\ ^{1}n_{\tau } &=&0,
\label{e3aa} \\
\ ^{1}\beta \ ^{1}w_{\tau }-\ ^{1}\alpha _{\tau } &=&0,\   \label{e4aa} \\
&&...  \notag
\end{eqnarray}%
\begin{eqnarray}
(\eth _{2s}\ ^{s}\phi )(\eth _{2s}h_{2s-1}) &=&2h_{2s-1}h_{2s}\
_{s}^{v}\Lambda (u^{\beta _{s-1}},\zeta ^{2s}),  \notag \\
\eth _{2s\ 2s}^{2}\ ^{2}n_{\tau _{1}}+\ ^{2}\gamma \eth _{2s}\ ^{2}n_{\tau
_{1}} &=&0,  \notag \\
\ ^{2}\beta \ ^{2}w_{\tau _{1}}-\ ^{2}\alpha _{\tau _{1}} &=&0,\
\label{e4dd}
\end{eqnarray}%
\begin{equation*}
....
\end{equation*}
\end{theorem}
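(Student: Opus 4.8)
The plan is to prove Theorem~\ref{tdecoupling} by direct computation of the N--adapted coefficients of the Ricci d--tensor $\widehat{\mathbf{R}}_{\alpha_s\beta_s}$ for the canonical d--connection $\ ^{s}\widehat{\mathbf{D}}$ on the ansatz (\ref{ansk}), and then reading off that the resulting PDE system separates shell by shell. First I would work out the 2+2 ``base'' block: for a metric of the form $g_i=\epsilon_i e^{\psi(x^k)}$, $h_a=h_a(x^k,y^4)$ with $N_i^3=n_i$, $N_i^4=w_i$ and Killing symmetry $\partial_3$, one substitutes the d--metric into the formulas for $\widehat{\Gamma}^{\alpha}_{\ \beta\gamma}$ (the $\widehat{L}$'s and $\widehat{C}$'s listed in the Theorem--Definition above) and then into (\ref{dc})--(\ref{dricci}). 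Because of the 2+2 split the curvature d--tensor is block diagonal in the obvious way, so the $hh$--component of $\widehat{\mathbf{R}}ic$ involves only $\psi$ and gives $\epsilon_1\partial_{11}\psi+\epsilon_2\partial_{22}\psi$; the $vv$--component involves $h_3,h_4$ and, after introducing the abbreviations $\phi,\gamma,\alpha_i,\beta$ of (\ref{ca1})--(\ref{c1}), collapses to the single equation $(\partial_4\phi)(\partial_4 h_3)=2h_3h_4\,{}^v\Lambda$; and the mixed $\widehat{\mathbf{R}}_{3i},\widehat{\mathbf{R}}_{4i}$ components yield the linear second--order ODE in $y^4$ for $n_i$ and the algebraic relation $\beta w_i-\alpha_i=0$ for $w_i$. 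Equating these to the diagonal source $\mathbf{\Upsilon}^{\alpha}_{\ \beta}$ of (\ref{sourse})--(\ref{dsource}) and using $\widehat{\mathbf{R}}_{\beta\gamma}=0$ for $\beta\neq\gamma$ produces exactly (\ref{e1})--(\ref{e4}).

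Next I would observe that the whole point of the shell parameterization (\ref{ncshell})--(\ref{naders}) is that the canonical d--connection is ``block triangular'': the connection and curvature coefficients carrying an index on shell $s$ depend on the shell--$s$ metric data $h_{a_s},\ ^{s}N$ and on lower--shell data only through the N--elongated frames, but the shell--$s$ \emph{diagonal} Ricci components $\widehat{\mathbf{R}}_{a_sb_s}$ have exactly the same functional form in $(h_{2s-1},h_{2s},\ ^{s}n_\tau,\ ^{s}w_\tau)$ with $\eth_{2s}$ in place of $\partial_4$ as the base block has in $(h_3,h_4,n_i,w_i)$ with $\partial_4$. This is the content of the earlier Corollary--Definition on prolongations of torsion and curvature: formulas (\ref{torsshell})--(\ref{curvshell}) are literal shell--by--shell copies of (\ref{dtordcurv}). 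So each successive pair (jet--version of (\ref{e2}) together with the $n$- and $w$-equations) is obtained by the substitution $4\mapsto 2s$, $h_3\mapsto h_{2s-1}$, $h_4\mapsto h_{2s}$, $\phi\mapsto\ ^{s}\phi$, $n_i\mapsto\ ^{s}n_\tau$, $w_i\mapsto\ ^{s}w_\tau$, and the source split (\ref{dsource}) guarantees that the shell--$s$ equation only sees the ``$\ _{s}^{v}\Lambda$'' piece. I would present the $s=1$ shell in full as the model computation, then state that $s=2,3,\dots$ follow by the identical argument, which gives (\ref{e2aa})--(\ref{e4dd}).

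The main technical obstacle is the bookkeeping in verifying that the off--diagonal Ricci components really do vanish under the parametrization (\ref{ca1})--(\ref{c2}), i.e. that the choice of $\phi$ as $\ln|\partial_4 h_3/\sqrt{|h_3h_4|}|$ is precisely what is needed so that the $\widehat{\mathbf{R}}_{\beta\gamma}$ with $\beta\neq\gamma$ either are automatically zero (by the Killing symmetry and the $h$-$v$ form of the metric) or reduce to $\beta w_i-\alpha_i=0$; and, on the jet side, that the Neijenhuis--type terms $\ ^{J}N^{a_s}_{i_sj_s}$ and the cross--shell nonholonomy coefficients (\ref{anhrel1}) do not contaminate the diagonal shell--$s$ equations. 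I would handle this by quoting the explicit coefficient formulas promised in Appendix~\ref{sscoefcurv} (the two Theorems on $\mathbf{T}^{\alpha_s}_{\ \beta_s\gamma_s}$ and $\mathbf{R}^{\alpha_s}_{\ \beta_s\gamma_s\delta_s}$) rather than rederiving them, so the proof of Theorem~\ref{tdecoupling} reduces to: (i) insert the ansatz into those formulas, (ii) collect terms using the definitions (\ref{ca1})--(\ref{c2}), (iii) set the diagonal components equal to the respective $\Lambda$-blocks and the off--diagonal ones to zero. The remaining steps are routine but lengthy algebra, so in the body I would display the base block in detail, indicate the $s=1$ check, and defer the full coefficient manipulations to the Appendix.

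\begin{proof}
We sketch the computation; full coefficient formulas are collected in Appendix~\ref{sscoefcurv}.

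\textbf{Base $2+2$ block.} Insert the ansatz (\ref{ansk}) with $g_i=\epsilon_i e^{\psi}$, $h_a=h_a(x^k,y^4)$, $N_i^3=n_i$, $N_i^4=w_i$ and Killing symmetry $\partial_3$ into the N--adapted coefficients $\widehat{\mathbf{\Gamma}}^{\alpha}_{\ \beta\gamma}=(\widehat{L}^i_{jk},\widehat{L}^a_{bk};\widehat{C}^i_{jc},\widehat{C}^a_{bc})$ and then into (\ref{dc})--(\ref{dricci}). The $h$--block of $\widehat{\mathbf{R}}ic$ depends only on $\psi$ and equals $\tfrac12(\epsilon_1\partial_{11}\psi+\epsilon_2\partial_{22}\psi)\,g_{ij}$; setting it equal to $\Lambda(x^k)g_{ij}$ gives (\ref{e1}). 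The $v$--block $\widehat{\mathbf{R}}_{ab}$, expressed through $\phi=\ln|\partial_4h_3/\sqrt{|h_3h_4|}|$ and $\gamma,\alpha_i,\beta$ of (\ref{ca1})--(\ref{c1}), reduces to $\widehat{\mathbf{R}}_{33}=\widehat{\mathbf{R}}_{44}\sim(\partial_4\phi)(\partial_4h_3)/(2h_3h_4)$ times $g_{ab}$; equating with $\ ^{v}\Lambda\, g_{ab}$ yields (\ref{e2}). The mixed components $\widehat{\mathbf{R}}_{4i}$ and $\widehat{\mathbf{R}}_{3i}$, which must vanish by (\ref{geq3}), become respectively the linear ODE $\partial_{44}n_i+\gamma\,\partial_4n_i=0$, i.e. (\ref{e3}), and the algebraic constraint $\beta w_i-\alpha_i=0$, i.e. (\ref{e4}). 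All remaining off--diagonal components vanish identically by the Killing symmetry and the $h$-$v$ form of $\ ^{s}\mathbf{g}$.

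\textbf{Shell prolongation.} By the Corollary--Definition on prolongations, the torsion and curvature d--tensors on $\ ^{s}\mathbf{V}$ are the shell--by--shell copies (\ref{torsshell})--(\ref{curvshell}) of (\ref{dtordcurv}), with $\mathbf{e}_{i_s}=\partial_{x^{i_s}}-N^{a_s}_{i_s}\eth_{a_s}$ and $\eth_{a_s}=\partial/\partial\zeta^{a_s}$ playing the roles of $\mathbf{e}_i$ and $e_a$. Hence the diagonal Ricci components on shell $s$, namely $\widehat{\mathbf{R}}_{a_sb_s}$, have the same functional form in $(h_{2s-1},h_{2s},\ ^{s}n_\tau,\ ^{s}w_\tau)$, with $\eth_{2s}$ in place of $\partial_4$ and $\ ^{s}\phi=\ln|\eth_{2s}h_{2s-1}/\sqrt{|h_{2s-1}h_{2s}|}|$, $\ ^{s}\gamma,\ ^{s}\alpha_\tau,\ ^{s}\beta$ of (\ref{ca2})--(\ref{c2}), as the base $v$--block. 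The mixed components $\widehat{\mathbf{R}}_{a_s i_{s-1}}$ give the corresponding $n$- and $w$-equations, and the cross--shell nonholonomy coefficients $W^{b_s}_{i_sa_s}=\partial_{a_s}N^{b_s}_{i_s}$ and Neijenhuis terms $\ ^{J}N^{a_s}_{i_sj_s}$ enter only the off--diagonal components, which are killed by (\ref{geqs1}) together with the constraints (\ref{lcconstr}). Finally the source split (\ref{dsource}) ensures that the shell--$s$ diagonal equation couples only to the block $\ _{s}^{v}\Lambda$. Applying this for $s=1$ gives (\ref{e2aa})--(\ref{e4aa}), and iterating for $s=2,3,\dots$ gives (\ref{e4dd}) and the dots.

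$\square$
\end{proof}
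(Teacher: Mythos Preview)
Your approach is essentially the paper's own: compute the N--adapted Ricci d--tensor components for the ansatz (\ref{ansk}) (done in the Appendix Lemma, formulas (\ref{equ1})--(\ref{equ4d2s})), observe their block--recursive structure, and equate to the parameterized sources (\ref{dsource}) to read off (\ref{e1})--(\ref{e4dd}). Two small corrections are in order. First, you have swapped the mixed components: from (\ref{equ3})--(\ref{equ4}) it is $\widehat{R}_{3k}$ that gives the $n$--equation (\ref{e3}) and $\widehat{R}_{4k}$ that gives the $w$--equation (\ref{e4}), not the other way round. Second, and more important conceptually, do \emph{not} invoke the LC--constraints (\ref{lcconstr}) in this proof: Theorem~\ref{tdecoupling} is stated and proved for the canonical d--connection $\ ^{s}\widehat{\mathbf{D}}$ with its nontrivial induced torsion, and the zero--torsion conditions are imposed only later (section~\ref{sslc}) to extract LC--configurations from the already--constructed integral varieties. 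The absence of cross--shell contamination in the diagonal Ricci components is a consequence of the specific Killing ansatz (\ref{ansk}) and the canonical d--connection formulas (\ref{candcon}), not of (\ref{lcconstr}); the paper simply exhibits the explicit Ricci coefficients (\ref{equ1})--(\ref{equ4d2s}) and observes that each shell's block depends only on that shell's data $(h_{3+2s},h_{4+2s},\ ^{s}n,\ ^{s}w)$.
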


\begin{proof}
It is a tedious technical proof following an explicit calculation of
nontrivial components of the Ricci d--tensor for the mentioned ansatz and
parameterizations of (effective) sources and parametrization functions (see
Appendix \ref{assdecoup}). Such equations are straightforward consequences
of the symmetries of the Einstein and the Ricci d--tensors (\ref{sourc1}) for the
canonical d--connection $\ ^{s}\widehat{\mathbf{D}}.$

$\square $
\end{proof}

\vskip5pt

Let us explain in brief the decoupling property for the nonholonomic 4--d and $r$%
--jet equations (\ref{e1})-- (\ref{e4dd}):

\begin{enumerate}
\item The equation (\ref{e1}) is just a 2-d Laplace, or d'Alambert equation
(depending on the prescribed signature for $\varepsilon _{i}=\pm 1)$ with
solutions determined by any source $\Lambda (x^{k}).$

\item The equation (\ref{e2}) contains only the partial derivative $\partial
_{4}$ and constitutes, together with the algebraic formula for the coefficient (%
\ref{c1}), a system of two equations for four functions: $h_{3}(x^{i},y^{4}),$ $%
h_{4}(x^{i},y^{4})$ and $\phi (x^{i},y^{4})$ and source $\ ^{v}\Lambda
(x^{k},y^{4}).$ Prescribing two any such functions, we can define
(integrating w.r.t. to the coordinate $y^{4},$ or differentiating w.r.t.
this coordinate) two other functions of the same type. Such functions can be re--defined
in order to transform $\ ^{v}\Lambda (x^{k},y^{4})$ into an effective
constant. The function $\phi (x^{i},y^{4})$ and/ or any its functional can
be considered as a generating function which can be prescribed following
certain geometric or physical arguments on symmetries, boundary conditions,
any explicit singular or non--singular behaviour, smooth class conditions,
stochastic conditions, topological configurations etc. This allows us to
compute $h_{a}(x^{i},y^{4}) $ in explicit form. If necessary, we can
consider, for instance, the coefficient $h_{3}(x^{i},y^{4})$ (or $%
h_{4}(x^{i},y^{4})$) to be a generating function and compute $h_{4}$ (or $%
h_{3}$) and $\phi $ for a given source $\ ^{v}\Lambda (x^{k},y^{4}).$ We
note that if we consider vacuum solutions for $\ ^{s}\widehat{\mathbf{D}}$
with $\ ^{v}\Lambda =0$ in (\ref{e2}), we are constrained to study only
configurations with N--adapted coefficients $\partial _{4}h_{3}=0$ and/or $%
\partial _{4}\phi =0.$ Such solutions are also important to study geometric
and physical properties of vacuum off--diagonal configurations and their
possible diagonal limits. The decoupling property is explicit for such
vacuum and non--vacuum equations because they contain only two
coefficients of the metric, $h_{a};$ and the coupling with other diagonal and/or
off--diagonal coefficients (like $g_{i}$, $w_{i},$ $n_{i},$ and/or jet
prolongations) are not involved.

\item Having computed the coefficient $\gamma $ (\ref{ca1}), the
N--connection coefficients $n_{i}$ can be defined after two integrations on $%
y^{4}$ in (\ref{e3}). This defines a part of N--connection coefficients. A
value $n_{i}$ does not depend on other coefficients of a d--metric except for
the coefficient $\gamma $ determined by $h_{a}.$

\item Using $h_{3}$ and $\phi $ from the  previous discussion, we compute the
coefficients $\alpha _{i}$ and $\beta ,$ see (\ref{ca1}), which allows us to
define $w_{i}$ from the algebraic equations (\ref{e4}). Such $w_{i}$ define
a different part of the N--connection coefficients, can determine off--diagonal
coefficients of the metric and  may also contribute to the diagonal
coefficients if such metrics are written in coordinate bases. Nevertheless,
the functions $w_{i}$ are independent from other coefficients of the
d--metric and N--connection with respect to N--adapted frames.

\item The procedure in items 2-4 can be repeated step by step for any shell with $r$%
--jet variables. The equations (\ref{e2aa})--(\ref{e4aa}) are completely
similar to (\ref{e2})--(\ref{e4}) but contain additional dependencies on jet
coordinates and derivatives $\eth $ that depend on respective jet coordinates. For
instance, the equation (\ref{e2aa}) and formula (\ref{c2}) with partial
derivative $\partial _{6}$ are for functions $h_{5}(x^{i},y^{a},\zeta ^{6}),$
$h_{6}(x^{i},y^{a},\zeta ^{6})$ and $\ ^{1}\phi (x^{i},y^{a},\zeta ^{6})$ $\
$and source $\ _{1}^{v}\Lambda (u^{\beta },\zeta ^{6}).$ We can compute any
two such functions integrating w.r.t. the coordinate $\zeta ^{6}$ if two
other ones are prescribed. In a similar form, we follow points 3 and 4 with $%
\ ^{1}\alpha _{\tau },\ ^{1}\beta ,\ ^{1}\gamma ,$ see (\ref{ca2}), and
compute the higher order N--connection coefficients $\ ^{1}n_{\tau }$ and $\
^{1}w_{\tau }.$

\item The splitting property holds on any 2-d shell as it is stated in (\ref%
{e4dd}). Here we note that it is not clear if any splitting of equations
could be proven in general form for 3--d shells. This is partly because  the
topological properties of 2-d and 3--d shells are very different. The
equations of type (\ref{e2}), (\ref{e2aa}) are degenerate for 1--d shells. That is partly
why our AFDM is based on 2+2+2+... splitting which allows  to decouple and
solve such nonlinear systems of PDEs in general form. We can consider in
similar form splitting of type 3+2+2+... for 3--d bases when the point 1
refers to the 3-d Laplace, or d'Alameber equations. For certain configurations,
we can generate extra--dimension and jet configurations by imbedding 1,2 and
3 dimensional metrics into some d--metric configurations with the splitting
2+2+2+... In arbitrary systems of reference, such effective vacuum and
non--vacuum nonholonomic dynamical systems depend on spacetime coordinates.
They may be of nonlinear evolution type, or Ricci soliton fixed
configurations, for respectively prescribed signatures. Nevertheless, the
assumption on 2--d shall is a condition imposing certain (2-d) topological
restrictions on the jet variables extensions.
\end{enumerate}

Finally, we emphasize that the splitting property of nonholonomic and
holonomic Einstein equations for higher dimensions was proven in \cite{vex3}%
. Geometric and physical models with jet variables and extra dimension
coordinates are similar in certain sense but with different jet symmetry
conditions.\footnote{In this work, the concept of jet symmetry of some classes of solutions for  (modified) Einstein equations is used in a sense that it generalizes certain Lie group, Killing type, or anholonomy relations by introducing jet variables. In particular, we obtain elements of the type $L^r_{n,m}$  introduced in section \ref{ssholj}, and studied in details in Ref. \cite{kolar,kures98,kures01}.} Formal 2+2+2+... splitting are possible only in adapted
nonholonomic systems of reference and the derived nonholonomic
dynamical/evolution equations encode a different type of interior gauge like
dynamics.

\subsection{Jet integral varieties for off--diagonal metrics and generalized
connections}

The system of nonlinear PDEs (\ref{e1})-- (\ref{e4dd}) in spacetime and jet
variables can be integrated in general forms for any 2--d shell $\dim \ ^{s}%
\mathbf{V}\geq 4.$ We note that the coefficients $g_{i}=\epsilon _{i}e^{\psi
(x^{k})}$ are defined by solutions of corresponding Laplace/ d'Alambert
equation (\ref{e1}) that do not contain jet coordinates in N--adapted
frames. General solutions will be considered for "vertical" spacetime and
jet variables.

\subsubsection{ 4--d non--vacuum spacetime nonholonomic configurations}

We can solve (\ref{e2}) and (\ref{c1}) for any $\partial _{4}\phi \neq
0,h_{a}\neq 0$ and $\ ^{v}\Lambda \neq 0.$ Let us re-write respectively
the relevant equations,
\begin{equation}
\ h_{3}h_{4}=(\partial _{4}\phi )(\partial _{4}h_{3})/2\ ^{v}\Lambda
\mbox{
and }|h_{3}h_{4}|=(\partial _{4}h_{3})^{2}e^{-2\phi }.  \label{eq4bb}
\end{equation}%
By considering a new generating function $\Phi :=e^{\phi }$ and introducing the
first equation into the second one, we get
\begin{equation}
|\partial _{4}h_{3}|=\frac{\partial _{4}(e^{2\phi })}{4|\ ^{v}\Lambda |}=%
\frac{\partial _{4}[\Phi ^{2}]}{4|\ ^{v}\Lambda |}.  \label{aux01}
\end{equation}%
Integrating w.r.t. the coordinate $y^{4},$ we find

\begin{equation}
h_{3}[\Phi ,\ ^{v}\Lambda ]=\ ^{0}h_{3}(x^{k})+\frac{\epsilon _{3}\epsilon
_{4}}{4}\int dy^{4}\frac{\partial _{4}(\Phi ^{2})}{\ ^{v}\Lambda },
\label{h3aux}
\end{equation}%
where $\ ^{0}h_{3}=\ ^{0}h_{3}(x^{k})$ is an integration function and $%
\epsilon _{3},\epsilon _{4}=\pm 1.$ To compute $h_{4}$ we can use the first
equation in (\ref{eq4bb}) when
\begin{equation}
h_{4}[\Phi ,\ ^{v}\Lambda ]=\frac{(\partial _{4}\phi )}{\ ^{v}\Lambda }%
\partial _{4}(\ln \sqrt{|h_{3}|})=\frac{1}{2\ ^{v}\Lambda }\frac{\partial
_{4}\Phi }{\Phi }\frac{\partial _{4}h_{3}}{h_{3}}.  \label{h4aux}
\end{equation}

The formulae (\ref{h3aux}) and (\ref{h4aux}) for $h_{a}[\Phi ,\ ^{v}\Lambda
] $ can be re--parameterized in a more convenient form with an effective
cosmological constant $\widetilde{\Lambda }_{0}=const\neq 0.$ Let us
re--define the generating function $\Phi \rightarrow \tilde{\Phi},$ when $%
\frac{\partial _{4}[\Phi ^{2}]}{\ ^{v}\Lambda }=\frac{\partial _{4}[\tilde{%
\Phi}^{2}]}{\ \tilde{\Lambda}_{0}},$ i.e.%
\begin{equation}
\Phi ^{2}=\widetilde{\Lambda }_{0}^{-1}\int dy^{4}(\ ^{v}\Lambda )\partial
_{4}(\tilde{\Phi}^{2})\mbox{
and }\tilde{\Phi}^{2}=\widetilde{\Lambda }_{0}\int dy^{4}(\ ^{v}\Lambda
)^{-1}\partial _{4}(\Phi ^{2}).  \label{rescgf}
\end{equation}%
By introducing the integration function$\ ^{0}h_{3}(x^{k})$ and $\epsilon _{3},$
 $\epsilon _{4},$ in $\Phi $ and respectively, in $\ ^{v}\Lambda ,$ we
express the solutions for $h_{a}$ as functionals on $[\tilde{\Phi},%
\widetilde{\Lambda }_{0},\Xi ],$
\begin{equation}
h_{3}[\tilde{\Phi},\widetilde{\Lambda }_{0}]=\frac{\tilde{\Phi}^{2}}{4%
\widetilde{\Lambda }_{0}}\mbox{ and
}h_{4}[\tilde{\Phi},\widetilde{\Lambda }_{0},\Xi ]=\frac{(\partial _{4}%
\tilde{\Phi})^{2}}{\Xi }.  \label{solha}
\end{equation}%
The functional $\Xi \lbrack \ ^{v}\Lambda ,\tilde{\Phi}]=\int dy^{4}(\
^{v}\Lambda )\partial _{4}(\tilde{\Phi}^{2})$ in the last formula can be
considered as a re--defined source for a prescribed generating function $%
\tilde{\Phi},$ $\ ^{v}\Lambda \rightarrow \Xi ,$ when $\ ^{v}\Lambda
=\partial _{4}\Xi /\partial _{4}(\tilde{\Phi}^{2}) $ (it contains
information on the Ricci soliton contribution, and/or effective  energy-momentum
tensor of matter in modified gravity theories). We can work with a couple
of generating data, $(\Phi ,\ ^{v}\Lambda )$ and $(\tilde{\Phi},\Xi ),$
related by formulae (\ref{rescgf}) for a prescribed effective cosmological
constant $\tilde{\Lambda}_{0}.$

Using the values $h_{a}$ (\ref{solha}), we compute the coefficients $\alpha
_{i},\beta $ and $\gamma $ from (\ref{ca1}). The resulting solutions for
N--coefficients, i.e of respective equations (\ref{e3}) and (\ref{e4}), can
be expressed as,
\begin{eqnarray}
n_{k} &=&\ _{1}n_{k}+\ _{2}n_{k}\int dy^{4}h_{4}/(\sqrt{|h_{3}|})^{3}=\
_{1}n_{k}+\ _{2}\widetilde{n}_{k}\int dy^{4}(\partial _{4}\tilde{\Phi})^{2}/%
\tilde{\Phi}^{3}\Xi ,\mbox{ and }  \notag \\
w_{i} &=&\partial _{i}\phi /\partial _{4}\phi =\partial _{i}\Phi /\partial
_{4}\Phi =\partial _{i}\Phi ^{2}/\partial _{4}\Phi ^{2}=\int dy^{4}\partial
_{i}[(\ ^{v}\Lambda )\partial _{4}(\tilde{\Phi}^{2})]/(\ ^{v}\Lambda
)\partial _{4}(\tilde{\Phi}^{2})=\partial _{i}\Xi /\partial _{4}\Xi ,
\label{solhn}
\end{eqnarray}%
where $\ _{1}n_{k}(x^{i})$ and $\ _{2}n_{k}(x^{i}),$ or $_{2}\widetilde{n}%
_{k}(x^{i})=8\ _{2}n_{k}(x^{i})|\widetilde{\Lambda }|^{3/2},$ are
integration functions. \ Putting together the formulae for the coefficients (\ref%
{solha})-(\ref{solhn}), we prove:

\begin{theorem}
\label{th3.2}The system of nonlinear PDEs (\ref{e1})-- (\ref{e4}) for
non--vacuum 4--d configurations with Killing symmetry $\partial _{3}$ is
integrated in general form by  quadratic line element of the form  $ds_{4[dK]}^{2}=g_{%
\alpha \beta }(x^{k},y^{4})du^{\alpha }du^{\beta }$, when {\small
\begin{equation}
ds_{4[dK]}^{2}=\epsilon _{i}e^{\psi (x^{k})}(dx^{i})^{2}+\frac{\tilde{\Phi}%
^{2}}{4\widetilde{\Lambda }_{0}}[dy^{3}+(\ _{1}n_{k}+_{2}\widetilde{n}%
_{k}\int dy^{4}\frac{(\partial _{4}\tilde{\Phi})^{2}}{\tilde{\Phi}^{3}\Xi }%
)dx^{k}]^{2}+\frac{(\partial _{4}\tilde{\Phi})^{2}}{\Xi }\ [dy^{4}+\frac{%
\partial _{i}\Xi }{\partial _{4}\Xi }dx^{i}]^{2}.  \label{qnk4d}
\end{equation}%
}
\end{theorem}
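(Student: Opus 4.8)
The plan is to assemble the solution directly from the intermediate computations already carried out above, treating the decoupled system of Theorem~\ref{tdecoupling} equation by equation and then collecting the coefficients into a metric ansatz. First I would dispatch the horizontal block: equation (\ref{e1}) is a linear second-order PDE (a $2$-d Laplace or d'Alembert equation according to the signs $\epsilon_i$), whose general solution $\psi(x^k)$ is fixed by the prescribed source $\Lambda(x^k)$; this determines $g_i=\epsilon_i e^{\psi(x^k)}$ and supplies the first term of (\ref{qnk4d}). This step is immediate and carries no jet content.

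The substance of the argument is the vertical block (\ref{e2})--(\ref{e4}). I would combine (\ref{e2}) with the defining relation (\ref{c1}) to obtain the pair (\ref{eq4bb}); substituting $\Phi:=e^{\phi}$ reduces these to the single first-order relation (\ref{aux01}) in $y^4$ for $h_3$, which integrates to (\ref{h3aux}) with integration function ${}^{0}h_3(x^k)$, and then $h_4$ follows algebraically from the first relation in (\ref{eq4bb}), giving (\ref{h4aux}). Next I would perform the re-parameterization (\ref{rescgf}), $\Phi\mapsto\tilde\Phi$ with effective cosmological constant $\widetilde\Lambda_0$ and re-defined source $\Xi=\int dy^4\,({}^v\Lambda)\partial_4(\tilde\Phi^2)$, absorbing ${}^{0}h_3$ and the signs $\epsilon_3,\epsilon_4$, to reach the compact forms (\ref{solha}). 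With $h_a$ known, the coefficients $\gamma,\alpha_i,\beta$ of (\ref{ca1}) become explicit, so (\ref{e3}) is a linear second-order ODE in $y^4$ for $n_i$ that integrates twice to the $n_k$ of (\ref{solhn}) with integration functions ${}_1n_k(x^i),{}_2n_k(x^i)$, while (\ref{e4}) is purely algebraic and yields $w_i=\partial_i\phi/\partial_4\phi=\partial_i\Xi/\partial_4\Xi$. Finally I would collect $g_i,h_a,n_i,w_i$ into the d-metric (\ref{dm}) for $s=0$ and re-express it in the coordinate coframe $dy^a+N_i^a dx^i$, obtaining the quadratic line element (\ref{qnk4d}).

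The only genuinely delicate point — hence the expected main obstacle — is verifying the internal consistency of this construction rather than any single calculation. One must check that the two relations in (\ref{eq4bb}) are compatible, so that $h_4$ defined from the first automatically satisfies the second once $\phi=\ln|\partial_4 h_3/\sqrt{|h_3 h_4|}|$ is used with $h_3$ already fixed; that the change of generating function (\ref{rescgf}) is well posed (invertible modulo the integration data provided $\partial_4\tilde\Phi\neq 0$ and ${}^v\Lambda\neq 0$); and that the off-diagonal Ricci equations (\ref{geq3}) are indeed met, which is precisely where the parameterization (\ref{ca1})--(\ref{c1}) enters, making $\beta w_i-\alpha_i=0$ equivalent to $w_i=\partial_i\phi/\partial_4\phi$. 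One must also track that the ``integration constants'' are functions of the transverse coordinates $x^k$ only and that all divisions ($h_a\neq 0$, $\partial_4\phi\neq 0$, $\Xi\neq 0$) are legitimate on the integral variety considered. Once these checks are in place, direct substitution of (\ref{qnk4d}) into the N--adapted equations (equivalently (\ref{e1})--(\ref{e4})) confirms it is a solution, and since every step merely re-expressed the equations without loss of solutions, the family is \emph{general} for the prescribed Killing ansatz.
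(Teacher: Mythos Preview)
Your proposal is correct and follows essentially the same route as the paper: the paper's proof is nothing more than the line ``Putting together the formulae for the coefficients (\ref{solha})--(\ref{solhn}), we prove,'' i.e.\ precisely the assembly of the intermediate computations (\ref{eq4bb})--(\ref{solhn}) that you describe step by step. If anything, your version is more careful, since you flag the consistency and non-degeneracy checks (compatibility of the two relations in (\ref{eq4bb}), invertibility of (\ref{rescgf}), the conditions $h_a\neq 0$, $\partial_4\phi\neq 0$, $\Xi\neq 0$) that the paper leaves implicit.
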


This line element defines a family of generic off--diagonal solutions with
Killing symmetry $\partial /\partial y^{3}$ of the 4--d nonholonomic
Einstein equations (\ref{geq1})--(\ref{geq3}) with source parametrization of the
type (\ref{dsource}) and for the canonical d--connection $\ \widehat{\mathbf{%
D}}$ (the label $4dK$ is for "nonholonomic 4-d Killing solutions). We can
verify by straightforward computations that the nonholonomy coefficients $%
W_{\alpha \beta }^{\gamma }$ in (\ref{anhrel1}) are not zero if arbitrary
generating function $\phi $ and integration funtions ($\ ^{0}h_{a},\
_{1}n_{k}$ and $\ _{2}n_{k})$ are considered. This means that such metrics
can not be diagonalized by coordinate transforms in a finite spacetime
region. The class of solutions (\ref{qnk4d}) carry  nontrivial canonical
d--torsion (\ref{dt}) which can be proven by using explicit N--adapted
coefficient formulae (\ref{dtors}) for the canonical d--connection (\ref%
{candcon}). In section \ref{sslc}, we shall state additional conditions when
such solutions define LC--configurations. Vacuum nonholonomic spacetime
quadratic elements are considered in section \ref{assvacuum}.

\subsubsection{Nonholonomic $r$--jet prolongations of non--vacuum solutions}

The solutions with jet variables can be constructed in certain forms which
are similar to the 4--d case but achieved by using new classes of generating and
integration functions with dependencies on $r$--jet shell coordinates. We
can generate solutions of the system (\ref{e2aa})--(\ref{e4aa}) with
coefficients (\ref{c2}) and (\ref{ca2}) following a formal analogy when the
generating functions and (effective) sources from the previous paragraph are
generalized in the form: $\partial _{4}\rightarrow \eth _{6},\phi
(x^{k},y^{4})\rightarrow \ ^{1}\phi (u^{\tau },\zeta ^{6}),\ ^{v}\Lambda
(x^{k},y^{4})\rightarrow \ _{1}^{v}\Lambda (u^{\tau },\zeta ^{6})...$ and with
re--defined values $\ \tilde{\Phi}(x^{k},y^{4})\rightarrow $ $\ ^{1}\tilde{%
\Phi}(u^{\tau },\zeta ^{6})$ and $\ \widetilde{\Lambda }_{0}\rightarrow \
^{1}\widetilde{\Lambda }_{0}=const.$

The first set of $r$--jet coefficients of the d--metric are computed to be $%
h_{5}[\ ^{1}\tilde{\Phi},\ ^{1}\widetilde{\Lambda }]=\frac{\ ^{1}\tilde{\Phi}%
^{2}}{4\ ^{1}\widetilde{\Lambda }}$ and $h_{6}[\ ^{1}\tilde{\Phi}]=\frac{%
(\eth _{6}\ ^{1}\tilde{\Phi})^{2}}{\ ^{1}\Xi }$, for $\ ^{1}\Xi =\int d\zeta
^{6}(\ _{1}^{v}\Lambda )\eth _{6}(\ ^{1}\tilde{\Phi}^{2})$ and, for
N--coefficients,
\begin{eqnarray*}
\ ^{1}n_{\tau } &=&\ _{1}^{1}n_{\tau }+\ _{2}^{1}n_{\tau }\int d\zeta
^{6}h_{6}/(\sqrt{|h_{5}|})^{3}=\ _{1}^{1}n_{k}+\ _{2}^{1}\widetilde{n}%
_{k}\int d\zeta ^{6}(\eth _{6}\ ^{1}\tilde{\Phi})^{2}/(\ ^{1}\tilde{\Phi}%
)^{3}\ ^{1}\Xi , \\
\ ^{1}w_{\tau } &=&\partial _{\tau }\ ^{1}\phi /\eth _{6}\ ^{1}\phi
=\partial _{\tau }\ ^{1}\Phi /\eth _{6}\ ^{1}\Phi =\partial _{\tau }\
^{1}\Xi /\eth _{6}\ ^{1}\Xi ,
\end{eqnarray*}%
where $\ ^{0}h_{a_{1}}=\ ^{0}h_{a_{1}}(u^{\tau }),$ $\ _{1}^{1}n_{k}(u^{\tau
})$ and $\ _{2}^{1}n_{k}(u^{\tau })$ are integration functions.

A general class of quadratic line elements with one shell jet variables
defining generic off--diagonal solutions of the nonholonomic canonical
deformations of the Einstein equations can be parameterized in the form
{\small
\begin{equation}
ds_{4+2[dK]}^{2}=ds_{4[dK]}^{2}+\frac{\ ^{1}\tilde{\Phi}^{2}}{4\ ^{1}%
\widetilde{\Lambda }}\left[ d\zeta ^{5}+\left( \ _{1}^{1}n_{k}+\ _{2}^{1}%
\widetilde{n}_{k}\int d\zeta ^{6}\frac{(\eth _{6}\ ^{1}\tilde{\Phi})^{2}}{(\
^{1}\tilde{\Phi})^{3}\ ^{1}\Xi }\right) du^{\tau }\right] ^{2}+\ \frac{(\eth
_{6}\ ^{1}\tilde{\Phi})^{2}}{\ ^{1}\Xi }\left[ d\zeta ^{6}+\frac{\partial
_{\tau }\ \ ^{1}\Xi }{\eth _{6}\ \ ^{1}\Xi }du^{\tau }\right] ^{2},
\label{qnk6d}
\end{equation}%
} where $ds_{4dK}^{2}$ is given by the formulae (\ref{qnk4d}) and $\tau =1,2,3,4.$
This quadratic line element carries the  Killing jet symmetry $\eth _{5}$ (in
N--adapted frames, the metric does not depend on $\zeta ^{5}).$

Extending the constructions to the jet shell $s=2$ with $\eth
_{6}\rightarrow \eth _{8},\ ^{1}\phi (u^{\tau },\zeta ^{6})\rightarrow \
^{2}\phi (u^{\tau _{1}},\zeta ^{8}),\ _{1}^{v}\Lambda (u^{\tau },\zeta
^{6})\rightarrow \ _{2}^{v}\Lambda (u^{\tau _{1}},\zeta ^{8})...,$ with$\
^{2}\tilde{\Phi}(u^{\tau _{1}},\zeta ^{8}),\ ^{2}\widetilde{\Lambda }_{0},$
where $\tau _{1}=1,2,...,$ $5,6,$ we generate off--diagonal solutions in
8--d jet modified gravity model, {\small
\begin{equation}
ds_{4+4[dK]}^{2}=ds_{4+2[dK]}^{2}+\frac{\ ^{2}\tilde{\Phi}^{2}}{4\ ^{2}%
\widetilde{\Lambda }}\left[ d\zeta ^{7}+\left( \ _{1}^{2}n_{\tau _{s}}+\
_{2}^{2}\widetilde{n}_{k}\int d\zeta ^{8}\frac{(\eth _{8}\ ^{2}\tilde{\Phi}%
)^{2}}{(\ ^{2}\tilde{\Phi})^{3}\ ^{2}\Xi }\right) du^{\tau _{1}}\right]
^{2}+\ \frac{(\eth _{8}\ ^{2}\tilde{\Phi})^{2}}{\ ^{2}\Xi }\left[ d\zeta
^{8}+\frac{\partial _{\tau _{1}}\ \ ^{2}\Xi }{\eth _{8}\ \ ^{2}\Xi }du^{\tau
_{1}}\right] ^{2},  \label{qnk8d}
\end{equation}%
} where $ds_{4+2[dK]}^{2}$ is given by (\ref{qnk6d}), $\ ^{2}\Xi =\int
d\zeta ^{8}(\ _{2}^{v}\Lambda )\eth _{8}(\ ^{2}\tilde{\Phi}^{2}),$ and the
corresponding integration/generating functions $\ ^{0}h_{a_{2}}(u^{\tau
_{1}});a_{2}=7,8;\ _{1}n_{\tau _{1}}(u^{\tau _{1}})$ and $\ _{2}n_{\tau
_{1}}(u^{\tau _{1}})$ are integration functions.

Using 2+2+... symmetries of the off--diagonal parameterizations (\ref{qnk6d})
and (\ref{qnk8d}), we can construct exact solutions for arbitrary finite
sets of $r$--jet shells on $\ ^{s}\mathbf{V}$ for a nonholonomic $\mathbf{J}%
^{r}(\mathbf{V},\mathbf{V}^{\prime }).$ The corresponding quadratic line
elements are
\begin{eqnarray}
ds_{4+2s[dK]}^{2} &=&ds_{2+2s[dK]}^{2}+\frac{\ ^{s}\tilde{\Phi}^{2}}{4\ ^{s}%
\widetilde{\Lambda }}\left[ d\zeta ^{3+2s}+\left( \ _{1}^{s}n_{\tau
_{s-1}}+\ _{2}^{s}\widetilde{n}_{\tau _{s-1}}\int d\zeta ^{4+2s}\frac{(\eth
_{4+2s}\ ^{s}\tilde{\Phi})^{2}}{(\ ^{s}\tilde{\Phi})^{3}\ ^{s}\Xi }\right)
du^{\tau _{s-1}}\right] ^{2}  \notag \\
&&+\ \frac{(\eth _{4+2s}\ ^{s}\tilde{\Phi})^{2}}{\ ^{s}\Xi }\left[ d\zeta
^{4+2s}+\frac{\partial _{\tau _{s-1}}\ \ ^{s}\Xi }{\eth _{4+2s}\ \ ^{s}\Xi }%
du^{\tau _{s-1}}\right] ^{2},  \label{qnksd}
\end{eqnarray}
where $\ ^{s}\Xi =\int d\zeta ^{4+2s}(\ _{s}^{v}\Lambda )\eth _{4+2s}(\ ^{s}%
\tilde{\Phi}^{2}),$ and the corresponding integration/generating functions;$\
_{1}^{s}n_{\tau _{s-1}}(u^{\tau _{s-1}})$ and $\ _{2}^{s}n_{\tau
_{s-1}}(u^{\tau _{s-1}})$ are also integration functions.

\subsubsection{ The Levi--Civita conditions}

\label{sslc}All solutions constructed in this section and (for vacuum
configurations) in Appendix define subclasses of generic off--diagonal
metrics (\ref{ansk}) for the canonical d--connections $\ ^{s}\widehat{%
\mathbf{D}}$ and nontrivial nonholonomically induced d--torsion coefficients
$\widehat{\mathbf{T}}_{\ \alpha _{s}\beta _{s}}^{\gamma _{s}}\ $ (\ref{dtors}%
). It is natural to have such torsion fields in theories with gauge like jet
symmetries of gravitational and matter fields. Nevertheless, we can perform $%
r $--jet prolongations in such forms that the nonholonomic induced torsion
vanishes for a subclass of nonholonomic distributions with necessary types
of parameterizations of the generating and integration functions and
sources. In explicit form, we construct LC--configurations by imposing
additional constraints,  including certain "shell by shell  jet variables",
on the d--metric and N--connection coefficients. By straightforward
computations (see details in Refs. \cite{vex3}, and Appendix \ref{zt}, for $%
r $--jet variables), we  verify that if in N--adapted frames
\begin{eqnarray}
\mbox{ for }\ \partial _{4}w_{i} &=&\mathbf{e}_{i}\ln \sqrt{|\ h_{4}|},%
\mathbf{e}_{i}\ln \sqrt{|\ h_{3}|}=0,\partial _{i}w_{j}=\partial _{j}w_{i}%
\mbox{ and }\partial _{4}n_{i}=0;  \notag \\
s &=&1:\ \eth _{6}\ ^{1}w_{\alpha }=\ ^{1}\mathbf{e}_{\alpha }\ln \sqrt{|\
h_{6}|},\ ^{1}\mathbf{e}_{\alpha }\ln \sqrt{|\ h_{5}|}=0,\partial _{\alpha
}\ ^{1}w_{\beta }=\partial _{\beta }\ ^{1}w_{\alpha }\mbox{ and }\eth _{6}\
^{1}n_{\gamma }=0;  \label{zerot} \\
s &=&2:\ \eth _{8}\ ^{2}w_{\alpha _{1}}=\ ^{2}\mathbf{e}_{\alpha _{1}}\ln
\sqrt{|\ h_{8}|},\ ^{2}\mathbf{e}_{\alpha _{1}}\ln \sqrt{|\ h_{7}|}=0,\eth
_{\alpha _{1}}\ ^{2}w_{\beta _{1}}=\eth _{\beta _{1}}\ ^{2}w_{\alpha _{1}}%
\mbox{ and }\eth _{8}\ ^{2}n_{\gamma _{1}}=0;  \notag \\
&&...  \notag
\end{eqnarray}%
the torsion coefficients vanish. For $n$--coefficients, such conditions
are satisfied if $\ _{2}n_{k}(x^{i})=0$ and $\partial _{i}\
_{1}n_{j}(x^{k})=\partial _{j}\ _{1}n_{i}(x^{k});\ _{2}^{1}n_{\alpha
}(u^{\beta })=0$ and $\eth _{\gamma }\ _{1}^{1}n_{\tau }(u^{\beta })=\eth
_{\tau }\ _{1}^{1}n_{\gamma }(u^{\beta });\ _{2}^{2}n_{\alpha _{1}}(u^{\beta
_{1}})=0$ and $\eth _{\gamma _{1}}\ _{1}^{2}n_{\tau _{1}}(u^{\beta
_{1}})=\eth _{\tau _{1}}\ _{1}^{2}n_{\gamma _{1}}(u^{\beta _{1}})$ etc. The
explicit form of solutions of constraints on $w_{k}$ derived from (\ref%
{zerot}) depend on the class of vacuum or non--vacuum metrics and their jet
prolongations.

Let us find explicit solutions for the LC--conditions (\ref{zerot}) using the
spacetime coordinates. Such nonholonomic constraints can not be solved in
explicit form for arbitrary data $(\Phi ,\ ^{v}\Lambda ),$ or $(\tilde{\Phi}%
,\Xi ,\ \tilde{\Lambda}_{0}),$ and for all types of nonzero integration
functions $\ _{1}n_{j}(x^{k})$ and $\ _{2}n_{k}(x^{i}).$ We are able to
write such solutions in explicit form if, using coordinate and frame
transforms, we fix $_{2}n_{k}(x^{i})=0$ and $\ _{1}n_{j}(x^{k})=\partial
_{j}n(x^{k})$ for a function $n(x^{k}).$ We use the property that $\mathbf{e}%
_{i}\Phi =(\partial _{i}-w_{i}\partial _{4})\Phi \equiv 0$ for any $\Phi $
if $w_{i}=\partial _{i}\Phi /\partial _{4}\Phi ,$ see (\ref{solhn}). The
equality $\mathbf{e}_{i}H=(\partial _{i}-w_{i}\partial _{4})H=\frac{\partial
H}{\partial \Phi }(\partial _{i}-w_{i}\partial _{4})\Phi \equiv 0$ holds for
any functional $H[\Phi ].$ We can restrict our construction to a subclass of
generating data $(\Phi ,\ ^{v}\Lambda )$ and $(\tilde{\Phi},\Xi ,\ \tilde{%
\Lambda}_{0})$ that are related via formulae (\ref{rescgf}) when $H=\tilde{\Phi}[\Phi
]$ is a functional which allows us to generate LC--configurations in
explicit form. Using $h_{3}[\tilde{\Phi}]=\tilde{\Phi}^{2}/4\widetilde{%
\Lambda }$ (\ref{solha}) for $H=$ $\tilde{\Phi}=\ln \sqrt{|\ h_{3}|}$, we
satisfy the second condition, $\mathbf{e}_{i}\ln \sqrt{|\ h_{3}|}=0,$ in (%
\ref{zerot}).

Next, we solve the first condition in (\ref{zerot}) for spacetime
coordinates. Taking the derivative $\partial _{4}$ of $\ w_{i}=\partial
_{i}\Phi /\partial _{4}\Phi $ (\ref{solhn}), we obtain%
\begin{equation}
\partial _{4}w_{i}=\frac{(\partial _{4}\partial _{i}\Phi )(\partial _{4}\Phi
)-(\partial _{i}\Phi )\partial _{4}\partial _{4}\Phi }{(\partial _{4}\Phi
)^{2}}=\frac{\partial _{4}\partial _{i}\Phi }{\partial _{4}\Phi }-\frac{%
\partial _{i}\Phi }{\partial _{4}\Phi }\frac{\partial _{4}\partial _{4}\Phi
}{\partial _{4}\Phi }.  \label{fder}
\end{equation}%
Choosing a generating function $\Phi =\check{\Phi}$ for which
\begin{equation}
\partial _{4}\partial _{i}\check{\Phi}=\partial _{i}\partial _{4}\check{\Phi}
\label{explcond}
\end{equation}%
and using (\ref{fder}), we compute $\partial _{4}w_{i}=\mathbf{e}_{i}\ln
|\partial _{4}\Phi |.$ Taking $h_{4}[\Phi ,\ ^{v}\Lambda ]$ (\ref{h4aux}),
we write $\mathbf{e}_{i}\ln \sqrt{|\ h_{4}|}=\mathbf{e}_{i}[\ln |\partial
_{4}\Phi |-\ln \sqrt{|\ ^{v}\Lambda |}]$, (see also the conditions (\ref%
{explcond}) and  $\mathbf{e}_{i}\check{\Phi}=0).$ Using the last two
formulae, we  obtain $\partial _{4}w_{i}=\mathbf{e}_{i}\ln \sqrt{|\ h_{4}|%
}$ if $\mathbf{e}_{i}\ln \sqrt{|\ ^{v}\Lambda |}=0$. This is possible for $\
^{v}\Lambda =const,$ or if $\ ^{v}\Lambda $ can be expressed as a functional
$\ ^{v}\Lambda (x^{i},y^{4})=\ ^{v}\Lambda \lbrack \check{\Phi}].$ Here, we
note that the third condition, $\partial _{i}w_{j}=\partial _{j}w_{i},$ in (%
\ref{zerot}), can be solved for any $\check{A}=\check{A}(x^{k},y^{4})$ for
which $w_{i}=\check{w}_{i}=\partial _{i}\check{\Phi}/\partial _{4}\check{\Phi%
}=\partial _{i}\check{A}.$

In shell jet variables, we can extend the above constructions for the  "shell"
generating functions:
\begin{eqnarray}
s=1: &&\ ^{1}\Phi =\ ^{1}\check{\Phi}(u^{\tau },\zeta ^{6}),\eth
_{6}\partial _{\tau }\ ^{1}\check{\Phi}=\partial _{\tau }\eth _{6}\ ^{1}%
\check{\Phi};\ \partial _{\alpha }\ ^{1}\check{\Phi}/\eth _{6}\ ^{1}\check{%
\Phi}=\partial _{\alpha }\ ^{1}\check{A};\ _{1}^{1}n_{\tau }=\partial _{\tau
}\ ^{1}n(u^{\beta });  \label{expconda} \\
s=2: &&\ ^{2}\Phi =\ ^{2}\check{\Phi}(u^{\tau _{1}},\zeta ^{8}),\eth
_{8}\eth _{\tau _{1}}\ ^{2}\check{\Phi}=\eth _{\tau _{1}}\eth _{8}\ ^{2}%
\check{\Phi};\ \eth _{\alpha _{1}}\ ^{2}\check{\Phi}/\eth _{8}\ ^{2}\check{%
\Phi}=\eth _{\alpha _{2}}\ ^{2}\check{A};\ _{1}^{2}n_{\tau _{1}}=\eth _{\tau
_{1}}\ ^{2}n(u^{\beta _{1}}); ...  \notag
\end{eqnarray}%
We can re--define the generating functions as functionals of the  "inverse hat"
values, when
\begin{eqnarray*}
\check{\Phi}^{2} &=&(\widetilde{\Lambda }_{0})^{-1}\int dy^{4}(\ ^{v}\Lambda
)\partial _{4}(\tilde{\Phi}^{2})\mbox{
and }\tilde{\Phi}^{2}=\widetilde{\Lambda }\int dy^{4}(\ ^{v}\Lambda
)^{-1}\partial _{4}(\check{\Phi}^{2}); \\
\ ^{1}\check{\Phi}^{2} &=&(\ ^{1}\widetilde{\Lambda }_{0})^{-1}\int d\zeta
^{6}(\ _{1}^{v}\Lambda )\eth _{6}(\ ^{1}\tilde{\Phi}^{2})\mbox{
and }\ ^{1}\tilde{\Phi}^{2}=\ ^{1}\widetilde{\Lambda }\int d\zeta ^{6}(\
_{1}^{v}\Lambda )^{-1}\eth _{6}(\ ^{1}\check{\Phi}^{2}); \\
\ ^{2}\check{\Phi}^{2} &=&(\ ^{2}\widetilde{\Lambda }_{0})^{-1}\int d\zeta
^{8}(\ _{2}^{v}\Lambda )\eth _{8}(\ ^{2}\tilde{\Phi}^{2})\mbox{
and }\ ^{2}\tilde{\Phi}^{2}=\ ^{2}\widetilde{\Lambda }\int d\zeta ^{8}(\
_{2}^{v}\Lambda )^{-1}\eth _{8}(\ ^{2}\check{\Phi}^{2}),
\end{eqnarray*}%
and compute the values $\Xi (\tilde{\Phi}[\check{\Phi}]),$ $\ ^{1}\Xi (\ ^{1}%
\tilde{\Phi}[\ ^{1}\check{\Phi}])$ and $\ ^{2}\Xi (\ ^{2}\tilde{\Phi}[\ ^{2}%
\check{\Phi}])$ as in (\ref{qnk8d}). This way, we  construct quadratic
line elements for LC--configurations as
\begin{eqnarray}
ds_{4+2s[dK]}^{2} &=&\epsilon _{i}e^{\psi (x^{k})}(dx^{i})^{2}+\frac{\ (%
\tilde{\Phi}[\check{\Phi}])^{2}}{4\widetilde{\Lambda }_{0}}\left[
dy^{3}+(\partial _{i}\ n)dx^{i}\right] ^{2}+\frac{(\partial _{4}\tilde{\Phi}[%
\check{\Phi}])^{2}}{\Xi (\tilde{\Phi}[\check{\Phi}])}\left[ dy^{4}+(\partial
_{i}\ \check{A})dx^{i}\right] ^{2}  \notag \\
&&+\frac{(\ ^{1}\tilde{\Phi}[\ ^{1}\check{\Phi}])^{2}}{4\ ^{1}\widetilde{%
\Lambda }_{0}}\left[ d\zeta ^{5}+(\partial _{\tau }\ ^{1}n)du^{\tau }\right]
^{2}+\ \frac{(\eth _{6}\ ^{1}\tilde{\Phi}[\ ^{1}\check{\Phi}])^{2}}{\
^{1}\Xi (\ ^{1}\tilde{\Phi}[\ ^{1}\check{\Phi}])}\ \left[ d\zeta
^{6}+(\partial _{\tau }\ ^{1}\check{A})du^{\tau }\right] ^{2}  \label{qellcs}
\\
&&+\frac{(\ ^{2}\tilde{\Phi}[\ ^{2}\check{\Phi}])^{2}}{4\ ^{2}\widetilde{%
\Lambda }_{0}}\left[ d\zeta ^{7}+(\eth _{\tau _{1}}\ ^{2}n)du^{\tau _{1}}%
\right] ^{2}+\ \frac{(\eth _{8}\ ^{2}\tilde{\Phi}[\ ^{2}\check{\Phi}])^{2}}{%
\ ^{2}\Xi (\ ^{2}\tilde{\Phi}[\ ^{2}\check{\Phi}])}\ \left[ d\zeta
^{8}+(\eth _{\tau _{1}}\ ^{2}\check{A})du^{\tau _{1}}\right] ^{2} +....
\notag \\
&&+\frac{(\ ^{s}\tilde{\Phi}[\ ^{s}\check{\Phi}])^{2}}{4\ ^{s}\widetilde{%
\Lambda }_{0}}\left[ d\zeta ^{3+2s}+(\eth _{\tau _{s-1}}\ ^{2}n)du^{\tau
_{s-1}}\right] ^{2}+\ \frac{(\eth _{2+2s}\ ^{s}\tilde{\Phi}[\ ^{s}\check{\Phi%
}])^{2}}{\ ^{s}\Xi (\ ^{s}\tilde{\Phi}[\ ^{s}\check{\Phi}])}\ \left[ d\zeta
^{4+2s}+(\eth _{\tau _{s-1}}\ ^{s}\check{A})du^{\tau _{s-1}}\right] ^{2}.
\notag
\end{eqnarray}

The torsions of such non--vacuum exact solutions (\ref{qellcs}) generated by
respective data $(\ ^{s}\mathbf{\check{g},}\ ^{s}\mathbf{\check{N},}\ ^{s}%
\mathbf{\check{\nabla}})$ are zero, which is different from the class of
exact solutions (\ref{qnksd}) with nontrivial canonical d--torsions (\ref%
{dtors}) completely determined by arbitrary data $(\ ^{s}\mathbf{g,}\ ^{s}%
\mathbf{N,}\ ^{s}\widehat{\mathbf{D}})$ with Killing symmetry  $\eth _{7}.$
For an arbitrary shell $s,$ we always have a Killing symmetry  $\eth _{s-1}.$

\subsection{ Violation of Killing symmetries and jet prolongations}

Considering prolongations of 4--d nonholonomic Ricci soliton and Einstein
equations on jet variables we can generate new classes of solutions with
non--Killing symmetries both on spacetime coordinates and on jet shells. On $%
\mathbf{J}^{r}(\mathbf{V},\mathbf{V}^{\prime }),$ there are two general
possibilities to generate "non--Killing" configurations mentioned in Refs.
\cite{vex3,gheorghiu} \ that in this work are generalized for nonholonomic
jet variables: 1) to perform a formal embedding into, for instance, higher
dimension jet prolongation of vacuum spacetimes and/or by 2) "vertical"
conformal nonholonomic deformations, in general, with jet variables.

\subsubsection{Imbedding into a jet prolongation of a vacuum solution}

Let us analyze an example when a subclass of off--diagonal metrics for 6--d
space with jet variables via nonholonomic constraints and
re--parameterizations transform into 4--d non--Killing vacuum solutions. We
consider the geometric case: $\Lambda =\ ^{v}\Lambda =\ _{1}^{v}\Lambda =0;$ $%
h_{3}=\epsilon _{3},h_{5}=\epsilon _{5},n_{k}=0$ and $\ ^{1}n_{\alpha }=0$
with a 2-d $h$--metric $\epsilon _{i}e^{\psi (x^{k},\Lambda
=0)}(dx^{i})^{2}. $ The coefficients of the Ricci d--tensor are zero, see
formulae (\ref{equ1})-(\ref{equ4}) and (\ref{equ5})-(\ref{equ7}). For such
conditions, we can not use  equations (\ref{e1})-(\ref{e4aa}) derived for
$\partial _{4}h_{3}\neq 0,$ $\eth _{6}h_{5}\neq 0$ etc. because such
conditions do not allow, for instance, values $h_{3}=\epsilon
_{3},h_{5}=\epsilon _{5},$ for any nontrivial data $%
h_{4}(x^{i},y^{4}),w_{k}(x^{i},y^{4});$ $h_{6}(x^{i},y^{4},\zeta ^{6}),\
^{1}w_{k}(x^{i},y^{4}),$ $\ ^{1}w_{4}(x^{i},y^{4},\zeta ^{6}).$ Such
functions, depending in general, on spacetime and jet variables, can be
considered as generating functions for vacuum quadratic line elements%
\begin{equation}
ds_{6\rightarrow 4}^{2} = \epsilon _{i}e^{\psi (x^{k},\Lambda
=0)}(dx^{i})^{2}+\epsilon _{3}(dy^{3})^{2}+h_{4}(dy^{4}+w_{k}dx^{k})^{2}
+\epsilon _{5}(d\zeta ^{5})^{2}+h_{6}(d\zeta ^{6}+\ ^{1}w_{k}dx^{k}+\
^{1}w_{4}dy^{4})^{2}  \label{6to4}
\end{equation}%
on the first 2--d jet shell on $\mathbf{J}^{r}(\mathbf{V},\mathbf{V}^{\prime }).$
This class of vacuum 6-d metrics with two jet variables are with nonzero
nonholonomically induced d--torsion (\ref{dtors}). Such solutions can not be
considered as a subclass of vacuum solutions (\ref{qe6dvacuum}) when $%
h_{3}\rightarrow \epsilon _{3}$ and $h_{5}\rightarrow \epsilon _{5}$ because
the conditions $\partial _{4}h_{3}\neq 0$ and $\eth _{6}h_{5}\neq 0$ impose
additional constraints on the class of possible generating functions $h_{4}$
and $h_{6}.$ By fixing from the very beginning certain configurations with $%
\partial _{4}h_{3}=0$ and $\eth _{6}h_{5}=0,$ we can consider the values $%
h_{4},h_{6}$ and $w_{k},\ ^{1}w_{k},\ ^{1}w_{4}$ as independent generating
functions.

We generate LC--configurations if the coefficients of the d--metric (\ref%
{6to4}) are subject to additional  constraints (\ref{zerot}) up to $%
s=1.$ We can follow a formal procedure which is similar to that outlined in
section \ref{sslc}. For any constant $h_{3}=\epsilon _{3}$ and $%
h_{5}=\epsilon _{5},$ the conditions $\mathbf{e}_{i}\ln \sqrt{|\ h_{3}|}=0$
and $\ ^{1}\mathbf{e}_{\alpha }\ln \sqrt{|\ h_{5}|}=0$ are satisfied. The
class of generating functions can be restricted to solve the equations
\begin{eqnarray}
\partial _{4}w_{i}(x^{i},y^{4}) &=&\mathbf{e}_{i}\ln \sqrt{|\
h_{4}(x^{i},y^{4})|},\partial _{i}w_{j}=\partial _{j}w_{i},\mbox{ and }\
\label{zerota} \\
\eth _{6}\ ^{1}w_{\alpha }(x^{i},y^{4},\zeta ^{6}) &=&\ ^{1}\mathbf{e}%
_{\alpha }\ln \sqrt{|\ h_{6}(x^{i},y^{4},\zeta ^{6})|},\partial _{\alpha }\
^{1}w_{\beta }=\partial _{\beta }\ ^{1}w_{\alpha },  \notag
\end{eqnarray}%
Such equations do not depend on the spacetime coordinate $y^{3}$ and on jet
variable $\zeta ^{5}.$ By prescribing any values of $\ h_{4}$ and $\ h_{6}$ we
can find LC--admissible $w$--coefficients solving the system of first order
partial derivative equations in (\ref{zerota}). In general, such solutions
are defined for certain nonholonomic constraints, i.e. in "non--explicit"
form. If the respective d--metric and N--connection coefficients $h_{4}[%
\check{\Phi}],h_{6}[\ ^{1}\check{\Phi}]$ and $w_{k}[\check{\Phi}],\
^{1}w_{k}[\ ^{1}\check{\Phi}],\ ^{1}w_{4}[\ ^{1}\check{\Phi}]$ are
determined by $\check{\Phi}(x^{i},y^{4})$ and $\ ^{1}\check{\Phi}%
(x^{i},y^{4},\zeta ^{6})$ and satisfy conditions (\ref{explcond}), (\ref%
{expconda}) (for such configurations, $h_{3}$ and $h_{5}$ may be not
functionals of type (\ref{solha})), then we can solve  equations (\ref{zerota}%
) in explicit form.

By choosing any generating function $\check{\Phi}$ or $\ ^{1}\check{\Phi}$
and functionals $h_{4}[\check{\Phi}],h_{6}[\ ^{1}\check{\Phi}]$ we compute%
\begin{equation}
w_{i} =\check{w}_{i}=\partial _{i}\check{\Phi}/\partial _{4}\check{\Phi}%
=\partial _{i}\check{A}\mbox{ and } \ ^{1}w_{i} = \ ^{1}\check{w}%
_{i}=\partial _{i}\ ^{1}\check{\Phi}/\eth _{6}\ ^{1}\check{\Phi}=\partial
_{i}\ ^{1}\check{A},\ ^{1}w_{4}=\ ^{1}\check{w}_{4}=\partial _{4}\ ^{1}%
\check{\Phi}/\eth _{6}\ ^{1}\check{\Phi}=\partial _{4}\ ^{1}\check{A},
\label{data4c}
\end{equation}%
for some $\check{A}(x^{i},y^{4})$ and $\ ^{1}\check{A}(x^{i},y^{4},\zeta
^{6})$ which are necessary to satisfy the equalities $\partial
_{i}w_{j}=\partial _{j}w_{i}$ and $\partial _{\alpha }\ ^{1}w_{\beta
}=\partial _{\beta }\ ^{1}w_{\alpha }.$ Applying the functional derivatives
of type (\ref{fder}) and N--coefficients of type (\ref{data4c}) when $H[%
\check{\Phi}]=\ln \sqrt{|\ h_{4}|}$ and $\ ^{1}H[\ ^{1}\check{\Phi}]=\ln
\sqrt{|\ h_{6}|},$ we can satisfy the LC--conditions (\ref{zerota}).

The constructions from the last two paragraphs allow to define a subclass of
metrics of (\ref{6to4}) determined by generic off--diagonal metrics as
solutions of 6--d vacuum Einstein equations with two jet variables from the
first shell,
\begin{equation}
ds_{6\rightarrow 4}^{2} = \epsilon _{i}e^{\psi (x^{k},\Lambda
=0)}(dx^{i})^{2}+\epsilon _{3}(dy^{3})^{2}+h_{4}[\check{\Phi}%
](dy^{4}+\partial _{k}\check{A}dx^{k})^{2} +\epsilon _{5}(d\zeta
^{5})^{2}+h_{6}[\ ^{1}\check{\Phi}](d\zeta ^{6}+\partial _{k}\ ^{1}\check{A}%
\ dx^{k}+\partial _{4}\ ^{1}\check{A}\ dy^{4})^{2}.  \label{6to4lc}
\end{equation}%
The terms $\epsilon _{3}(dy^{3})^{2}$ and $\epsilon _{5}(dy^{5})^{2}$ are
for trivial extensions from 4-d to 6--d configurations but imbedded in a
nontrivial form in a jet extra dimensional vacuum background. Re--defining
the coordinate $\zeta ^{6}\rightarrow y^{3},$ we generate vacuum solutions
in 4--d gravity with metrics (\ref{6to4lc}) depending on all four
coordinates $x^{i},y^{3}$ and $y^{4}.$ This way we mimic certain 4-d
gravitational interactions on a jet prolongation of a 3--d spacetime
manifold. Finally, we note that the nonholonomy coefficients (\ref{anhrel1})
are not zero and that such metrics can not be diagonalized by coordinate or
jet coordinates transformations. This class of 4--d vacuum spacetimes do not
possess, in general, any Killing symmetries.

\subsubsection{"Vertical" nonholonomic conformal and jet deformations}

We briefly touch upon  another possibility to generate off--diagonal solutions
depending on all spacetime coordinates and, in general, with nontrivial
sources of type (\ref{sourc1}) \cite{vex3}. To work with jet type variable the
 formal re--definition of extra dimension coordinates into
nonholonomic shell jet coordinates is necessary. By straightforward but tedious
computations, we can prove

\begin{corollary}
Any metric {\small
\begin{equation}
\mathbf{g} = g_{i}(x^{k})dx^{i}\otimes dx^{i}+\omega ^{2}(u^{\alpha
})h_{a}(x^{k},y^{4})\mathbf{e}^{a}\otimes \mathbf{e}^{a}+ \ ^{1}\omega
^{2}(u^{\alpha _{1}})h_{a_{1}}(u^{\alpha },\zeta ^{6})\mathbf{e}%
^{a_{1}}\otimes \mathbf{e}^{a_{1}}+...+\ ^{s}\omega ^{2}(u^{\alpha
_{s}})h_{a_{s}}(u^{\alpha _{s-1}},\zeta ^{4+2s})\mathbf{e}^{a_{s}}\otimes
\mathbf{e}^{a_{s}},  \label{ans1}
\end{equation}%
} with the conformal $v$--factors subject to the conditions
\begin{eqnarray}
\mathbf{e}_{k}\omega &=&\partial _{k}\omega +n_{k}\partial _{3}\omega
+w_{k}\partial _{4}\omega =0,  \label{vconfc} \\
\ ^{1}\mathbf{e}_{\beta }\ ^{1}\omega &=&\partial _{\beta }\ ^{1}\omega +\
^{1}n_{\beta }\eth _{5}\ ^{1}\omega +\ ^{1}w_{\beta }\eth _{6}\ ^{1}\omega
=0, \ ^{2}\mathbf{e}_{\beta _{1}}\ ^{2}\omega = \partial _{\beta _{1}}\
^{2}\omega +\ ^{2}n_{\beta _{1}}\eth _{7}\ ^{2}\omega +\ ^{2}w_{\beta
_{1}}\eth _{8}\ ^{2}\omega =0, ...  \notag
\end{eqnarray}%
does not change the Ricci d--tensor (\ref{equ1})--(\ref{equ4d2s}).
\end{corollary}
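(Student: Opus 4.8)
\begin{proof}
The plan is to substitute the conformally rescaled ansatz (\ref{ans1}) directly into the explicit N--adapted coefficient formulas for the Ricci d--tensor (\ref{equ1})--(\ref{equ4d2s}) proved in the Appendix, and to check component by component, shell by shell, that every term carrying a derivative of a factor $\ ^{s}\omega $ either cancels or assembles into the combination that is set to zero by the vertical conformal conditions (\ref{vconfc}). First I would record that (\ref{ans1}) differs from the off--diagonal ansatz (\ref{ansk}) only through the insertion of $\omega ^{2},\ ^{1}\omega ^{2},\ldots ,\ ^{s}\omega ^{2}$ in the respective vertical shell blocks $g_{a_{s}b_{s}}=\ ^{s}\omega ^{2}h_{a_{s}b_{s}}$, while the horizontal block $g_{ij}(x^{k})$ and all N--connection coefficients $n_{i},w_{i},\ ^{1}n_{\alpha },\ ^{1}w_{\alpha },\ldots $ are unchanged. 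Hence the horizontal component $\widehat{\mathbf{R}}_{ij}$, which for this class of metrics reduces to (\ref{e1}) and depends only on $g_{i}=\epsilon _{i}e^{\psi (x^{k})}$, and likewise the purely horizontal pieces on each higher shell, are manifestly untouched.

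Next I would treat the vertical--vertical components $\widehat{\mathbf{R}}_{a_{s}b_{s}}$. The Appendix expression for each such component is a second order differential polynomial in the vertical derivatives of $h_{2s-1},h_{2s}$, divided by $h_{2s-1}h_{2s}$; replacing $h_{a_{s}b_{s}}$ by $\ ^{s}\omega ^{2}h_{a_{s}b_{s}}$ produces, besides manifestly homogeneous terms, additional pieces proportional to derivatives of $\ ^{s}\omega $. The differential polynomial has the right homogeneity weight in $h_{a_{s}b_{s}}$, so once the extra pieces are removed the factors $\ ^{s}\omega ^{2}$ cancel between $h^{a_{s}b_{s}}$ and the metric entering $\widehat{\mathbf{R}}_{a_{s}b_{s}}$ and one is left with the original $\widehat{\mathbf{R}}_{a_{s}b_{s}}$; the source parametrization (\ref{dsource}) and the equations (\ref{e2}) together with their $r$--jet analogues are thereby preserved. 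The extra pieces themselves, when the auxiliary functions $\gamma ,\alpha _{i},\beta ,\phi $ of (\ref{ca1})--(\ref{c1}) are recomputed for (\ref{ans1}) and the N--adapted frame derivatives $\mathbf{e}_{i_{s}}$ are restored, collect into multiples of $\mathbf{e}_{k}\omega =\partial _{k}\omega +n_{k}\partial _{3}\omega +w_{k}\partial _{4}\omega $, of $\ ^{1}\mathbf{e}_{\beta }\ ^{1}\omega $, and so on up the shells; by (\ref{vconfc}) these vanish, so the mixed components $\widehat{\mathbf{R}}_{i_{s}a_{s}}$, and hence the equations (\ref{e3})--(\ref{e4}) and their prolongations, are unchanged as well.

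Finally I would verify that the genuinely new dependence on $y^{3},\zeta ^{5},\ldots $ that $\ ^{s}\omega $ introduces does not switch on Ricci d--tensor components which vanished for (\ref{ansk}) by virtue of the Killing/jet symmetries $\partial _{3},\eth _{5},\ldots $; the only candidate contributions are again proportional to $\mathbf{e}_{k}\omega ,\ ^{1}\mathbf{e}_{\beta }\ ^{1}\omega ,\ldots $ and therefore vanish under (\ref{vconfc}), so the off--diagonal equations $\widehat{\mathbf{R}}_{\beta _{s}\gamma _{s}}=0$ for $\beta _{s}\neq \gamma _{s}$ continue to hold. Combining the three steps shows that the whole system (\ref{equ1})--(\ref{equ4d2s}) is insensitive to the conformal $v$--factors subject to (\ref{vconfc}).

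The step I expect to be the main obstacle is the middle one: the conformal rescaling genuinely changes the canonical d--connection coefficients, in particular the vertical Christoffel--like part of $\ ^{s}\widehat{\mathbf{D}}$, so the invariance of the Ricci d--tensor is the outcome of several cancellations, and one must track, carefully and for each $r$--jet shell, that every surviving derivative of $\ ^{s}\omega $ enters only through the single horizontal combination $\ ^{s}\mathbf{e}_{i_{s-1}}\ ^{s}\omega $ that is fixed to zero by (\ref{vconfc}). There is no conceptual difficulty; the computation is lengthy and parallels the extra--dimensional "vertical conformal" deformations of \cite{vex3}, so we only sketch it here.

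$\square $
\end{proof}

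\vskip5pt
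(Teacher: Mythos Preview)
Your proposal is correct and follows essentially the same approach as the paper: the paper's own justification is simply ``By straightforward but tedious computations, we can prove'' together with an implicit appeal to the parallel extra--dimensional calculation in \cite{vex3}, and you outline precisely that computation in somewhat more detail than the paper does. Your identification of the key mechanism --- that all surviving derivatives of $\ ^{s}\omega$ assemble into the N--adapted combinations $\ ^{s}\mathbf{e}_{i_{s-1}}\ ^{s}\omega$ annihilated by (\ref{vconfc}) --- is exactly what drives the result, and your caveat about the loss of the Killing symmetries $\partial_{3},\eth_{5},\ldots$ is the right place to flag the genuine work in the calculation.
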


As a result of this Corollary, any class of solutions considered in this
section can be generalized to non--Killing configurations using "vertical"
nonholonomic conformal and jet transformations and deformations.

\section{Nonholonomic Jet Prolongations of the Kerr Metric and Ricci Solitons%
}

\label{s4} In this section, we study nonholonomic off--diagonal and/or jet
deformations of the Kerr black hole solution. The approach develops the results
from section 4 of Ref. \cite{gheorghiu} for jet variables and Ricci soliton
configurations when the constructions for massive gravity are re--considered
for jet modified gravity theories. A series of new class of exact
solutions when the metrics are nonholonomically deformed into general or
ellipsoidal stationary configurations in four dimensional gravity with Ricci
soliton correction and/or extra dimensions treated as jet variables. We cite
the monographs \cite{heusler,kramer,misner} for the standard methods and
bibliography on stationary black holes.

\subsection{N--adapted parameterizations of the Kerr vacuum solution}

A 4-d ansatz%
\begin{equation*}
ds_{[0]}^{2}=Y^{-1}e^{2h}(d\rho ^{2}+dz^{2})-\rho
^{2}Y^{-1}dt^{2}+Y(d\varphi +Adt)^{2}
\end{equation*}%
parameterized in terms of three functions $(h,Y,A)$ on coordinates $(\rho
,z) $ defines the Kerr solution of the vacuum Einstein equations (for
rotating black holes) if we choose {\small
\begin{equation*}
Y =\frac{1-(p\widehat{x}_{1})^{2}-(q\widehat{x}_{2})^{2}}{(1+p\widehat{x}%
_{1})^{2}+(q\widehat{x}_{2})^{2}},\ A=2M\frac{q}{p}\frac{(1-\widehat{x}%
_{2})(1+p\widehat{x}_{1})}{1-(p\widehat{x}_{1})-(q\widehat{x}_{2})},\ e^{2h}
= \frac{1-(p\widehat{x}_{1})^{2}-(q\widehat{x}_{2})^{2}}{p^{2}[(\widehat{x}%
_{1})^{2}+(\widehat{x}_{2})^{2}]},\ \rho ^{2}=M^{2}(\widehat{x}_{1}^{2}-1)(1-%
\widehat{x}_{2}^{2}),\ z=M\widehat{x}_{1}\widehat{x}_{2},
\end{equation*}%
} where $M=const$ and $\rho =0$ states the horizon $\widehat{x}_{1}=0$ with
the "north / south" segment of the rotation axis, $\widehat{x}_{2}=+1/-1.$
For our purposes, such a metric is written in the form
\begin{equation}
ds_{[0]}^{2}=(dx^{1})^{2}+(dx^{2})^{2}-\rho ^{2}Y^{-1}(\mathbf{e}^{3})^{2}+Y(%
\mathbf{e}^{4})^{2},  \label{kerr1}
\end{equation}%
with some coordinates $x^{1}(\widehat{x}_{1},\widehat{x}_{2})$ and $x^{2}(%
\widehat{x}_{1},\widehat{x}_{2})$, when $%
(dx^{1})^{2}+(dx^{2})^{2}=M^{2}e^{2h}(\widehat{x}_{1}^{2}-\widehat{x}%
_{2}^{2})Y^{-1}\left( \frac{d\widehat{x}_{1}^{2}}{\widehat{x}_{1}^{2}-1}+%
\frac{d\widehat{x}_{2}^{2}}{1-\widehat{x}_{2}^{2}}\right)$ and $y^{3}=t+%
\widehat{y}^{3}(x^{1},x^{2}),y^{4}=\varphi +\widehat{y}^{4}(x^{1},x^{2},t).$
We write $\mathbf{e}^{3}=dt+(\partial _{i}\widehat{y}^{3})dx^{i},\mathbf{e}%
^{4}=dy^{4}+(\partial _{i}\widehat{y}^{4})dx^{i}$, for some functions $%
\widehat{y}^{a},$ $a=3,4,$ with $\partial _{t}\widehat{y}^{4}=-A(x^{k}).$

The Boyer--Linquist coordinates for the Kerr metric are introduced as $%
(r,\vartheta ,\varphi ,t),$ where $r=m_{0}(1+p\widehat{x}_{1}),\widehat{x}%
_{2}=\cos \vartheta .$ The parameters $p,q$ are related to the total black
hole mass, $m_{0}$ and the total angular momentum, $am_{0},$ for the
asymptotically flat, stationary and axisymmetric Kerr spacetime. The
formulae $m_{0}=Mp^{-1}$ and $a=Mqp^{-1}$ when $p^{2}+q^{2}=1$ imply $%
m_{0}^{2}-a^{2}=M^{2}.$ In terms of these variables, the metric (\ref{kerr1}) is
written%
\begin{eqnarray}
ds_{[0]}^{2} &=&(dx^{1^{\prime }})^{2}+(dx^{2^{\prime }})^{2}+\overline{A}(%
\mathbf{e}^{3^{\prime }})^{2}+(\overline{C}-\overline{B}^{2}/\overline{A})(%
\mathbf{e}^{4^{\prime }})^{2},  \label{kerrbl} \\
\mathbf{e}^{3^{\prime }} &=&dt+d\varphi \overline{B}/\overline{A}%
=dy^{3^{\prime }}-\partial _{i^{\prime }}(\widehat{y}^{3^{\prime }}+\varphi
\overline{B}/\overline{A})dx^{i^{\prime }},\mathbf{e}^{4^{\prime
}}=dy^{4^{\prime }}=d\varphi .  \notag
\end{eqnarray}%
In these quadratic elements, we consider coordinate functions $x^{1^{\prime
}}(r,\vartheta ),\ x^{2^{\prime }}(r,\vartheta ),\ y^{3^{\prime }}=t+%
\widehat{y}^{3^{\prime }}(r,\vartheta ,\varphi )+\varphi \overline{B}/%
\overline{A},y^{4^{\prime }}=\varphi ,\ \partial _{\varphi }\widehat{y}%
^{3^{\prime }}=-\overline{B}/\overline{A}$, for which $(dx^{1^{\prime
}})^{2}+(dx^{2^{\prime }})^{2}=\Xi \left( \Delta ^{-1}dr^{2}+d\vartheta
^{2}\right) ,$ when the coefficients are
\begin{eqnarray}
\overline{A} &=&-\Xi ^{-1}(\Delta -a^{2}\sin ^{2}\vartheta ),\overline{B}%
=\Xi ^{-1}a\sin ^{2}\vartheta \left[ \Delta -(r^{2}+a^{2})\right] ,  \notag
\\
\overline{C} &=&\Xi ^{-1}\sin ^{2}\vartheta \left[ (r^{2}+a^{2})^{2}-\Delta
a^{2}\sin ^{2}\vartheta \right] ,\mbox{ and } \Delta = r^{2}-2m_{0}+a^{2},\
\Xi =r^{2}+a^{2}\cos ^{2}\vartheta .  \label{kerrcoef}
\end{eqnarray}

We consider the  prime data
\begin{eqnarray}
&& \mathring{g}_{1}=1,\mathring{g}_{2}=1,\mathring{h}_{3}=-\rho ^{2}Y^{-1},%
\mathring{h}_{4}=Y,\mathring{N}_{i}^{a}=\partial _{i}\widehat{y}^{a},
\label{dkerr} \\
\mbox{ i.e. }\mathring{g}_{1^{\prime }} &=&1,\mathring{g}_{2^{\prime }}=1,%
\mathring{h}_{3^{\prime }}=\overline{A},\mathring{h}_{4^{\prime }}=\overline{%
C}-\overline{B}^{2}/\overline{A},  \notag \\
\mathring{N}_{i^{\prime }}^{3} &=&\mathring{n}_{i^{\prime }}=-\partial
_{i^{\prime }}(\widehat{y}^{3^{\prime }}+\varphi \overline{B}/\overline{A}),%
\mathring{N}_{i^{\prime }}^{4}=\mathring{w}_{i^{\prime }}=0  \notag
\end{eqnarray}%
for the quadratic linear elements (\ref{kerr1}), or (\ref{kerrbl}), which
define exact solutions with rotational spherical symmetry of the vacuum
Einstein equations parameterized in the form (\ref{cdeinst}) and (\ref%
{lcconstr}) with zero sources. The Kerr vacuum solution in 4-d GR consists of a
"degenerate" case of 4--d off--diagonal vacuum solutions determined by
primary metrics with data (\ref{dkerr}) when the diagonal coefficients
depend only on two "horizontal" N--adapted coordinates and the off--diagonal
terms are induced by rotating frames.

\subsection{Deformations of Kerr metrics by an effective Ricci soliton source%
}

Let us consider the coefficients (\ref{dkerr}) for the Kerr metric as the
data for a prime metric $\mathbf{\mathring{g}.}$ Our goal is to study
nonholonomic off--diagonal deformations of the Kerr solution into a Ricci
soliton configuration, i.e. when the vacuum Einstein equations are modified
by a Ricci soliton, with
\begin{equation*}
(\mathbf{\mathring{g}},\mathbf{\mathring{N},\ }^{v}\mathring{\Upsilon}=0,%
\mathring{\Upsilon}=0)\rightarrow (\widetilde{\mathbf{g}},\widetilde{\mathbf{%
N}}\mathbf{,\ }^{v}\widetilde{\Upsilon }=\Lambda (x^{k}),\ ^{h}\widetilde{%
\Upsilon }=\ ^{v}\Lambda (x^{k},y^{4})),\widetilde{\Lambda }_{0}=const\neq 0,
\end{equation*}%
where the target source (\ref{dsource}) is parameterized as $\widetilde{%
\Upsilon }_{1}^{1}=\widetilde{\Upsilon }_{2}^{2}=\ ^{h}\widetilde{\Upsilon }%
=\ ^{v}\Lambda (x^{k},y^{4})$ and $\Upsilon _{3}^{3}=\Upsilon _{4}^{4}=\
^{v}\Upsilon =\Lambda (x^{k})$ and encode contributions of gradient function
$\kappa $ and the constant $\lambda $ from the Ricci soliton equations (\ref%
{nriccisol}) into solutions of equations (\ref{geq1})--(\ref{geq3}). The
target metric $\widetilde{\mathbf{g}}$ is constrained to define a generic
off--diagonal solution of the field equations with effective horizontal ($h$)- and vertical ($v$)%
--polarized gravitational constants. In some sense, the Ricci soliton
contributions may induce a mass term of the type $\widetilde{\Lambda }_{0}=\mu
_{g}^{2}\ \widetilde{\lambda },$ like the one considered in \cite{gheorghiu}, for
respective parameterizations. The N--adapted deformations of coefficients of
metrics and frames are written as
\begin{equation}
\lbrack \mathring{g}_{i},\mathring{h}_{a},\mathring{w}_{i},\mathring{n}%
_{i}]\rightarrow \lbrack \widetilde{g}_{i}=\widetilde{\eta }_{i}\mathring{g}%
_{i},\widetilde{h}_{3}=\widetilde{\eta }_{3}\mathring{h}_{3},\widetilde{h}%
_{4}=\widetilde{\eta }_{4}\mathring{h}_{4},\widetilde{w}_{i}=\mathring{w}%
_{i}+\ ^{\eta }w_{i},n_{i}=\mathring{n}_{i}+\ ^{\eta }n_{i}],  \notag
\end{equation}%
where the values \ $\widetilde{\eta }_{a},\widetilde{w}_{i},\tilde{n}_{i}$
and $\varpi $ are functions of three coordinates $(x^{k^{\prime
}},y^{4}=\varphi )$ and $\widetilde{\eta }_{i}(x^{k})$ and depend only on
h--coordinates $x^{k}.$ The prime data $\mathring{g}_{i},\mathring{h}_{a},%
\mathring{w}_{i},\mathring{n}_{i}$ for a Kerr metric are given by
coefficients depending only on $(x^{k}).$ The quadratic line elements,  determined
by target solutions of type (\ref{qnk4d}), are paremeterized in the form
\begin{equation}
ds_{4[dK]}^{2}=e^{\psi (x^{k^{\prime }})}[(dx^{1^{\prime
}})^{2}+(dx^{2^{\prime }})^{2}]-\frac{\tilde{\Phi}^{2}}{4\widetilde{\Lambda }%
_{0}}\left[ dy^{3}+\left( \ _{1}n_{k}+_{2}\widetilde{n}_{k}\int d\varphi
\frac{(\partial _{\varphi }\tilde{\Phi})^{2}}{\tilde{\Phi}^{3}\Xi }\right)
dx^{k}\right] ^{2}+\frac{(\partial _{4}\tilde{\Phi})^{2}}{\Xi }\ \left[
d\varphi +\frac{\partial _{i}\Xi }{\partial _{\varphi }\Xi }dx^{i}\right]
^{2},  \notag
\end{equation}%
where $\Xi \lbrack \ ^{v}\Lambda ,\tilde{\Phi}]=\int d\varphi (\ ^{v}\Lambda
)\partial _{\varphi }(\tilde{\Phi}^{2}).$

In terms of $\eta $--functions (\ref{etad}) giving $h_{a}^{\ast }\neq 0,$ $%
g_{i}=c_{i}e^{\psi {(x^{k^{\prime }})}}$ and LC--configurations, the
solutions of type (\ref{qnk4d}) with an effective cosmological constant $%
\widetilde{\Lambda }_{0}$ induced by off--diagonal Ricci soliton
configurations and $\ _{2}n_{k^{\prime }}=0$ can be re--written in the form%
\begin{eqnarray}
ds^{2} &=&e^{\psi (x^{k^{\prime }})}[(dx^{1^{\prime }})^{2}+(dx^{2^{\prime
}})^{2}]-  \label{nvlcmgs} \\
&&\ \widetilde{\eta }_{3^{\prime }}\overline{A}[dy^{3^{\prime }}+\left(
\partial _{k^{\prime }}\ ^{\eta }n(x^{i^{\prime }})-\partial _{k^{\prime }}(%
\widehat{y}^{3^{\prime }}+\varphi \overline{B}/\overline{A})\right)
dx^{k^{\prime }}]^{2}+\ \widetilde{\eta }_{4^{\prime }}(\overline{C}-%
\overline{B}^{2}/\overline{A})[d\varphi +(\partial _{i^{\prime }}\ ^{\eta }%
\widetilde{A})dx^{i^{\prime }}]^{2},  \notag
\end{eqnarray}%
where  use is made of  "primed" coordinates and prime Kerr data (\ref{kerrbl})
and (\ref{dkerr}). The gravitational polarizations $(\eta _{i},\eta _{a})$
and N--coefficients $(n_{i},w_{i})$ are computed\
\begin{eqnarray}
e^{\psi (x^{k})} &=&\widetilde{\eta }_{1^{\prime }}=\widetilde{\eta }%
_{2^{\prime }},\ \widetilde{\eta }_{3^{\prime }}=\tilde{\Phi}^{2}/4%
\widetilde{\Lambda }_{0}\overline{A},\ \widetilde{\eta }_{4^{\prime
}}=(\partial _{\varphi }\tilde{\Phi})^{2}/\Xi (\overline{C}-\overline{B}^{2}/%
\overline{A}),  \label{polarkerr} \\
w_{i^{\prime }} &=&\mathring{w}_{i^{\prime }}+\ ^{\eta }w_{i^{\prime
}}=\partial _{i^{\prime }}(\ ^{\eta }\widetilde{A}[\tilde{\Phi}]),\
n_{k^{\prime }}=\mathring{n}_{k^{\prime }}+\ ^{\eta }n_{k^{\prime
}}=\partial _{k^{\prime }}(-\widehat{y}^{3^{\prime }}+\varphi \overline{B}/%
\overline{A}+\ ^{\eta }n),  \notag
\end{eqnarray}%
where $\ ^{\eta }\widetilde{A}(x^{k},\varphi )$ is introduced via formulae
and assumptions similar to (\ref{expconda}), for $s=1,$ and $\psi (x^{k})$
is a solution of 2--d Poisson equation,
\begin{equation*}
\partial _{11}^{2}\psi +\partial _{22}^{2}\psi =2\ \Lambda (x^{k^{\prime }}).
\end{equation*}%
To extract LC--configurations,  the parameterizations (\ref%
{solhn}) are made use of when $\mathring{h}_{3^{\prime }}\mathring{h}_{4^{\prime }}=%
\overline{A}\overline{C}-\overline{B}^{2}$ and the N--coefficients are
computed as
\begin{equation*}
w_{i^{\prime }}=\mathring{w}_{i^{\prime }}+\ ^{\eta }w_{i^{\prime
}}=\partial _{i^{\prime }}(\ \tilde{\Phi}\sqrt{|\overline{A}\overline{C}-%
\overline{B}^{2}|})/\ \partial _{\varphi }\tilde{\Phi}\sqrt{|\overline{A}%
\overline{C}-\overline{B}^{2}|}=\partial _{i^{\prime }}\ ^{\eta }\widetilde{A%
}
\end{equation*}%
for $\ _{1}n_{i^{\prime }}=\partial _{i^{\prime }}$ $^{\eta }n(x^{k})$
computed for an arbitrary function $^{\eta }n(x^{k}).$

\begin{theorem}
\label{th4.1}Quadratic elements (\ref{nvlcmgs}) define nonholonomic
deformations of a prime Kerr solution $[\mathring{g}_{i},\mathring{h}_{a},%
\mathring{w}_{i},\mathring{n}_{i}]$ (\ref{dkerr}) into target Ricci soliton
LC--configurations with Killing symmetry $\partial /\partial \widehat{y}%
^{3^{\prime }}$ determined by polarization functions (\ref{polarkerr})
generated by data $\left[ \psi (x^{k^{\prime }}),\ \widetilde{\eta }%
_{3^{\prime }}(x^{k^{\prime }},\varphi ),\ ^{\eta }\widetilde{A}(\
\widetilde{\eta }_{3^{\prime }}),^{\eta }n(x^{k}),\ ^{v}\Lambda
(x^{k^{\prime }},\varphi ),\widetilde{\Lambda }_{0}\right] .$
\end{theorem}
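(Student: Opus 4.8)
The plan is essentially to assemble Theorem~\ref{th4.1} as a direct specialization of the general machinery already established in the paper, namely Theorem~\ref{th3.2} together with the Levi--Civita reduction of section~\ref{sslc}, applied to the specific prime data (\ref{dkerr}) for the Kerr metric. I would not attempt any new computation; the statement is a bookkeeping assertion that the formulae (\ref{nvlcmgs}), (\ref{polarkerr}) are what one obtains when the generic $4$--d construction is run with the Kerr ansatz as initial datum and with the Ricci soliton source encoded as $({}^{v}\Upsilon={}^{v}\Lambda,\ {}^{h}\Upsilon=\Lambda)$. So the proof is a verification in three short movements.

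\textbf{Step 1 (Source identification).} First I would record that the gradient Ricci soliton equation (\ref{nriccisol}) for the canonical d--connection, under the $2+2$ split, is equivalent to the system (\ref{geq1})--(\ref{geq3}) with an effective source of the form (\ref{sourse}) in which $\ ^{h}\Upsilon$ and $\ ^{v}\Upsilon$ absorb the Hessian terms $\widehat{\mathbf{D}}_\beta\widehat{\mathbf{D}}_\gamma\kappa$ and the constant $\lambda$; this is exactly the content already flagged in the paragraph preceding Theorem~\ref{th4.1}, where one sets $\widetilde{\Upsilon}_1^1=\widetilde{\Upsilon}_2^2={}^{v}\Lambda(x^k,y^4)$ and $\Upsilon_3^3=\Upsilon_4^4=\Lambda(x^k)$. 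The only thing to check is that such a parameterization is admissible for a potential $\kappa=\kappa(x^k,y^4)$, which follows by reading the Hessian components N--adaptedly and matching diagonal entries — a routine observation, not an obstacle.

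\textbf{Step 2 (Apply Theorem~\ref{th3.2}).} Next I would substitute the Kerr prime data $[\mathring{g}_i,\mathring{h}_a,\mathring{w}_i,\mathring{n}_i]$ of (\ref{dkerr}), written in Boyer--Lindquist form (\ref{kerrbl})--(\ref{kerrcoef}), into the general integral variety (\ref{qnk4d}). Using the $\eta$--polarization notation (\ref{etad}), $g_i=\widetilde\eta_i\mathring g_i$, $h_a=\widetilde\eta_a\mathring h_a$, and the identifications $\mathring h_{3'}=\overline A$, $\mathring h_{4'}=\overline C-\overline B^2/\overline A$, $\mathring n_{i'}=-\partial_{i'}(\widehat y^{3'}+\varphi\overline B/\overline A)$, $\mathring w_{i'}=0$, I read off that $h_3=\tilde\Phi^2/4\widetilde\Lambda_0$ forces $\widetilde\eta_{3'}=\tilde\Phi^2/(4\widetilde\Lambda_0\overline A)$, that $h_4=(\partial_4\tilde\Phi)^2/\Xi$ forces $\widetilde\eta_{4'}=(\partial_\varphi\tilde\Phi)^2/(\Xi(\overline C-\overline B^2/\overline A))$, and that $\psi$ solving (\ref{e1}) with source $\Lambda(x^k)$ is the $2$--d Poisson equation $\partial_{11}^2\psi+\partial_{22}^2\psi=2\Lambda$ with $e^\psi=\widetilde\eta_{1'}=\widetilde\eta_{2'}$. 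These are precisely (\ref{polarkerr}); this step is pure substitution.

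\textbf{Step 3 (Levi--Civita reduction).} Finally I would invoke section~\ref{sslc}: imposing the zero--torsion constraints (\ref{zerot}) with $\ _2 n_k=0$ and $\ _1 n_j=\partial_j(\ ^\eta n)$ for an arbitrary $^\eta n(x^k)$, and choosing the generating function in the ``inverse hat'' class $\Phi=\check\Phi$ with $\partial_4\partial_i\check\Phi=\partial_i\partial_4\check\Phi$ (condition (\ref{explcond})), one gets $w_{i'}=\partial_{i'}(\ ^\eta\widetilde A)$ with $^\eta\widetilde A$ built as in (\ref{expconda}) for $s=1$, and the constraint $\mathbf e_i\ln\sqrt{|h_3|}=0$ is automatic because $h_3$ is a functional of $\check\Phi$. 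Adding $\mathring w_{i'}=0$ and $\mathring n_{i'}$ back gives exactly the N--coefficients displayed in (\ref{polarkerr}), and collecting everything yields the line element (\ref{nvlcmgs}); the surviving Killing vector is $\partial/\partial\widehat y^{3'}$ since no coefficient depends on $y^{3'}$. The genuine content — that the resulting $\ ^s\widehat{\mathbf D}$--solution actually has vanishing torsion and hence coincides with a $\nabla$--solution of (\ref{einsteq}) with re--scaled source — is already guaranteed by the analysis of (\ref{zerot}), so nothing new is needed.

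\textbf{Main obstacle.} The only non-mechanical point, and the one I would write out with care, is verifying \emph{consistency of the constraint system}: one must check that the LC--conditions (\ref{zerot}) restricted to the Kerr ansatz (where $\mathring w_{i'}=0$ but $\mathring n_{i'}\neq0$ and depends only on $x^k$) are compatible with the prescribed $\eta$--deformation and with $^{v}\Lambda$ depending on $(x^k,\varphi)$ — in particular that one may legitimately take $^{v}\Lambda$ either constant or of the form $^{v}\Lambda[\check\Phi]$ so that $\mathbf e_i\ln\sqrt{|\ ^{v}\Lambda|}=0$ holds, which is what makes $\partial_4 w_{i'}=\mathbf e_{i'}\ln\sqrt{|h_4|}$ solvable. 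Everything else is substitution into results already proved above.
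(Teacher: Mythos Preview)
Your proposal is correct but differs in emphasis from the paper's own argument. You verify that (\ref{nvlcmgs}) arises by specializing Theorem~\ref{th3.2} and the LC--reduction of section~\ref{sslc} to the Kerr prime data --- a ``forward'' check that the displayed line element indeed solves the target equations with the Ricci--soliton source. The paper's proof instead focuses on the specific claim encoded in the data list $[\psi,\widetilde{\eta}_{3'},\ ^{\eta}\widetilde{A}(\widetilde{\eta}_{3'}),\ ^{\eta}n,\ ^{v}\Lambda,\widetilde{\Lambda}_0]$: it shows that $\widetilde{\eta}_{3'}$ can \emph{replace} $\tilde{\Phi}$ as the primary generating function. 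Concretely, the paper inverts the second relation in (\ref{polarkerr}) to $\tilde{\Phi}^2=4\widetilde{\Lambda}_0\overline{A}\,\widetilde{\eta}_{3'}$, computes $\Xi=4\widetilde{\Lambda}_0\overline{A}\int d\varphi\,(\ ^{v}\Lambda)\partial_\varphi(\widetilde{\eta}_{3'})$, and then derives $\widetilde{\eta}_{4'}$ explicitly as a functional of $\widetilde{\eta}_{3'}$ and $\ ^{v}\Lambda$ alone; this is what justifies listing $\widetilde{\eta}_{3'}$ (not $\tilde{\Phi}$) as the independent datum and writing $\ ^{\eta}\widetilde{A}$ as a functional of $\widetilde{\eta}_{3'}$. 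Your Step~2 contains this relationship implicitly, since the map $\tilde{\Phi}\leftrightarrow\widetilde{\eta}_{3'}$ is invertible for fixed $\overline{A}(x^{k'})$, but you do not make the inversion and the resulting formula for $\widetilde{\eta}_{4'}$ explicit. Your route is more thorough as a verification that the metric solves the equations; the paper's is the minimal argument for the ``generated by data'' assertion in the theorem statement.
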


\begin{proof}
Let us show that $\ \widetilde{\eta }_{4^{\prime }}$ can be defined by $%
\widetilde{\eta }_{3^{\prime }}$ which can be considered as a generating
function instead of $\tilde{\Phi}.$ Considering the second formula in (\ref%
{polarkerr}), we express%
\begin{equation*}
\tilde{\Phi}^{2}=4\widetilde{\Lambda }_{0}\overline{A}\ \widetilde{\eta }%
_{3^{\prime }}
\end{equation*}
and compute $\Xi =4\widetilde{\Lambda }_{0}\overline{A}\int d\varphi (\
^{v}\Lambda )\partial _{\varphi }(\widetilde{\eta }_{3^{\prime }}).$ We
introduce these formulae into the third formula in (\ref{polarkerr}) and derive%
\begin{equation*}
\ \widetilde{\eta }_{4^{\prime }}=\overline{A}\left( \partial _{\varphi }%
\sqrt{|\widetilde{\eta }_{3^{\prime }}|}\right) ^{2}/(\overline{A}\overline{C%
}-\overline{B}^{2})\int d\varphi (\ ^{v}\Lambda )\partial _{\varphi }(%
\widetilde{\eta }_{3^{\prime }}).
\end{equation*}%
It follows that, by prescribing any polarization function $\widetilde{\eta }_{3^{\prime
}}(x^{k^{\prime }},\varphi )$ and $v$--source $\ ^{v}\Lambda (x^{k^{\prime
}},\varphi ),$ we can compute $\ \widetilde{\eta }_{4^{\prime }}.$ The
polarizations $\widetilde{\eta }_{1^{\prime }}=\widetilde{\eta }_{2^{\prime
}}$ are determined by function $\psi (x^{k^{\prime }}),$ i.e. by source $%
\Lambda (x^{k^{\prime }}).$ Finally, by prescribing any functional $\ ^{\eta }%
\widetilde{A}(\ \widetilde{\eta }_{3^{\prime }})$ and function $^{\eta
}n(x^{k})$ we can compute the N--connection coefficients for any fixed
effective cosmological constant $\widetilde{\Lambda }_{0}.$

$\square $
\end{proof}

\vskip5pt

The solutions (\ref{nvlcmgs}) are for stationary LC--configurations,
generated canonically as off--diagonal Ricci solitons from Kerr black holes
when the new class of spacetimes carry Killing symmetry $\partial
/\partial y^{3^{\prime }}$ and generic dependence on three (from maximally
four) coordinates, $(x^{i^{\prime }}(r,\vartheta ),\varphi ).$ Off--diagonal
modifications are possible even for very small values of the effective
cosmological constant which can mimic gravitational effects determined by a
gravitational mass parameter $\ \mu _{g}.$

\subsubsection{Nonholonomically induced torsion and Ricci soliton modified
gravity}

If we do not impose the LC--conditions (\ref{lcconstr}), a nontrivial source
$\ ^{v}\Lambda (x^{k^{\prime }},\varphi )$ induces stationary configuration
with nontrivial d--torsion (\ref{dtors}). For simplicity, we can study
nonholonomic torsion effects for a $v$--source not depending on the coordinate $%
\varphi ,\ $i.e. for $^{v}\Lambda (x^{k^{\prime }})$ The torsion
coefficients are determined by metrics of the type (\ref{qnk4d}) with nontrivial
$\widetilde{\Lambda }_{0}$ and certain parameterizations of coefficients of
an associated N--connection, canonical d--torson and coordinates
distinguishing the prime data for a Kerr metric (\ref{dkerr}). The
corresponding quadratic elements can be written in the form {\small
\begin{eqnarray}
ds^{2} &=&e^{\psi (x^{k^{\prime }})}[(dx^{1^{\prime }})^{2}+(dx^{2^{\prime
}})^{2}]-\frac{\Phi ^{2}}{4|\widetilde{\Lambda }_{0}|}\overline{A}%
[dy^{3^{\prime }}+\left( \ _{1}n_{k^{\prime }}(x^{i^{\prime }})+\
_{2}n_{k^{\prime }}(x^{i^{\prime }})\frac{(\partial _{\varphi }\Phi )^{2}}{%
\Phi ^{5}}-\partial _{k^{\prime }}(\widehat{y}^{3^{\prime }}+\varphi
\overline{B}/\overline{A})\right) dx^{k^{\prime }}]^{2}  \notag \\
&&+\frac{(\partial _{\varphi }\Phi )^{2}}{\ ^{v}\Lambda (x^{k^{\prime
}})\Phi ^{2}}(\overline{C}-\overline{B}^{2}/\overline{A})[d\varphi +\frac{%
\partial _{i^{\prime }}\Phi }{\partial _{\varphi }\Phi }dx^{i^{\prime
}}]^{2},  \label{ofindtmg}
\end{eqnarray}%
} where nonzero values of $\ _{2}n_{k}(x^{i^{\prime }})$ are considered. We
can see that Ricci soliton effects may give nontrivial stationary
off--diagonal torsion effects if the integration function $\ _{2}n_{k}\neq
0. $ Considering two different classes of off--diagonal solutions (\ref%
{ofindtmg}) and (\ref{nvlcmgs}), we can study the issue if a Ricci modified
gravity theory carries induced torsion or is characterized by additional
nonholonomic constraints as in GR (giving zero torsion).

It should be noted that configurations of type (\ref{ofindtmg}) can be
constructed in various theories with noncommutative, brane,
extra--dimension, warped and trapped brane type variables in sting, or
Finsler like and/or Ho\v{r}ava--Lifshits theories \cite%
{vexsol1,vex3,vexsol2,gheorghiu} when nonholonomically induced torsion
effects are significant/non-vanishing.

\subsubsection{Small Ricci soliton modifications of Kerr metrics and
modelling modified and massive gravity}

We can construct off--diagonal solutions for superposition of Ricci soliton
effects and $f$--modified and massive gravity interactions, see original
contributions and reviews of results in Refs. \cite%
{capoz,odints1,odints2,drg1,drg3,hr1,hr2,kour,stavr,mavr,gheorghiu}. Small
nonlinear effects and modifications can be distinguished in explicit form if
we take into account additional $f$--deformations, for instance, a "prime"
solution for massive gravity/ effective modeled in GR with source $\ ^{\mu
}\Lambda =\mu _{g}^{2}\ \lambda (x^{k^{\prime }}),$ or re--defined to $\
^{\mu }\tilde{\Lambda}=\mu _{g}^{2}\ \tilde{\lambda}=const.$ By adding a
"small" value $\ \widetilde{\Lambda }$ as determined by $f$--modifications, we
work in N--adapted frames with an effective source $\Upsilon =\widetilde{%
\Lambda }+\widetilde{\lambda }.$ We construct a class of off--diagonal
solutions in modified $f$--gravity generated from the Kerr black hole
solution as a result of two nonholonomic deformations
\begin{equation*}
(\mathbf{\mathring{g}},\mathbf{\mathring{N},\ }^{v}\mathring{\Upsilon}=0,%
\mathring{\Upsilon}=0)\rightarrow (\widetilde{\mathbf{g}},\widetilde{\mathbf{%
N}},\ ^{v}\widetilde{\Upsilon }=\widetilde{\lambda },\widetilde{\Upsilon }=%
\widetilde{\lambda })\rightarrow (\ ^{\varepsilon }\mathbf{g},\
^{\varepsilon }\mathbf{N,\Upsilon =\varepsilon \ }\widetilde{\Lambda }+\
^{\mu }\tilde{\Lambda},\ ^{v}\mathbf{\Upsilon =\varepsilon \ }\widetilde{%
\Lambda }+\ ^{\mu }\tilde{\Lambda}),
\end{equation*}%
when the target data $\mathbf{g=}\ ^{\varepsilon }\mathbf{g}$ and$\ \mathbf{%
N=}\ ^{\varepsilon }\mathbf{N}$ depend on a small parameter $\varepsilon ,$ $%
0<\varepsilon \ll 1.$ For simplicity, we construct generic off--diagonal
solutions with $|\mathbf{\varepsilon \ }\widetilde{\Lambda }|\ll |\ ^{\mu }%
\tilde{\Lambda}|,$ when $f$--modifications in N--adapted frames are much
smaller than massive gravity effects. A similar analysis for nonlinear
interactions with $|\mathbf{\varepsilon \ }\widetilde{\Lambda }|\gg |\ ^{\mu
}\tilde{\Lambda}|)$ is omitted. The corresponding N--adapted transforms are
parameterized as
\begin{eqnarray}
&&[\mathring{g}_{i},\mathring{h}_{a},\mathring{w}_{i},\mathring{n}%
_{i}]\rightarrow   \label{def2} \\
&&[g_{i}=(1+\varepsilon \chi _{i})\widetilde{\eta }_{i}\mathring{g}%
_{i},h_{3}=(1+\varepsilon \chi _{3})\widetilde{\eta }_{3}\mathring{h}%
_{3},h_{4}=(1+\varepsilon \chi _{4})\widetilde{\eta }_{4}\mathring{h}_{4},\
^{\varepsilon }w_{i}=\mathring{w}_{i}+\widetilde{w}_{i}+\varepsilon
\overline{w}_{i},\ ^{\varepsilon }n_{i}=\mathring{n}_{i}+\tilde{n}%
_{i}+\varepsilon \overline{n}_{i}];  \notag \\
&&\mathbf{\Upsilon =}\ ^{\mu }\tilde{\Lambda}(1+\varepsilon \ \widetilde{%
\Lambda }/\ ^{\mu }\tilde{\Lambda});\ \ \ ^{\varepsilon }\tilde{\Phi}=\tilde{%
\Phi}(x^{k},\varphi )[1+\varepsilon \ \ ^{1}\tilde{\Phi}(x^{k},\varphi )/%
\tilde{\Phi}(x^{k},\varphi )]=\exp [\ \ ^{\varepsilon }\varpi (x^{k},\varphi
)],  \notag
\end{eqnarray}%
\begin{equation*}
ds_{4\varepsilon dK}^{2}=\epsilon _{i}(1+\varepsilon \chi _{i})e^{\psi
(x^{k})}(dx^{i})^{2}+\frac{\ ^{\varepsilon }\tilde{\Phi}^{2}}{4\ \mathbf{%
\Upsilon }}\left[ dy^{3}+(\partial _{i}\ n)dx^{i}\right] ^{2}+\ \frac{%
(\partial _{\varphi }\ \ ^{\varepsilon }\tilde{\Phi})^{2}}{\ \mathbf{%
\Upsilon }\ \ ^{\varepsilon }\tilde{\Phi}^{2}}\left[ dy^{4}+(\partial _{i}\
\ ^{\varepsilon }\check{A})dx^{i}\right] ^{2},
\end{equation*}%
which for LC--configurations, $\partial _{i}\ \ ^{\varepsilon }\check{A}%
=\partial _{i}\ \ ^{\varepsilon }\check{A}+\varepsilon \partial _{i}\ \ ^{1}%
\check{A}$ is determined by $\ ^{\varepsilon }\tilde{\Phi}=\tilde{\Phi}%
+\varepsilon \ ^{1}\tilde{\Phi}$ following conditions (\ref{data4c}). The
values labeled by "$\circ $" and "$\widetilde{}$" are taken as in previous
sections \ but, for simplicity, we omit priming of indices and consider $%
\varepsilon \overline{n}_{i}=0.$ The $\chi $- and $w$--values are computed
for $\varepsilon $--deformed LC--configurations, see formulae (\ref{zerot})
for spacetime components, as solutions of the system (\ref{sourc1}) in the
form (\ref{e1})--(\ref{e4}) for a source $\mathbf{\Upsilon =\ ^{\mu }\tilde{%
\Lambda}+\ }\varepsilon \widetilde{\Lambda }.$

The nonholonomic deformations (\ref{def2}) of the off--diagonal metrics (\ref%
{nvlcmgs}) give a new class of $\varepsilon $--deformed solutions with%
\begin{eqnarray}
\chi _{1} &=&\chi _{2}=\chi ,\mbox{ for }\partial _{11}^{2}\chi +\epsilon
_{2}\partial _{22}^{2}\chi =2\widetilde{\Lambda };  \label{edefcel} \\
\chi _{3} &=&2\ ^{1}\tilde{\Phi}/\tilde{\Phi}-\mathbf{\ }\widetilde{\Lambda }%
/\ ^{\mu }\tilde{\Lambda},\ \chi _{4}=2\partial _{4}\ ^{1}\tilde{\Phi}/%
\tilde{\Phi}-2\ ^{1}\tilde{\Phi}/\tilde{\Phi}-\widetilde{\Lambda }/\ ^{\mu }%
\tilde{\Lambda}, \ \overline{w}_{i} = (\frac{\partial _{i}\ ^{1}\tilde{\Phi}%
}{\partial _{i}\tilde{\Phi}}-\frac{\partial _{4}\ ^{1}\tilde{\Phi}}{\partial
_{4}\tilde{\Phi}})\frac{\partial _{i}\tilde{\Phi}}{\partial _{4}\tilde{\Phi}}%
=\partial _{i}\ \ ^{1}\check{A},\overline{n}_{i}=0.  \notag
\end{eqnarray}%
There is no summation on index "$i"$ in the last formula and $\mathring{h}%
_{3^{\prime }}\mathring{h}_{4^{\prime }}=\overline{A}\overline{C}-\overline{B%
}^{2}$. The deformations are determined respectively by two generating
functions $\tilde{\Phi}$ and $\ ^{1}\tilde{\Phi}$ and two sources $\ ^{\mu }%
\tilde{\Lambda}$ and $\widetilde{\Lambda }.$

Summarizing the results, we construct an off--diagonal generalization of the
Kerr metric by Ricci solitons, "main" mass gravity terms and additional $%
\varepsilon $--parametric $f$--modifications,
\begin{eqnarray}
ds^{2} &=&e^{\psi (x^{k^{\prime }})}(1+\varepsilon \chi (x^{k^{\prime
}}))[(dx^{1^{\prime }})^{2}+(dx^{2^{\prime }})^{2}]-  \notag \\
&&\frac{\tilde{\Phi}^{2}}{4|\ ^{\mu }\tilde{\Lambda}|}\overline{A}%
[1+\varepsilon (2\ ^{1}\tilde{\Phi}/\tilde{\Phi}-\mathbf{\ }\widetilde{%
\Lambda }/\ ^{\mu }\tilde{\Lambda})][dy^{3^{\prime }}+\left( \partial
_{k^{\prime }}\ ^{\eta }n(x^{i^{\prime }})-\partial _{k^{\prime }}(\widehat{y%
}^{3^{\prime }}+\varphi \overline{B}/\overline{A})\right) dx^{k^{\prime
}}]^{2}+  \label{nvlcmgse} \\
&&\frac{(\partial _{\varphi }\tilde{\Phi})^{2}}{\ \ ^{\mu }\tilde{\Lambda}%
\tilde{\Phi}^{2}}(\overline{C}-\overline{B}^{2}/\overline{A})[1+\varepsilon
(2\partial _{4}\ ^{1}\tilde{\Phi}/\tilde{\Phi}-2\ ^{1}\tilde{\Phi}/\tilde{%
\Phi}-\widetilde{\Lambda }/\ ^{\mu }\tilde{\Lambda})][d\varphi +(\partial
_{i^{\prime }}\ \widetilde{A}+\varepsilon \partial _{i^{\prime }}\ \ ^{1}%
\check{A})dx^{i^{\prime }}]^{2}.  \notag
\end{eqnarray}

We can consider $\varepsilon $--deformations of type (\ref{def2}) for (\ref%
{ofindtmg}) and generate new classes of off--diagonal solutions with
nonholonomically induced torsion determined both by Ricci soliton, massive
and $f$--modifications of GR. Such geometric and physical models are new and can not be
identified with   effective ones with anisotropic polarizations in GR which
also give different $r$--jet symmetries and prolongations.

\subsection{Nonholonomic $r$--jet off--diagonal Ricci soliton prolongations
of the Kerr solution}

In reference \cite{gheorghiu}, we studied generic off--diagonal deformations
of the Kerr metric into solutions on higher dimensional spacetimes. The goal
of this section is to show how prolongations on $r$--jet variables can
performed following similar methods but generalized to include nonholonomic
jet variables.

\subsubsection{Jet one shell deformations with nontrivial cosmological
constant}

Jet symmetries impose certain constraints on possible off--diagonal
deformations of a Kerr metric generalized for a corresponding class of
solutions with any nontrivial cosmological constant in 6--d. (In a similar
form we can generalize the constructions for any finite number of shells).
The corresponding class of Kerr -- de Sitter jet prolongation configurations
are generated by nonholonomic deformations $(\mathbf{\mathring{g}},\mathbf{%
\mathring{N},\ }^{v}\mathring{\Upsilon}=0,\mathring{\Upsilon}=0)\rightarrow (%
\widetilde{\mathbf{g}},\widetilde{\mathbf{N}}\mathbf{,\ }^{v}\widetilde{%
\Upsilon }=\Lambda ,\widetilde{\Upsilon }=\Lambda ,\mathbf{\ }^{v_{1}}%
\widetilde{\Upsilon }=\Lambda )$ when solutions are characterized by a jet
Killing symmetry $\eth /\partial \zeta ^{5}$ and parameterized as%
\begin{eqnarray}
ds^{2} &=&e^{\psi (x^{k^{\prime }})}[(dx^{1^{\prime }})^{2}+(dx^{2^{\prime
}})^{2}]-\frac{\tilde{\Phi}^{2}}{4\Lambda }\overline{A}[dy^{3^{\prime
}}+\left( \partial _{k^{\prime }}\ ^{\eta }n(x^{i^{\prime }})-\partial
_{k^{\prime }}(\widehat{y}^{3^{\prime }}+\varphi \overline{B}/\overline{A}%
)\right) dx^{k^{\prime }}]^{2}+  \label{6dks} \\
&&\frac{(\partial _{\varphi }\tilde{\Phi})^{2}}{\ \Lambda \overline{A}\tilde{%
\Phi}^{2}}(\overline{A}\overline{C}-\overline{B}^{2})[d\varphi +(\partial
_{i^{\prime }}\ ^{\eta }\widetilde{A})dx^{i^{\prime }}]^{2}+\frac{\ ^{1}%
\tilde{\Phi}^{2}}{4\ \Lambda }\left[ d\zeta ^{5}+(\partial _{\tau }\
^{1}n)du^{\tau }\right] ^{2}+\ \frac{(\eth _{6}\ ^{1}\tilde{\Phi})^{2}}{\
\Lambda \ ^{1}\tilde{\Phi}^{2}}\left[ d\zeta ^{6}+(\partial _{\tau }\ ^{1}%
\check{A})du^{\tau }\right] ^{2}.  \notag
\end{eqnarray}%
The generating functions for such d--metrics are parameterized as {\small
\begin{equation*}
\tilde{\Phi} =\tilde{\Phi}(x^{k^{\prime }},\varphi ),\ ^{1}\tilde{\Phi}%
(u^{\beta },\zeta ^{6})=\ ^{1}\tilde{\Phi}(x^{k^{\prime }},t,\varphi ,\zeta
^{6});\ ^{\eta }n=\ ^{\eta }n(x^{i^{\prime }}), \ ^{1}n = \ ^{1}n(u^{\beta
},\zeta ^{6});\ ^{\eta }\widetilde{A}=\ ^{\eta }\widetilde{A}(x^{k^{\prime
}},\varphi ),\ ^{1}\check{A}=\ ^{1}\check{A}(u^{\beta },\zeta ^{6}),
\end{equation*}%
} and subject to LC--conditions and conditions of integrability and the
"primary" data $\overline{A},\overline{B},\overline{C}$ are taken for the
Kerr  solution in the form (\ref{kerrcoef}).

By imposing additional symmetries and constraints on the spacetime generating
functions, we can "extract" ellipsoid configurations for a subclass of
metrics with $\varepsilon $--deformations,
\begin{eqnarray*}
ds^{2} &=&e^{\psi (x^{k^{\prime }})}[(dx^{1^{\prime }})^{2}+(dx^{2^{\prime
}})^{2}]-\frac{\tilde{\Phi}^{2}}{4\Lambda }\overline{A}[1+2\varepsilon
\underline{\zeta }\sin (\omega _{0}\varphi +\varphi _{0})][dy^{3^{\prime
}}+\left( \partial _{k^{\prime }}\ ^{\eta }n(x^{i^{\prime }})-\partial
_{k^{\prime }}(\widehat{y}^{3^{\prime }}+\varphi \frac{\overline{B}}{%
\overline{A}})\right) dx^{k^{\prime }}]^{2} \\
&&+\frac{(\partial _{\varphi }\tilde{\Phi})^{2}}{\Lambda \tilde{\Phi}^{2}}(%
\overline{C}-\overline{B}^{2}/\overline{A})[1+\varepsilon (2\frac{\partial
_{\varphi }\tilde{\Phi}}{\tilde{\Phi}}\underline{\zeta }\sin (\omega
_{0}\varphi +\varphi _{0})+2\omega _{0}\underline{\zeta }\cos (\omega
_{0}\varphi +\varphi _{0}))][d\varphi +(\partial _{i^{\prime }}\ ^{\eta }%
\widetilde{A})dx^{i^{\prime }}]^{2} \\
&&+\frac{\ ^{1}\tilde{\Phi}^{2}}{4\ \Lambda }\left[ d\zeta ^{5}+(\partial
_{\tau }\ ^{1}n)du^{\tau }\right] ^{2}+\ \frac{(\eth _{6}\ ^{1}\tilde{\Phi}%
)^{2}}{\ \Lambda \ ^{1}\tilde{\Phi}^{2}}\left[ d\zeta ^{6}+(\partial _{\tau
}\ ^{1}\check{A})du^{\tau }\right] ^{2},
\end{eqnarray*}%
where $\underline{\zeta },\omega _{0}$ and $\varphi _{0}$ are certain
constants determining gravitational rotoid configurations with eccentricity $%
\varepsilon .$ For small values of $\varepsilon ,$ such metrics describe
"slightly" deformed Kerr black holes embedded self--consistently into a
generic off--diagonal jet prolongation as a 6--d spacetime.

\subsubsection{Two shell effective 8--d jet prolongations}

Applying the AFDM, we can construct two shell nonholonomic jet prolongations
of the Kerr metric which, in general, are with nontrivial induced torsion
for an effective 8-d spacetime with interior jet symmetries. The
nonholonomic deformations are defined by the data $(\mathbf{\mathring{g}},%
\mathbf{\mathring{N},\ }^{v}\mathring{\Upsilon}=0,\mathring{\Upsilon}%
=0)\rightarrow (\widetilde{\mathbf{g}},\widetilde{\mathbf{N}}\mathbf{,\ }^{v}%
\widetilde{\Upsilon }=\Lambda ,\widetilde{\Upsilon }=\Lambda ,\mathbf{\ }%
^{v_{1}}\widetilde{\Upsilon }=\Lambda ,\mathbf{\ }^{v_{2}}\widetilde{%
\Upsilon }=\Lambda )$ and extending on jet variables the 4--d quadratic element (%
\ref{ofindtmg}) but for a different source (we consider a cosmological
constant $\Lambda $ for all dimensions). The corresponding class of
solutions is determined by {\small
\begin{eqnarray}
ds^{2} &=&e^{\psi (x^{k^{\prime }})}[(dx^{1^{\prime }})^{2}+(dx^{2^{\prime
}})^{2}]-\frac{\Phi ^{2}}{4\Lambda }\overline{A}[dy^{3^{\prime }}+\left( \
_{1}n_{k^{\prime }}(x^{i^{\prime }})+\ _{2}n_{k^{\prime }}(x^{i^{\prime }})%
\frac{(\partial _{\varphi }\Phi )^{2}}{\Phi ^{5}}-\partial _{k^{\prime }}(%
\widehat{y}^{3^{\prime }}+\varphi \frac{\overline{B}}{\overline{A}})\right)
dx^{k^{\prime }}]^{2}  \notag \\
&&+\frac{(\partial _{\varphi }\Phi )^{2}}{\ \Lambda \Phi ^{2}\overline{A}}(%
\overline{A}\overline{C}-\overline{B}^{2})[d\varphi +\frac{\partial
_{i^{\prime }}\Phi }{\partial _{\varphi }\Phi }dx^{i^{\prime }}]^{2}+\frac{\
^{1}\tilde{\Phi}^{2}}{4\ \Lambda }\left[ d\zeta ^{5}+(\partial _{\tau }\
^{1}n)du^{\tau }\right] ^{2}+\ \frac{(\eth _{6}\ ^{1}\tilde{\Phi})^{2}}{\
\Lambda \ ^{1}\tilde{\Phi}^{2}}\left[ d\zeta ^{6}+(\partial _{\tau }\ ^{1}%
\check{A})du^{\tau }\right] ^{2}  \notag \\
&&+\frac{\ ^{2}\tilde{\Phi}^{2}}{4\ \Lambda }\left[ d\zeta ^{7}+(\partial
_{\tau _{1}}\ ^{2}n)du^{\tau _{1}}\right] ^{2}+\ \frac{(\eth _{8}\ ^{2}%
\tilde{\Phi})^{2}}{\ \Lambda \ ^{2}\tilde{\Phi}^{2}}\left[ d\zeta
^{8}+(\partial _{\tau _{1}}\ ^{2}\check{A})du^{\tau _{1}}\right] ^{2}.
\label{8dfd}
\end{eqnarray}%
} The generating functions depend on spacetime and jet variables,
\begin{eqnarray}
\Phi &=&\Phi (x^{k^{\prime }},\varphi ),\ ^{1}\tilde{\Phi}(u^{\beta },\zeta
^{6})=\ ^{1}\tilde{\Phi}(x^{k^{\prime }},t,\varphi ,\zeta ^{6}),\ \ ^{2}%
\tilde{\Phi}(u^{\beta _{1}},\zeta ^{8})=\ ^{2}\tilde{\Phi}(x^{k^{\prime
}},t,\varphi ,\zeta ^{5},\zeta ^{6},\zeta ^{8});  \label{genf8fd} \\
\ ^{1}n &=&\ ^{1}n(u^{\beta },\zeta ^{6}),\ ^{2}n=\ ^{2}n(u^{\beta
_{1}},\zeta ^{8}),\ ^{\eta }\widetilde{A}=\ ^{\eta }\widetilde{A}%
(x^{k^{\prime }},\varphi ),\ ^{1}\check{A}=\ ^{1}\check{A}(x^{k^{\prime
}},t,\varphi ,\zeta ^{6}),\ ^{2}\check{A}=\ ^{2}\check{A}(x^{k^{\prime
}},t,\varphi ,\zeta ^{5},\zeta ^{6},\zeta ^{8}).  \notag
\end{eqnarray}%
Such values are chosen in such forms when the nonholonomically induced
torsion (\ref{dtors}) is effectively modeled on a 4--d pseudo--Riemannian
spacetime but on jet shells $s=1$ and $s=2$ the torsion fields are zero.
This mean that there are jet coordinate transforms to certain classes of
holonomic variables. We can generate jet depending nontrivial torsion
N--adapted coefficients if nontrivial integration functions of type $\
_{2}n_{k^{\prime }}(x^{i^{\prime }})$ are extended to contain jet variables.

\subsubsection{Kerr Ricci soliton deformations and vacuum $r$--jet
prolongations}

Classes of solutions exist with jet variables describing vacuum
ellipsoid spacetime configurations with prolongations on two shell jet
variables when the source is of type $\mathbf{\Upsilon =\widetilde{\lambda }%
+\ }\varepsilon (\widetilde{\Lambda }+\Lambda )=0,$with effective massive
gravity term $\ ^{\mu }\tilde{\Lambda}\mathbf{=}\mu _{g}^{2}|\ \lambda |,$
and give ellipsoidal off--diagonal configurations in GR. For such metrics, $%
\varepsilon =-\ ^{\mu }\tilde{\Lambda}/(\widetilde{\Lambda }+\Lambda )\ll 1$
can be considered as an eccentricity parameter. The corresponding models of
off--diagonal jet interior gravitational interactions are with $f$%
--modifications when $\widetilde{\Lambda }$ compensates nonholonomic
contributions via effective constant $\widetilde{\Lambda }$ and relates the
constructions to massive gravity deformations of a Kerr solution. This
subclass of solutions for $\varepsilon $--deformations into vacuum solutions
is parameterized by target ansatz {\small
\begin{eqnarray}
ds^{2} &=&e^{\psi (x^{k^{\prime }})}(1+\varepsilon \chi (x^{k^{\prime
}}))[(dx^{1^{\prime }})^{2}+(dx^{2^{\prime }})^{2}]-\frac{\tilde{\Phi}^{2}}{%
4\ ^{\mu }\tilde{\Lambda}}\overline{A}[1+\varepsilon \chi _{3^{\prime
}}][dy^{3^{\prime }}+\left( \partial _{k^{\prime }}\ ^{\eta }n(x^{i^{\prime
}})-\partial _{k^{\prime }}(\widehat{y}^{3^{\prime }}+\varphi \overline{B}/%
\overline{A})\right) dx^{k^{\prime }}]^{2}+  \notag \\
&&\frac{(\partial _{\varphi }\tilde{\Phi})^{2}\eta _{4^{\prime }}}{\ ^{\mu }%
\tilde{\Lambda}\tilde{\Phi}^{2}}(\overline{C}-\frac{\overline{B}^{2}}{%
\overline{A}})[1+\varepsilon \chi _{4^{\prime }}][d\varphi +(\partial
_{i^{\prime }}\ \widetilde{A}+\varepsilon \partial _{i^{\prime }}\ \ ^{1}%
\check{A})dx^{i^{\prime }}]^{2}+\frac{\ ^{1}\tilde{\Phi}^{2}}{4(\ \widetilde{%
\Lambda }+\Lambda )}\left[ d\zeta ^{5}+(\partial _{\tau }\ ^{1}n)du^{\tau }%
\right] ^{2}+  \label{kmasedvac} \\
&&\frac{(\eth _{6}\ ^{1}\tilde{\Phi})^{2}}{(\ \widetilde{\Lambda }+\Lambda
)\ ^{1}\tilde{\Phi}^{2}}\left[ d\zeta ^{6}+(\partial _{\tau }\ ^{1}\check{A}%
)du^{\tau }\right] ^{2}+\frac{\ ^{2}\tilde{\Phi}^{2}}{4\ (\widetilde{\Lambda
}+\Lambda )}\left[ d\zeta ^{7}+(\eth _{\tau _{1}}\ ^{2}n)du^{\tau _{1}}%
\right] ^{2}+\ \frac{(\eth _{8}\ ^{2}\tilde{\Phi})^{2}}{\ (\ \widetilde{%
\Lambda }+\Lambda )\ ^{2}\tilde{\Phi}^{2}}\left[ d\zeta ^{8}+(\eth _{\tau
_{1}}\ ^{2}\check{A})du^{\tau _{1}}\right] ^{2}.  \notag
\end{eqnarray}%
} The jet components are generated by functions $\ ^{1}\tilde{\Phi},$ $^{2}%
\tilde{\Phi}$ and N--coefficients similar to solutions (\ref{8dfd}) but
with modified effective jet prolongation sources, $\Lambda \rightarrow \
\widetilde{\Lambda }+\Lambda .$ This result shows that interior jet
interactions can mimic $\varepsilon $--deformations in order to compensate
contributions from $f$--modifications and even the effective vacuum
configurations for the 4--d horizontal part. In general, vacuum metrics (\ref%
{kmasedvac}) encode jet modifications/ polarizations of physical constants
and coefficients of metrics under nonlinear polarizations of an effective
8-d vacuum distinguishing 4--d nonholonomic configurations and Ricci soliton
or massive gravity contributions. Jet variables and $f$--modified
contributions are described by terms proportional to eccentricity parameter $%
\varepsilon .$

\subsubsection{Jet ellipsoid like Kerr -- de Sitter configurations}

Using the solutions (\ref{8dfd}), we can construct a class of non--vacuum 8--d
jet prolonged solutions with rotoid configurations. For this, we choose for $%
\varepsilon $--deformations (see a similar formula (\ref{edefcel}) for 4-d)
a small polarization $\chi _{3}=2\ ^{1}\tilde{\Phi}/\tilde{\Phi}-\mathbf{\ }%
(\ \widetilde{\Lambda }+\Lambda )/\ ^{\mu }\tilde{\Lambda}=2\underline{\zeta
}\sin (\omega _{0}\varphi +\varphi _{0}).$ Re--expressing $\ ^{1}\tilde{\Phi}%
=\tilde{\Phi}[\mathbf{\ }(\ \widetilde{\Lambda }+\Lambda )/2\ ^{\mu }\tilde{%
\Lambda}+\underline{\zeta }\sin (\omega _{0}\varphi +\varphi _{0})]$ and (%
\ref{genf8fd}), one can generate a class of off--diagonal jet prolongations
of ellipsoid Kerr -- de Sitter configurations \\
{\small
\begin{eqnarray*}
&&ds^{2}=\\
&&e^{\psi (x^{k^{\prime }})}(1+\varepsilon \chi (x^{k^{\prime
}}))[(dx^{1^{\prime }})^{2}+(dx^{2^{\prime }})^{2}]- \frac{\tilde{\Phi}^{2}%
\overline{A}}{4|\ ^{\mu }\tilde{\Lambda}|}[1+2\varepsilon \underline{\zeta }%
\sin (\omega _{0}\varphi +\varphi _{0})][dy^{3^{\prime }}+(\partial
_{k^{\prime }}\ ^{\eta }n(x^{i^{\prime }})-\partial _{k^{\prime }}(\widehat{y%
}^{3^{\prime }}+\varphi \frac{\overline{B}}{\overline{A}})) dx^{k^{\prime
}}]^{2} \\
&&+\frac{(\partial _{\varphi }\tilde{\Phi})^{2}}{\ \ ^{\mu }\tilde{\Lambda}%
\tilde{\Phi}^{2}}(\overline{C}-\frac{\overline{B}^{2}}{\overline{A}}%
)[1+\varepsilon (\frac{\partial _{\varphi }\tilde{\Phi}}{\tilde{\Phi}}\frac{%
\ \widetilde{\Lambda }+\Lambda }{\ ^{\mu }\tilde{\Lambda}}+2\frac{\partial
_{\varphi }\tilde{\Phi}}{\tilde{\Phi}}\underline{\zeta }\sin (\omega
_{0}\varphi +\varphi _{0})+2\omega _{0}\mathbf{\ }\underline{\zeta }\cos
(\omega _{0}\varphi +\varphi _{0}))][d\varphi +(\partial _{i^{\prime }}\
\widetilde{A}+\varepsilon \partial _{i^{\prime }}\ \ ^{1}\check{A}%
)dx^{i^{\prime }}]^{2} \\
&&+\frac{\ ^{1}\tilde{\Phi}^{2}}{4\ (\ \widetilde{\Lambda }+\Lambda )}\left[
d\zeta ^{5}+(\partial _{\tau }\ ^{1}n)du^{\tau }\right] ^{2}+\ \frac{(\eth
_{6}\ ^{1}\tilde{\Phi})^{2}}{\ (\ \widetilde{\Lambda }+\Lambda )\ ^{1}\tilde{%
\Phi}^{2}}\left[ d\zeta ^{6}+(\partial _{\tau }\ ^{1}\check{A})du^{\tau }%
\right] ^{2} \\
&&+\frac{\ ^{2}\tilde{\Phi}^{2}}{4\ (\widetilde{\Lambda }+\Lambda )}\left[
d\zeta ^{7}+(\eth _{\tau _{1}}\ ^{2}n)du^{\tau _{1}}\right] ^{2}+\ \frac{%
(\eth _{8}\ ^{2}\tilde{\Phi})^{2}}{\ (\ \widetilde{\Lambda }+\Lambda )\ ^{2}%
\tilde{\Phi}^{2}}\left[ d\zeta ^{8}+(\eth _{\tau _{1}}\ ^{2}\check{A}%
)du^{\tau _{1}}\right] ^{2}.
\end{eqnarray*}%
}These metrics possess the Killing symmetry $\eth _{7}$ and define $%
\varepsilon $--deformations of Kerr -- de Sitter black holes into ellipsoid
configurations with effective cosmological constants determined,
respectively, by constants in Ricci soliton models, massive gravity, $f$%
--modifications and jet prolongation contributions.

\appendix

\setcounter{equation}{0} \renewcommand{\theequation}
{A.\arabic{equation}} \setcounter{subsection}{0}
\renewcommand{\thesubsection}
{A.\arabic{subsection}}

\section{ N--adapted Coefficients and Proofs}

We provide a set of necessary N-adapted coefficient formulae that are
important for proofs and applications. A series of results obtained in \cite%
{vexsol1,vex3,vexsol2} are reformulated and generalized for nonholonomic $r$%
--jet variables with conventional $2+2+....$ splitting.

\subsection{Torsions and Curvature of d--connections on $\mathbf{J}^{r}(%
\mathbf{V},\mathbf{V}^{\prime })$ with 2-d shells}

\label{sscoefcurv}For any d--connection strucutre $\ ^{s}\mathbf{D}$ and $r$%
--jet 2d shell prolongations with coefficients (\ref{coefd}), there are two
important theorems:

\begin{theorem}
The N--adapted coefficients of d--torsion $\ ^{s}\mathbf{T}=\{\mathbf{T}_{\
\beta _{s}\gamma _{s}}^{\alpha _{s}}\}$ from (\ref{dt}) are computed
recurrently "shall by shell" following formulae {\small
\begin{eqnarray}
T_{\ jk}^{i} &=&L_{jk}^{i}-L_{kj}^{i},T_{\ ja}^{i}=C_{jb}^{i},T_{\
ji}^{a}=-\ ^{N}J_{\ ji}^{a}, T_{aj}^{c} = L_{aj}^{c}-\partial
_{a}(N_{j}^{c}),T_{\ bc}^{a}=C_{bc}^{a}-C_{cb}^{a},%
\mbox{ spacetime
components }; ...  \label{dtorsj} \\
T_{\ j_{s}k_{s}}^{i_{s}}
&=&L_{j_{s}k_{s}}^{i_{s}}-L_{k_{s}j_{s}}^{i_{s}},T_{\
j_{s}a_{s}}^{i_{s}}=C_{j_{s}b_{s}}^{i_{s}},T_{\ j_{s}i_{s}}^{a_{s}}=-\
^{N}J_{\ j_{s}i_{s}}^{a_{s}}, T_{a_{s}j_{s}}^{c_{s}} =
L_{a_{s}j_{s}}^{c_{s}}-\eth _{a_{s}}(N_{j_{s}}^{c_{s}}),T_{\
b_{s}c_{s}}^{a_{s}}=C_{b_{s}c_{s}}^{a_{s}}-C_{cb}^{a_{s}},\mbox{ r--jet }.
\notag
\end{eqnarray}
}
\end{theorem}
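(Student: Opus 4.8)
The statement to prove is the final Theorem: the N--adapted coefficients of the d--torsion $\ ^{s}\mathbf{T}=\{\mathbf{T}_{\ \beta _{s}\gamma _{s}}^{\alpha _{s}}\}$ obtained from the torsion 2--form (\ref{dt}) are given, recurrently shell by shell, by the displayed formulae (\ref{dtorsj}). This is essentially a bookkeeping computation: expand the defining relation $\mathcal{T}^{\alpha _{s}}=d\mathbf{e}^{\alpha _{s}}+\mathbf{\Gamma }_{\ \beta _{s}}^{\alpha _{s}}\wedge \mathbf{e}^{\beta _{s}}$ in the N--adapted (co)frame (\ref{naders})--(\ref{nadifs}) and read off the coefficients against the basis of 2--forms $\mathbf{e}^{\gamma _{s}}\wedge \mathbf{e}^{\delta _{s}}$, using the nonholonomy relations (\ref{anhrel1}) together with $W_{i_{s}a_{s}}^{b_{s}}=\partial _{a_{s}}N_{i_{s}}^{b_{s}}$ and $W_{j_{s}i_{s}}^{a_{s}}=\ ^{N}J_{i_{s}j_{s}}^{a_{s}}$.

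First I would fix a single 2--d shell and work purely inside it, i.e. reduce to the familiar $n+m$ (here $2+2$) situation, because the claim ``recurrently shell by shell'' means exactly that the torsion of $\ ^{s}\mathbf{D}$ restricted to a shell has the same algebraic shape as on the base $\mathbf{V}$, with every symbol carrying a shell index $s$ and with $\partial_{a_s}$ replaced by $\eth_{a_s}$. Concretely, I would first compute the exterior derivatives of the N--adapted coframe: $d e^{i_s}=0$ (these are coordinate differentials $dx^{i_s}$), and $d\mathbf{e}^{a_s}=d(d\zeta^{a_s}+N_{i_s}^{a_s}dx^{i_s})=(\partial_{j_s}N_{i_s}^{a_s})dx^{j_s}\wedge dx^{i_s}+(\eth_{b_s}N_{i_s}^{a_s})d\zeta^{b_s}\wedge dx^{i_s}$, then re-express $dx$, $d\zeta$ in terms of $e^{i_s},\mathbf{e}^{a_s}$ to get the $W$--coefficients; this yields the $-\ ^{N}J_{\ j_s i_s}^{a_s}$ and $\partial_{a_s}N_{j_s}^{c_s}$ (here $\eth_{a_s}N_{j_s}^{c_s}$) contributions. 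Next I would substitute the connection 1--form $\mathbf{\Gamma}_{\ \beta_s}^{\alpha_s}=\mathbf{\Gamma}_{\ \beta_s\gamma_s}^{\alpha_s}\mathbf{e}^{\gamma_s}$ with coefficients packaged as $(L,C)$ in (\ref{coefd}), form $\mathbf{\Gamma}_{\ \beta_s}^{\alpha_s}\wedge \mathbf{e}^{\beta_s}$, collect terms by the type of wedge ($hh$, $hv$, $vv$), and match against $\mathbf{T}_{\ \beta_s\gamma_s}^{\alpha_s}\mathbf{e}^{\gamma_s}\wedge \mathbf{e}^{\delta_s}$. The $hh$ part gives $T_{\ jk}^i=L_{jk}^i-L_{kj}^i$ and $T_{\ ji}^a=-\ ^{N}J_{\ ji}^a$; the mixed $hv$ part gives $T_{\ ja}^i=C_{jb}^i$ and $T_{aj}^c=L_{aj}^c-\partial_a(N_j^c)$; the $vv$ part gives $T_{\ bc}^a=C_{bc}^a-C_{cb}^a$. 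Finally, since the N--connection splitting (\ref{ncshell}) is an iterated Whitney sum and $\ ^{s}\mathbf{D}$ was constructed (in the earlier Theorem--Definition) so that $\ ^{k}D$ agrees with $\mathbf{D}_{\alpha_k}$ on each shell, the same computation applies verbatim at level $s$ with all indices boosted, which is precisely the recurrence asserted in (\ref{dtorsj}).

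The main obstacle is not conceptual but organizational: keeping track of which N--elongated operator $\mathbf{e}_{i_s}=\partial_{i_s}-N_{i_s}^{a_s}\eth_{a_s}$ appears where, so that cross terms between different shells do not contaminate the intra-shell coefficients. In the iterated construction, $N_{i_s}^{a_s}$ may depend on all lower-shell coordinates, so $d\mathbf{e}^{a_s}$ a priori contains wedges $\mathbf{e}^{b_k}\wedge e^{i_s}$ with $k<s$; I would need to argue (using the shell-adapted parameterization $\ ^{s}\mathbf{N}=N_{i_s}^{a_s}dx^{i_s}\otimes \partial/\partial\zeta^{a_s}$ from the first Theorem, where only $x^{i_s}$ appears as the lower index) that the d--connection coefficients of $\ ^{s}\mathbf{D}$ are block-structured so these cross contributions are absorbed into the lower-shell torsion blocks and do not produce new independent components. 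Once that block-triangular structure is in place, the formulae decouple shell by shell and the ellipsis in (\ref{dtorsj}) is justified by induction on $s$. I would relegate the full index-chase to Appendix \ref{sscoefcurv}, citing the analogous higher-dimensional computation in \cite{vex3,vexsol2} for the detailed coefficient expressions, and present here only the structure of the argument: compute $d\mathbf{e}^{\alpha_s}$, add $\mathbf{\Gamma}\wedge\mathbf{e}$, sort by wedge type, and invoke the shell recurrence.
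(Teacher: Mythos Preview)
Your proposal is correct and follows the same approach as the paper: expand the torsion 2--form $\mathcal{T}^{\alpha_s}=d\mathbf{e}^{\alpha_s}+\mathbf{\Gamma}^{\alpha_s}_{\ \beta_s}\wedge\mathbf{e}^{\beta_s}$ in the N--adapted (co)frames (\ref{naders})--(\ref{nadifs}) and read off the $(L,C)$--coefficients by wedge type, shell by shell. The paper's own proof is a single sentence stating that the coefficients are obtained by ``standard differential form calculus with (\ref{dt}) (or, in operator form, using (\ref{torsshell}))''; your sketch is strictly more detailed, and your explicit remark about cross-shell contributions and the block-triangular structure of $\ ^{s}\mathbf{D}$ goes beyond what the paper spells out.
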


\begin{proof}
The coefficients (\ref{dtors}) are computed for any $\widehat{\mathbf{D}}=\{%
\mathbf{\Gamma }_{\ \beta _{s}\gamma _{s}}^{\alpha _{s}}\}$ and N--adapted
frames (\ref{naders}) and (\ref{nadifs}) using standard differential form
calculus with (\ref{dt}) (or, in operator form, using the formula (\ref%
{torsshell})).

$\square $
\end{proof}

\vskip5pt

\begin{theorem}
The N--adapted coefficients of d--curvature $\ ^{s}\mathbf{R}=\{\mathbf{%
\mathbf{R}}_{\ \ \beta _{s}\gamma _{s}\delta _{s}}^{\alpha _{s}}\}$ from (%
\ref{dc}) are computed recurrently "shell by shell" following formulae
\begin{eqnarray}
R_{\ hjk}^{i} &=&\mathbf{\partial }_{k}L_{\ hj}^{i}-\partial _{j}L_{\
hk}^{i}+L_{\ hj}^{m}L_{\ mk}^{i}-L_{\ hk}^{m}L_{\ mj}^{i}-C_{\ ha}^{i}\
^{N}J_{\ kj}^{a},  \notag \\
R_{\ bjk}^{a} &=&\mathbf{\partial }_{k}L_{\ bj}^{a}-\partial _{j}L_{\
bk}^{a}+L_{\ bj}^{c}L_{\ ck}^{a}-L_{\ bk}^{c}L_{\ cj}^{a}-C_{\ bc}^{a}\
^{N}J_{\ kj}^{c},  \label{dcurv} \\
R_{\ jka}^{i} &=&\mathbf{\partial }_{a}L_{\ jk}^{i}-D_{k}C_{\ ja}^{i}+C_{\
jb}^{i}\widehat{T}_{\ ka}^{b}, \ R_{\ bka}^{c} = e_{a}L_{\ bk}^{c}-D_{k}C_{\
ba}^{c}+C_{\ bd}^{c}T_{\ ka}^{c},  \notag \\
R_{\ jbc}^{i} &=&\mathbf{\partial }_{c}C_{\ jb}^{i}-\partial _{b}C_{\
jc}^{i}+C_{\ jb}^{h}C_{\ hc}^{i}-C_{\ jc}^{h}C_{\ hb}^{i},  \notag \\
R_{\ bcd}^{a} &=&\partial _{d}C_{\ bc}^{a}-\partial _{c}C_{\ bd}^{a}+C_{\
bc}^{e}C_{\ ed}^{a}-C_{\ bd}^{e}C_{\ ec}^{a},\mbox{ spacetime components };
\notag
\end{eqnarray}
\begin{eqnarray*}
R_{\ h_{s}j_{s}k_{s}}^{i_{s}} &=&\eth _{k_{s}}L_{\ h_{s}j_{s}}^{i_{s}}-\eth
_{j_{s}}L_{\ h_{s}k_{s}}^{i_{s}}+L_{\ h_{s}j_{s}}^{m_{s}}L_{\
m_{s}k_{s}}^{i_{s}}-L_{\ h_{s}k_{s}}^{m_{s}}L_{\ m_{s}j_{s}}^{i_{s}}-C_{\
h_{s}a_{s}}^{i_{s}}\ ^{N}J_{\ k_{s}j_{s}}^{a_{s}}, \\
R_{\ b_{s}j_{s}k_{s}}^{a_{s}} &=&\eth _{k}L_{\ bj}^{a}-\eth _{j}L_{\
bk}^{a}+L_{\ bj}^{c}L_{\ ck}^{a}-L_{\ bk}^{c}L_{\ cj}^{a}-C_{\ bc}^{a}\
^{N}J_{\ kj}^{c}, \\
R_{\ j_{s}k_{s}a_{s}}^{i_{s}} &=&\eth _{a_{s}}L_{\
j_{s}k_{s}}^{i_{s}}-D_{k_{s}}C_{\ j_{s}a_{s}}^{i_{s}}+C_{\
j_{s}b_{s}}^{i_{s}}\widehat{T}_{\ k_{s}a_{s}}^{b_{s}}, \ R_{\
b_{s}k_{s}a_{s}}^{c_{s}} = \eth _{a_{s}}L_{\
b_{s}k_{s}}^{c_{s}}-D_{k_{s}}C_{\ b_{s}a_{s}}^{c_{s}}+C_{\
b_{s}d_{s}}^{c_{s}}T_{\ k_{s}a_{s}}^{c_{s}}, \\
R_{\ j_{s}b_{s}c_{s}}^{i_{s}} &=&\eth _{c_{s}}C_{\ j_{s}b_{s}}^{i_{s}}-\eth
_{b_{s}}C_{\ j_{s}c_{s}}^{i_{s}}+C_{\ j_{s}b_{s}}^{h_{s}}C_{\
h_{s}c_{s}}^{i_{s}}-C_{\ j_{s}c_{s}}^{h_{s}}C_{\ h_{s}b_{s}}^{i_{s}}, \\
R_{\ b_{s}c_{s}d_{s}}^{a_{s}} &=&\eth _{d_{s}}C_{\ b_{s}c_{s}}^{a_{s}}-\eth
_{c_{s}}C_{\ b_{s}d_{s}}^{a_{s}}+C_{\ b_{s}c_{s}}^{e_{s}}C_{\
e_{s}d_{s}}^{a_{s}}-C_{\ b_{s}d_{s}}^{e_{s}}C_{\ e_{s}c_{s}}^{a_{s}},%
\mbox{
spacetime components }.
\end{eqnarray*}
\end{theorem}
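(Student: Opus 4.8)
The plan is to start from the intrinsic definition of the curvature $2$--form (\ref{dc}), namely $\mathcal{R}_{~\beta_s}^{\alpha_s}=d\mathbf{\Gamma}_{\ \beta_s}^{\alpha_s}-\mathbf{\Gamma}_{\ \beta_s}^{\gamma_s}\wedge\mathbf{\Gamma}_{\ \gamma_s}^{\alpha_s}=\mathbf{R}_{\ \beta_s\gamma_s\delta_s}^{\alpha_s}\mathbf{e}^{\gamma_s}\wedge\mathbf{e}^{\delta_s}$, and to expand it with respect to the N--adapted frames (\ref{naders})--(\ref{nadifs}). First I would insert the N--adapted parametrization (\ref{coefd}) of the connection $1$--form $\mathbf{\Gamma}_{\ \beta_s}^{\alpha_s}=\mathbf{\Gamma}_{\ \beta_s\gamma_s}^{\alpha_s}\mathbf{e}^{\gamma_s}$ into $d\mathbf{\Gamma}_{\ \beta_s}^{\alpha_s}-\mathbf{\Gamma}_{\ \beta_s}^{\gamma_s}\wedge\mathbf{\Gamma}_{\ \gamma_s}^{\alpha_s}$, and compute the exterior differentials $d\mathbf{e}^{a_s}$ of the coframe elements from the nonholonomy relations (\ref{anhrel1}), using $W_{i_s a_s}^{b_s}=\partial_{a_s}N_{i_s}^{b_s}$ and $W_{j_s i_s}^{a_s}=\ ^{J}N_{i_s j_s}^{a_s}$. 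The Neijenhuis term $\ ^{N}J_{\ kj}^{a}$ then enters precisely the "pure horizontal" blocks $R_{\ hjk}^{i}$ and $R_{\ bjk}^{a}$, while the $\partial_{a_s}N$ contributions feed into the mixed blocks $R_{\ jka}^{i}$, $R_{\ bka}^{c}$; matching the coefficients of $e^{j_s}\wedge e^{k_s}$, $e^{k_s}\wedge\mathbf{e}^{a_s}$ and $\mathbf{e}^{b_s}\wedge\mathbf{e}^{c_s}$ reproduces the six families of formulae listed for the spacetime components, together with the already established d--torsion identities (\ref{dtorsj}).

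Next I would establish the "shell by shell" recursion, which is the genuinely new ingredient. By the Theorem on the jet prolongation of the N--connection, $T\ ^{s}\mathbf{V}$ carries the Whitney splitting (\ref{ncshell}); by the Theorem--Definition on the N--adapted shell prolongation of d--connections, $\ ^{s}\mathbf{D}$ preserves this splitting and, after an admissible frame transform, satisfies $\ ^{k}D_{\alpha_{k-1}}=\mathbf{D}_{\alpha_{k-1}}$ for every $k\le s$. Hence at level $s$ the pair formed by the accumulated horizontal distribution (everything through shell $s-1$) and the new $2$--dimensional vertical shell $\ ^{s}v\mathbf{V}$ plays exactly the structural role that $(h\mathbf{V},v\mathbf{V})$ plays at the base. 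Consequently the base computation above applies verbatim on the $s$--th shell under the replacements $\partial_i\to\eth_{i_s}$, $\partial_a\to\eth_{a_s}$, $N_i^a\to N_{i_s}^{a_s}$, $\ ^{N}J_{\ kj}^{a}\to\ ^{N}J_{\ k_s j_s}^{a_s}$, with the correspondingly relabelled $L$-- and $C$--coefficients of (\ref{coefd}); this yields the recurrent shell formulae and closes the list. Since all computations are carried out in the jet shell adapted coordinates (\ref{jcoord}), no symmetrization of indices is needed, which is exactly why the formulae stay as compact as in the base case.

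I expect the main obstacle to be the careful bookkeeping of the mixed derivative terms on higher shells: one must check that the "horizontal" operators $\mathbf{e}_{i_s}$ differentiate with respect to all lower-shell coordinates (so the nonholonomy coefficients acquire the correct contributions) and that the $\partial_{a_s}N_{i_s}^{b_s}$ and $\ ^{J}N_{i_s j_s}^{a_s}$ pieces land in exactly the written positions, with no cross terms between distinct shells surviving because the splitting (\ref{ncshell}) is preserved under $\ ^{s}\mathbf{D}$. Modulo this bookkeeping the argument is a direct, if lengthy, adaptation of the known N--adapted curvature formulae for nonholonomic manifolds and of their higher--dimensional version proved in \cite{vex3}; in the write-up I would therefore relegate the full component--by--component verification to that reference and emphasize only the stability of the identities under the $2+2+\dots$ shell recursion induced by the jet prolongation.
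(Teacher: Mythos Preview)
Your proposal is correct and follows essentially the same approach as the paper: expand the curvature $2$--form (\ref{dc}) in N--adapted frames (\ref{naders})--(\ref{nadifs}) using the connection parametrization (\ref{coefd}) and the anholonomy relations (\ref{anhrel1}), then observe that the $2+2+\dots$ shell structure makes the jet--variable computation formally identical to the known higher--dimensional case from \cite{vex3} under the relabelling $\partial\to\eth$, etc. If anything, your outline is more explicit than the paper's own proof, which simply invokes ``standard differential form calculus'' and the analogy with higher--dimensional spacetime components without spelling out the role of the Neijenhuis terms or the cross--shell bookkeeping you identify.
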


\begin{proof}
The coefficients (\ref{dcurv}) are computed for any $\widehat{\mathbf{D}}=\{%
\mathbf{\Gamma }_{\ \beta _{s}\gamma _{s}}^{\alpha _{s}}\}$ and N--adapted
frames (\ref{naders}) and (\ref{nadifs}) using a standard differential form
calculus with (\ref{dc}) (or, in operator form, using the formula (\ref%
{curvshell})). For $2+2+...$ shell decompositions, such formulae are similar
to the coefficients of curvature in higher dimensional spacetime but with
re--parameterized (in our case) nonholonomic $r$--jet variables.

$\square $
\end{proof}

\vskip5pt

\subsection{Sketch of proof of theorem \protect\ref{tcandist}}

\label{prooftcands} We can check by straightforward computations that the
conditions of metric compatibility and zero $h$- and $v$-- torsions are
satisfied by $\ ^{s}\widehat{\mathbf{D}}=\{\widehat{\mathbf{\Gamma }}_{\
\alpha _{s}\beta _{s}}^{\gamma _{s}}\}$ with coefficients computed
recurrently
\begin{eqnarray}
\widehat{L}_{jk}^{i} &=&\frac{1}{2}g^{ir}\left( \mathbf{e}_{k}g_{jr}+\mathbf{%
e}_{j}g_{kr}-\mathbf{e}_{r}g_{jk}\right) ,  \notag \\
\widehat{L}_{bk}^{a} &=&e_{b}(N_{k}^{a})+\frac{1}{2}h^{ac}\left( \mathbf{e}%
_{k}h_{bc}-h_{dc}\ e_{b}N_{k}^{d}-h_{db}\ e_{c}N_{k}^{d}\right) ,  \notag \\
\widehat{C}_{jc}^{i} &=&\frac{1}{2}g^{ik}e_{c}g_{jk},\ \widehat{C}_{bc}^{a}=%
\frac{1}{2}h^{ad}\left( e_{c}h_{bd}+e_{c}h_{cd}-e_{d}h_{bc}\right) ,
\label{candcon} \\
&&  \notag \\
\widehat{L}_{\beta \gamma }^{\alpha } &=&\frac{1}{2}g^{\alpha \tau }\left(
\mathbf{e}_{\gamma }g_{\beta \tau }+\mathbf{e}_{\beta }g_{\gamma \tau }-%
\mathbf{e}_{\tau }g_{\beta \gamma }\right) ,  \notag \\
\widehat{L}_{b_{1}\gamma }^{a_{1}} &=&\eth _{b_{1}}(N_{\gamma }^{a_{1}})+%
\frac{1}{2}h^{a_{1}c_{1}}\left( \mathbf{e}_{\gamma
}h_{b_{1}c_{1}}-h_{d_{1}c_{1}}\ \eth _{b_{1}}N_{\gamma
}^{d_{1}}-h_{d_{1}b_{1}}\ \eth _{c_{1}}N_{\gamma }^{d_{1}}\right) ,  \notag
\\
\widehat{C}_{\beta c_{1}}^{\alpha } &=&\frac{1}{2}g^{\alpha \tau }\eth
_{c_{1}}g_{\beta \tau },\ \widehat{C}_{b_{1}c_{1}}^{a_{1}}=\frac{1}{2}%
h^{a_{1}d_{1}}\left( \eth _{c_{1}}h_{b_{1}d_{1}}+\eth
_{c_{1}}h_{c_{1}d_{1}}-\eth _{d_{1}}h_{b_{1}c_{1}}\right) ,  \notag \\
&&...  \notag \\
&&  \notag \\
\widehat{L}_{\beta _{s-1}\gamma _{s-1}}^{\alpha _{s-1}} &=&\frac{1}{2}%
g^{\alpha _{s-1}\tau _{s-1}}\left( \mathbf{e}_{\gamma _{s-1}}g_{\beta
_{s-1}\tau _{s-1}}+\mathbf{e}_{\beta _{s-1}}g_{\gamma _{s-1}\tau _{s-1}}-%
\mathbf{e}_{\tau _{s-1}}g_{\beta _{s-1}\gamma _{s-1}}\right) ,  \notag \\
\widehat{L}_{b_{s}\gamma _{s-1}}^{a_{s}} &=&\eth _{b_{s}}(N_{\gamma
_{s-1}}^{a_{s}})+\frac{1}{2}h^{a_{s}c_{s}}\left( \mathbf{e}_{\gamma
_{s-1}}h_{b_{s}c_{s}}-h_{d_{s}c_{s}}\ \eth _{b_{s}}N_{\gamma
_{s-1}}^{d_{s}}-h_{d_{s}b_{s}}\ \eth _{c_{s}}N_{\gamma
_{s-1}}^{d_{s}}\right) ,  \notag \\
\widehat{C}_{\beta _{s-1}c_{s}}^{\alpha _{s-1}} &=&\frac{1}{2}g^{\alpha
_{s-1}\tau _{s-1}}\eth _{c_{s}}g_{\beta _{s-1}\tau _{s-1}},\ \widehat{C}%
_{b_{s}c_{s}}^{a_{s}}=\frac{1}{2}h^{a_{s}d_{s}}\left( \eth
_{c_{s}}h_{b_{s}d_{s}}+\eth _{c_{s}}h_{c_{s}d_{s}}-\eth
_{d_{s}}h_{b_{s}c_{s}}\right) .  \notag
\end{eqnarray}%
The torsion d--tensor (\ref{dt}) of $\ ^{s}\widehat{\mathbf{D}}$ is
completely defined by $\ ^{s}\mathbf{g}$ (\ref{dm}) for any chosen $\ ^{s}%
\mathbf{N=\{}N_{i_{s}}^{a_{s}}\}$ if the above coefficients (\ref{candcon})
are introduced "shell by shell" into formulae
\begin{eqnarray}
\widehat{T}_{\ jk}^{i} &=&\widehat{L}_{jk}^{i}-\widehat{L}_{kj}^{i},\widehat{%
T}_{\ ja}^{i}=\widehat{C}_{jb}^{i},\widehat{T}_{\ ji}^{a}=-\ ^{N}J_{\
ji}^{a},\ \widehat{T}_{aj}^{c}=\widehat{L}_{aj}^{c}-e_{a}(N_{j}^{c}),%
\widehat{T}_{\ bc}^{a}=\ \widehat{C}_{bc}^{a}-\ \widehat{C}_{cb}^{a},  \notag
\\
&&....  \label{dtors} \\
\widehat{T}_{\ \beta _{s}\gamma _{s}}^{\alpha _{s}} &=&\widehat{L}_{\ \beta
_{s}\gamma _{s}}^{\alpha _{s}}-\widehat{L}_{\ \gamma _{s}\beta _{s}}^{\alpha
_{s}},\widehat{T}_{\ \beta _{s}b_{s}}^{\alpha _{s}}=\widehat{C}_{\ \beta
_{s}b_{s}}^{\alpha _{s}},\widehat{T}_{\ \beta _{s}\gamma _{s}}^{a_{s}}=\
^{N}J_{\ \gamma _{s}\beta _{s}}^{a_{s}}.  \notag
\end{eqnarray}%
We can impose as additional nonholonomic conditions certain equations when
all coefficients (\ref{dtors}) are zero. In this case we extract from (\ref%
{candcon}) various LC--configurations. The coefficients of the
LC--connection $\ _{\shortmid }\Gamma _{\ \alpha _{s}\beta _{s}}^{\gamma
_{s}}$ can be computed in standard form in coordinate bases and/or with
respect to N--adapted frames. Taking differences between $\widehat{\mathbf{%
\Gamma }}_{\ \alpha _{s}\beta _{s}}^{\gamma _{s}}$ and $\ _{\shortmid
}\Gamma _{\ \alpha _{s}\beta _{s}}^{\gamma _{s}},$ we find the N--adapted
coefficients of the distortion d--tensor $\widehat{\mathbf{Z}}_{\ \beta
_{s}\gamma _{s}}^{\alpha _{s}}$ (similar formulae are given for Corollary
2.1 and (22) in Ref. \cite{vex3}, in extra dimension coordinates but without
jet configurations).

\subsection{Proof of theorem \protect\ref{tdecoupling}}

\label{assdecoup}Such a proof is possible by explicitly the computing of the
N--adapted coefficients of the canonical Ricci d--tensor on $\mathbf{J}^{r}(%
\mathbf{V},\mathbf{V}^{\prime })$ with 2-d shells. Let us consider an ansatz
(\ref{ansk}) with $\partial _{4}h_{a}\neq 0,\eth _{6}h_{a_{1}}\neq
0,...,\eth _{2s}h_{a_{s}}\neq 0,$ when the partial derivatives are denoted
in the forms $\partial _{1}h=\partial h/\partial x^{1},$ $\partial
_{4}h=\partial h/\partial y^{4},\partial _{44}^{2}h=\partial ^{2}h/\partial
y^{4}\partial y^{4}$ and $\eth _{66}^{2}=\partial ^{2}h/\partial \zeta
^{6}\partial \zeta ^{6}$, where the indices taking values $5,6,...$ are for $%
2+2+...$ jet parameterized variables. We can construct more special classes
of solutions when the conditions alluded to  are not satisfied which warrants the
analysis of more special classes of solutions. For simplicity, we suppose
that via frame transformations it is always possible to introduce the necessary type of
parameterizations for d--metrics whenever the necessary types of partial derivatives of
some coefficients are not zero.

\begin{lemma}
With respect to N--adapted frames (\ref{naders}) and (\ref{nadifs}), the
nonzero coefficients of the Ricci d--tensor $\mathbf{\hat{R}}_{\alpha
_{s}\beta _{s}}$ (\ref{dricci}) for ansatz (\ref{ansk}) with Killing
symmetry on $\partial _{3}$ possess symmetries determined by the following formulae: for
spacetime components with partial derivative operator $\partial ,$
\begin{eqnarray}
\widehat{R}_{1}^{1} &=&\widehat{R}_{2}^{2}=-\frac{1}{2g_{1}g_{2}}[\partial
_{11}^{2}g_{2}-\frac{(\partial _{1}g_{1})(\partial _{1}g_{2})}{2g_{1}}-\frac{%
\left( \partial _{1}g_{2}\right) ^{2}}{2g_{2}}+\partial _{22}^{2}g_{1}-\frac{%
(\partial _{2}g_{1})(\partial _{2}g_{2})}{2g_{2}}-\frac{\left( \partial
_{2}g_{1}\right) ^{2}}{2g_{1}}],  \label{equ1} \\
\widehat{R}_{3}^{3} &=&\widehat{R}_{4}^{4}=-\frac{1}{2h_{3}h_{4}}[\partial
_{44}^{2}h_{3}-\frac{\left( \partial _{4}h_{3}\right) ^{2}}{2h_{3}}-\frac{%
(\partial _{4}h_{3})(\partial _{4}h_{4})}{2h_{4}}],  \label{equ2} \\
\widehat{R}_{3k} &=&\frac{h_{3}}{2h_{4}}\partial _{44}^{2}n_{k}+\left( \frac{%
h_{3}}{h_{4}}\partial _{4}h_{4}-\frac{3}{2}\partial _{4}h_{3}\right) \frac{%
\partial _{4}n_{k}}{2h_{4}},  \label{equ3} \\
\widehat{R}_{4k} &=&\frac{w_{k}}{2h_{3}}[\partial _{44}^{2}h_{3}-\frac{%
\left( \partial _{4}h_{3}\right) ^{2}}{2h_{3}}-\frac{(\partial
_{4}h_{3})(\partial _{4}h_{4})}{2h_{4}}]+\frac{\partial _{4}h_{3}}{4h_{3}}(%
\frac{\partial _{k}h_{3}}{h_{3}}+\frac{\partial _{k}h_{4}}{h_{4}})-\frac{%
\partial _{k}(\partial _{4}h_{3})}{2h_{3}},  \label{equ4}
\end{eqnarray}%
and for $r$--jet components with partial derivative operator $\eth $ on jet
variables, on shell $s=1,\tau =1,2,3,4;$
\begin{eqnarray}
\widehat{R}_{5}^{5} &=&\widehat{R}_{6}^{6}=-\frac{1}{2h_{5}h_{6}}[\eth
_{66}h_{5}-\frac{\left( \eth _{6}h_{5}\right) ^{2}}{2h_{5}}-\frac{(\eth
_{6}h_{5})(\eth _{6}h_{6})}{2h_{6}}],  \label{equ5} \\
\widehat{R}_{5\tau } &=&\frac{h_{5}}{2h_{6}}\eth _{66}^{2}\ ^{1}n_{\tau
}+\left( \frac{h_{5}}{h_{6}}\eth _{6}h_{6}-\frac{3}{2}\eth _{6}h_{5}\right)
\frac{\eth _{6}\ ^{1}n_{\tau }}{2h_{6}},  \label{equ6} \\
\widehat{R}_{6\tau } &=&\frac{\ ^{1}w_{\tau }}{2h_{5}}[\eth _{66}^{2}h_{5}-%
\frac{\left( \eth _{6}h_{5}\right) ^{2}}{2h_{5}}-\frac{(\eth _{6}h_{5})(\eth
_{6}h_{6})}{2h_{6}}]+\frac{\eth _{6}h_{5}}{4h_{5}}(\frac{\eth _{\tau }h_{5}}{%
h_{5}}+\frac{\eth _{\tau }h_{6}}{h_{6}})-\frac{\eth _{\tau }(\eth _{6}h_{5})%
}{2h_{5}},  \label{equ7}
\end{eqnarray}%
and, for extra shells with number $s,$
\begin{eqnarray}
\widehat{R}_{2s-1}^{2s-1} &=&\widehat{R}_{2s}^{2s}=-\frac{1}{%
2h_{3+2s}h_{4+2s}}[\eth _{4+2s\ 4+2s}^{2}h_{3+2s}-\frac{\left( \eth
_{4+2s}h_{3+2s}\right) ^{2}}{2h_{3+2s}}-\frac{(\eth _{4+2s}h_{3+2s})(\eth
_{4+2s}h_{4+2s})}{2h_{4+2s}}],  \notag \\
\widehat{R}_{3+2s\ \tau _{1}} &=&\frac{h_{2s-1}}{2h_{2s}}\eth _{2s\ 2s}^{2}\
^{2}n_{\tau _{1}}+\left( \frac{h_{2s-1}}{h_{2s}}\eth _{4+2s}h_{4+2s}-\frac{3%
}{2}\eth _{4+2s}h_{3+2s}\right) \frac{\eth _{4+2s}\ ^{2}n_{\tau _{1}}}{%
2h_{3+2s}},  \notag \\
\widehat{R}_{4+2s\ \tau _{1}} &=&\frac{\ ^{2}w_{\tau _{1}}}{2h_{7}}[\eth
_{4+2s\ 4+2s}^{2}h_{3+2s}-\frac{\left( \eth _{4+2s}h_{3+2s}\right) ^{2}}{%
2h_{3+2s}}-\frac{(\eth _{4+2s}h_{3+2s})(\eth _{4+2s}h_{4+2s})}{2h_{4+2s}}]+
\label{equ4d2s} \\
&&\frac{\eth _{4+2s}h_{3+2s}}{4h_{3+2s}}(\frac{\eth _{\tau _{1}}h_{3+2s}}{%
h_{3+2s}}+\frac{\eth _{\tau _{1}}h_{4+2s}}{h_{4+2s}})-\frac{\eth _{\tau
_{1}}(\partial _{4+2s}h_{3+2s})}{2h_{3+2s}},  \notag
\end{eqnarray}%
when $\tau _{1}=1,2,3,4,5,6;$
\begin{equation*}
...
\end{equation*}
\end{lemma}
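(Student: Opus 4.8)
The plan is to prove the Lemma by a direct, if lengthy, N--adapted computation, organized so that the recursive "shell by shell" structure collapses the whole task to a single $2+2$ calculation carried out repeatedly. First I would substitute the ansatz (\ref{ansk}) with Killing symmetry $\partial _{3}$ into the canonical d--connection coefficients (\ref{candcon}). Since the $h$--metric is diagonal, $g_{ij}=\mathrm{diag}[g_{1},g_{2}]$, each vertical shell metric $g_{a_{s}b_{s}}=\mathrm{diag}[h_{2s+1},h_{2s+2}]$ is diagonal with the stated coordinate dependence, and $\partial _{3}$ annihilates all coefficients, most of the $\widehat{L}$'s and $\widehat{C}$'s either vanish or collapse to the few surviving terms built from $\partial _{1},\partial _{2},\partial _{4}$ (and, on shell $s$, $\eth _{2s+2}$) acting on $g_{i},h_{a},n_{i},w_{i}$ and their $r$--jet analogues. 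This step yields a short list of nonzero connection coefficients on each shell.

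Next I would feed these coefficients into the curvature d--tensor formulae (\ref{dcurv}), keeping careful track of the nonholonomy/Neijenhuis contributions $\ ^{N}J_{\ ij}^{a}$ (which for this ansatz are assembled from $\partial _{4}n_{i},\partial _{4}w_{i}$ and their horizontal derivatives) and of the mixed $h$--$v$ blocks $R_{\ jka}^{i},R_{\ bka}^{c}$ that generate the off--diagonal components $\widehat{R}_{3k},\widehat{R}_{4k}$. Contracting indices according to (\ref{dricci}) then produces $\widehat{\mathbf{R}}_{\alpha _{s}\beta _{s}}$. The arithmetic for $\widehat{R}_{1}^{1}=\widehat{R}_{2}^{2}$ reproduces the $2$--d Laplace/d'Alembert expression (after inserting $g_{i}=\epsilon _{i}e^{\psi }$ one recovers (\ref{e1})); the arithmetic for $\widehat{R}_{3}^{3}=\widehat{R}_{4}^{4}$ gives the single second--order nonlinear combination in $h_{3},h_{4}$ of (\ref{equ2}); and the $\widehat{R}_{3k},\widehat{R}_{4k}$ computations give the linear equation for $n_{k}$ and the algebraic--in--$w_{k}$ relation, once the coefficient identities (\ref{ca1}) are used.

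The structural point that makes the statement tractable rather than a $(2s+4)$--dimensional brute force is that the d--metric (\ref{dm})/(\ref{ansk}) is block--triangular in the shell grading: the $s$--th shell data $g_{a_{s}b_{s}},N^{a_{s}}_{i_{s-1}}$ depend only on coordinates up to and including $\zeta ^{a_{s}}$, and the canonical d--connection respects the shell splitting (see the Theorem--Definition on N--adapted shell prolongation in Section \ref{nhmj}). Hence every curvature component carrying an $s$--shell index is computed by formulae formally identical to the $4$--d ones under the relabeling $x^{i}\to u^{\alpha _{s-1}},\ y^{4}\to \zeta ^{2s+2},\ \partial _{4}\to \eth _{2s+2}$, with no cross terms between shells of different level in the blocks listed in the Lemma. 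I would therefore prove the spacetime formulae (\ref{equ1})--(\ref{equ4}) once in full, then invoke this "shell recursion" to read off (\ref{equ5})--(\ref{equ4d2s}) by relabeling, observing only that the factor $g_{i}=\epsilon _{i}e^{\psi }$ is not propagated up the shells, so the horizontal $\widehat{R}_{1}^{1}$--type equation appears only on the base.

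The hard part will be the bookkeeping: identifying exactly which of the many terms in (\ref{dcurv}) survive under the Killing and diagonality hypotheses, and verifying that the Neijenhuis and $C$--type pieces recombine into precisely the compact right--hand sides of (\ref{equ3}) and (\ref{equ4}) with no leftover terms. A secondary subtlety is confirming that "the connection respects the N--splitting" genuinely decouples the blocks for this ansatz, i.e. that mixed contributions of the form $C_{\ h_{s}a_{t}}^{i_{s}}\,\ ^{N}J_{\ kj}^{a_{t}}$ with $t\neq s$ vanish; this holds because the relevant $C$--coefficients are derivatives of lower--shell metric components with respect to higher--shell jet variables, hence identically zero. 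I would relegate the full term--by--term verification to Appendix \ref{assdecoup} and present here only the surviving--term skeleton together with the final identifications, which is exactly the content asserted in the Lemma and then used to derive Theorem \ref{tdecoupling}.
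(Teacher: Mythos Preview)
Your proposal is correct and follows essentially the same approach as the paper: substitute the ansatz (\ref{ansk}) into the canonical d--connection coefficients (\ref{candcon}), compute the N--adapted curvature (\ref{dcurv}) and torsion (\ref{dtorsj}), contract to obtain the Ricci d--tensor, and then use the block--triangular shell structure to propagate the 4--d formulae (\ref{equ1})--(\ref{equ4}) to the jet shells by the relabeling $\partial_{4}\to\eth_{2s+2}$, etc. The only difference is one of emphasis: the paper does not repeat the 4--d term--by--term calculus at all but defers it to \cite{vexsol2,vex3,vexsol1}, whereas you propose to carry it out in full and include your (correct) observation that the cross--shell $C$--coefficients vanish because they involve derivatives of lower--shell metric components with respect to higher--shell jet variables --- a point the paper leaves implicit in its appeal to the recursive $2+2+\cdots$ structure.
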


\begin{proof}
We introduce the coefficients of the canonical d--connection $\widehat{%
\mathbf{\Gamma }}_{\ \alpha _{s}\beta _{s}}^{\gamma _{s}}$(\ref{candcon})
for the d--metric ansatz (\ref{ansk}) and compute the N--adapted
d--curvature coefficients (\ref{dcurv}) and (\ref{dtorsj}). Then,
contracting the indices (following formulae (\ref{dricci}), (\ref{rdsc}) and
(\ref{einstdt})) we find the nontrivial values of the N--adapted
coefficientes for the Ricci d--tensor, scalar curvature and Einstein
d--tensor of \ $^{s}\widehat{\mathbf{D}}.$ Explicit proofs of the formulae (\ref%
{equ1})--(\ref{equ4}) for 4-d and extra dimensional indices are provided in
a series of our works, for instance, in \cite{vexsol2,vex3,vexsol1}. We do
not repeat the required calculus in this paper.

Introducing $r$--jet variables, we observe that on the first shell, with $%
s=1, $ the formulae (\ref{equ2})--(\ref{equ4}) are generalized in a similar
form but for the partial derivatives $\eth $ on jet variables, with
respective indices 5 and 6 for a nonholonomic 2+2+2+... splitting. To avoid
ambiguities, we put left labels $s=1$ on the necessary geometric objects and
coefficients. On this shell, the first four coordinates $\alpha =1,2,3,4$
are treated as "base type"  but take $a_{1},b_{1},...=5,6$ as conventional
"fiber/jet" ones. In symbolic form, the equations (\ref{equ5})--(\ref{equ7})
are constructed via  formally increasing by 2  respective values of 4-d
spacetime indices and introducing dependencies on all "base/ spacetime"
coordinates.

For shells $s=2,3,...,$ "fiber/jet" indices are labeled with values of type $%
3+2s$ and $4+2s$ and the previous (base type) indices take values $1,2,...2+2s.$ The
equations (\ref{equ4d2s}) present a "recurrent" generalizations for a finite
number of shells, $s,$ of the 1st jet shell when $s=1.$

$\square $
\end{proof}

\vskip5pt

Let us analyze some important nonholonomic symmetries of of the canonical
Ricci and Einstein d--tensors:

For $s=1$ and using the above formulae, we can compute the Ricci scalar (\ref%
{rdsc}) for $\ ^{1}\widehat{\mathbf{D}}$ $,$ $\ ^{1}\widehat{R}=2(\widehat{R}%
_{1}^{1}+\widehat{R}_{3}^{3}+\widehat{R}_{5}^{5}).$ There are certain
N--adapted symmetries of the Einstein d--tensor (\ref{einstdt}) for the
ansatz (\ref{ansk}), $\widehat{E}_{1}^{1}=\widehat{E}_{2}^{2}=-(\widehat{R}%
_{3}^{3}+\widehat{R}_{5}^{5}),\widehat{E}_{3}^{3}=\widehat{E}_{4}^{4}=-(%
\widehat{R}_{1}^{1}+\widehat{R}_{5}^{5}),\widehat{E}_{5}^{5}=\widehat{E}%
_{6}^{6}=-(\widehat{R}_{1}^{1}+\widehat{R}_{3}^{3}).$

In a similar form, we find symmetries for $s=2:$%
\begin{eqnarray*}
\widehat{E}_{1}^{1} &=&\widehat{E}_{2}^{2}=-(\widehat{R}_{3}^{3}+\widehat{R}%
_{5}^{5}+\widehat{R}_{7}^{7}),\widehat{E}_{3}^{3}=\widehat{E}_{4}^{4}=-(%
\widehat{R}_{1}^{1}+\widehat{R}_{5}^{5}+\widehat{R}_{7}^{7}), \\
\widehat{E}_{5}^{5} &=&\widehat{E}_{6}^{6}=-(\widehat{R}_{1}^{1}+\widehat{R}%
_{3}^{3}+\widehat{R}_{7}^{7}),\widehat{E}_{7}^{7}=\widehat{E}_{8}^{8}=-(%
\widehat{R}_{1}^{1}+\widehat{R}_{3}^{3}+\widehat{R}_{5}^{5}).
\end{eqnarray*}

We conclude that the nonholonomically jet modified Einstein equations (\ref%
{equ1})--(\ref{equ4d2s}) for $s=2$ jet shells with nontrivial $\Lambda $%
--sources can be written in N--adapted form as
\begin{equation}
\widehat{R}_{1}^{1} = \widehat{R}_{2}^{2}=-\Lambda (x^{k}),\ \widehat{R}%
_{3}^{3}=\widehat{R}_{4}^{4}=-\ ^{v}\Lambda (x^{k},y^{4}), \widehat{R}%
_{5}^{5} = \widehat{R}_{6}^{6}=-\ _{1}^{v}\Lambda (u^{\beta },\zeta ^{6}),\
\widehat{R}_{7}^{7}=\widehat{R}_{8}^{8}=-\ _{2}^{v}\Lambda (u^{\beta
_{1}},\zeta ^{8}), ....,  \label{sourc1}
\end{equation}%
which can be extended for any arbitrary finite number of jets' shells.

\subsection{Nonholonomic spacetime and $r$--jet vacuum solutions}

\subsubsection{ 4--d nonhlonomic vacuum configurations}

\label{assvacuum}To consider vacuum solutions for $\widehat{\mathbf{D}}$
with $\ ^{v}\Lambda =0$ in (\ref{e2}) we study configurations with
N--adapted coefficients when $\partial _{4}h_{3}=0$ and/or $\partial
_{4}\phi =0.$ The limits to the off--diagonal solutions with $\ \Lambda =\
^{v}\Lambda =0$ are not smooth because multiples $(\ ^{v}\Lambda )^{-1}$ are
considered in various coefficients and re--defined generating functions for
solutions (\ref{qnk4d}).

Let us analyze the conditions when the nontrivial coefficients of the Ricci
d--tensor (\ref{equ1})--(\ref{equ4}) are zero for ansatz (\ref{ansk}). The
first equation is a typical example of 2--d wave or Laplace equation. We
can express such solutions in a similar form $g_{i}=\epsilon _{i}e^{\psi
(x^{k},\Lambda =0)}(dx^{i})^{2}.$

There are three classes of off--diagonal metrics giving zero coefficients (%
\ref{equ2})--(\ref{equ4}).

\begin{enumerate}
\item We impose the condition $\partial _{4}h_{3}=0,h_{3}\neq 0,$ giving
only one nontrivial equation, see (\ref{equ3}), $\partial
_{44}^{2}n_{k}+\partial _{4}n_{k}\ \partial _{4}\ln |h_{4}|=0,$ where $%
h_{4}(x^{i},y^{4})\neq 0$ and $w_{k}(x^{i},y^{4})$ are arbitrary functions.
If $\partial _{4}h_{4}=0,$ we must take $\partial _{44}^{2}n_{k}=0.$ For $%
\partial _{4}h_{4}\neq 0,$ we get
\begin{equation}
n_{k}=\ _{1}n_{k}+\ _{2}n_{k}\int dy^{4}/h_{4}  \label{wsol}
\end{equation}%
with integration functions $\ _{1}n_{k}(x^{i})$ and $\ _{2}n_{k}(x^{i}).$
The corresponding class of nonholonomic vacuum solutions is defined by
quadratic line element
\begin{eqnarray*}
ds_{v1}^{2}&=&\epsilon _{i}e^{\psi (x^{k},\Lambda =0)}(dx^{i})^{2}+\
^{0}h_{3}(x^{k})[dy^{3}+(\ _{1}n_{k}(x^{i})+\ _{2}n_{k}(x^{i})\int
dy^{4}/h_{4})dx^{i}]^{2} \\
&& + h_{4}(x^{i},y^{4})[dy^{4}+w_{i}(x^{k},y^{4})dx^{i}].
\end{eqnarray*}

\item Let us assume $\partial _{4}h_{3}\neq 0$ and $\partial _{4}h_{4}\neq
0. $ We can solve (\ref{equ2}) and/or (\ref{e2}) for $\ ^{v}\Lambda =0$ if $%
\partial _{4}\phi =0$ for coefficients (\ref{c1}) and (\ref{ca1}). For $\phi
=\phi _{0}=const,$ we can consider arbitrary functions $w_{i}(x^{k},y^{4})$
as generating functions because $\beta =\alpha _{i}=0$ for such
configurations. The condition (\ref{ca1}) is satisfied by any
\begin{equation}
h_{4}=\ ^{0}h_{4}(x^{k})(\partial _{4}\sqrt{|h_{3}|})^{2},  \label{h34vacuum}
\end{equation}%
where $\ ^{0}h_{3}(x^{k})$ is an integration function and $%
h_{3}(x^{k},y^{4}) $ is any generating function. The coefficients $n_{k}$
are found from (\ref{equ3}), see (\ref{wsol}). The corresponding class of
nonholonomic vacuum metrics is defined by the quadratic line element
\begin{eqnarray}
ds_{v2}^{2} &=&\epsilon _{i}e^{\psi (x^{k},\Lambda
=0)}(dx^{i})^{2}+h_{3}(x^{i},y^{4})[dy^{3}+(\ _{1}n_{k}(x^{i})+\
_{2}n_{k}(x^{i})\int dy^{4}/h_{4})dx^{i}]^{2}+  \label{vs2} \\
&&\ ^{0}h_{4}(x^{k})(\partial _{4}\sqrt{|h_{3}|}%
)^{2}[dy^{4}+w_{i}(x^{k},y^{4})dx^{i}]^{2}.  \notag
\end{eqnarray}

\item Another type of configurations are generated by $\partial _{4}h_{3}\neq 0$
but $\partial _{4}h_{4}=0.$ The equation (\ref{equ2}) is $\partial
_{44}^{2}h_{3}-\frac{\left( \partial _{4}h_{3}\right) ^{2}}{2h_{3}}=0,$ with
general solution is $h_{3}(x^{k},y^{4})=\left[ c_{1}(x^{k})+c_{2}(x^{k})y^{4}%
\right] ^{2}$, \ where $c_{1}(x^{k}),c_{2}(x^{k})$ are generating functions
and $h_{4}=\ ^{0}h_{4}(x^{k}).$ For $\phi =\phi _{0}=const,$ we can choose any
values $w_{i}(x^{k},y^{4})$ because $\beta =\alpha _{i}=0.$ The coefficients
$n_{i}$ are determined by equation (\ref{equ3}) and/or, equivalently, (\ref%
{e3}) with $\gamma =\frac{3}{2}\partial _{4}|h_{3}|.$ We find
\begin{equation*}
n_{i}=\ _{1}n_{i}(x^{k})+\ _{2}n_{i}(x^{k})\int dy^{4}|h_{3}|^{-3/2}=\
_{1}n_{i}(x^{k})+\ _{2}\widetilde{n}_{i}(x^{k})
[c_{1}(x^{k})+c_{2}(x^{k})y^{4}] ^{-2},
\end{equation*}%
with integration functions $\ _{1}n_{i}(x^{k})$ and $\ _{2}n_{i}(x^{k}),$ or
re--defined $\ _{2}\widetilde{n}_{i}=-\ _{2}n_{i}/2c_{2}.$ The quadratic
line element for this class of vacuum nonholonomic solutions is given by
\begin{eqnarray}
ds_{v3}^{2} &=&\epsilon _{i}e^{\psi (x^{k},\Lambda =0)}(dx^{i})^{2}+\left[
c_{1}(x^{k})+c_{2}(x^{k})y^{4}] ^{2}[dy^{3}+(\ _{1}n_{i}(x^{k})+\ _{2}%
\widetilde{n}_{i}(x^{k})[c_{1}(x^{k})+c_{2}(x^{k})y^{4}] ^{-2})dx^{i}\right]%
^{2}  \notag \\
&&+\ ^{0}h_{4}(x^{k})[dy^{4}+w_{i}(x^{k},y^{4})dx^{i}]^{2}.  \label{vs3}
\end{eqnarray}
\end{enumerate}

Finally, we note that such solutions are with nontrivial induced torsion (%
\ref{dtors}) and that additional assumptions are necessary  to extract
vacuum LC--configurations.

\subsubsection{Nonholonomic $r$--jet prolongations of vacuum solutions:}

The quadratic line elements (\ref{qnk4d}), (\ref{qnk6d}), (\ref{qnk8d}),... for
off--diagonal jet prolongations of generic off--diagonal solutions have been
constructed for nontrivial sources $\ ^{v}\Lambda (x^{k},y^{4}),$ $\
_{1}^{v}\Lambda (u^{\tau },\zeta ^{6}),$ $\ _{2}^{v}\Lambda (u^{\tau },\zeta
^{8}),...$ In a similar manner, we can generate jet prolongations of vacuum
configurations with effective zero cosmological constants extending with $r$%
--jet variables the 4-d vacuum metrics of type $ds_{v1}^{2}$, $ds_{v2}^{2}$ (%
\ref{vs2}), $ds_{v3}^{2}$ (\ref{vs3}) etc. It is possible to generate
solutions when the sources are zero on some shells and nonzero on other
shells.

Let us consider an example of quadratic line element for jet prolongation of
effective 6--d gravity derived as a $s=1$ generalization of (\ref{vs2}). For
such solutions, $\partial _{4}h_{a}\neq 0,\eth _{6}h_{a_{1}}\neq 0,...$ and $%
\phi =\phi _{0}=const,$ $\ ^{1}\phi =\ ^{1}\phi _{0}=const,...$
\begin{eqnarray}
&&ds_{v2s3}^{2}=\epsilon _{i}e^{\psi (x^{k},\Lambda
=0)}(dx^{i})^{2}+h_{3}(x^{i},y^{4})[dy^{3}+\left( \ _{1}n_{k}(x^{i})+\
_{2}n_{k}(x^{i})\int dy^{4}/h_{4}\right) dx^{i}]^{2}+  \label{qe6dvacuum} \\
&&\ ^{0}h_{4}(x^{k})(\partial _{4}\sqrt{|h_{3}|}%
)^{2}[dy^{4}+w_{i}(x^{k},y^{4})dx^{i}]^{2}+h_{5}(u^{\tau },\zeta
^{6})[d\zeta ^{5}+\left( \ _{1}^{1}n_{\lambda }(u^{\tau })+\
_{2}^{1}n_{\lambda }(u^{\tau })\int d\zeta ^{6}/h_{6}\right) du^{\lambda
}]^{2}  \notag \\
&&+\ ^{0}h_{6}(u^{\tau })(\eth _{6}\sqrt{|h_{5}|})^{2}[d\zeta ^{6}+\
^{1}w_{\lambda }(u^{\tau },\zeta ^{6})du^{\lambda }]^{2},  \notag
\end{eqnarray}%
where $\ ^{0}h_{3}(x^{k}),\ ^{0}h_{5}(u^{\tau }),\ _{1}n_{k}(x^{i}),\
_{2}n_{k}(x^{i}),\ _{1}^{1}n_{\lambda }(u^{\tau }),\ _{2}^{1}n_{\lambda
}(u^{\tau })$ are integration functions. \ The values $h_{4}(x^{k},y^{4})$
and $h_{6}(u^{\tau },\zeta ^{6})$ are any generating functions depending on spacetime
and jet prolongation variables. We can consider arbitrary functions $%
w_{i}(x^{k},y^{4})$ and $\ ^{1}w_{\lambda }(u^{\tau },\zeta ^{6})$ because,
respectively, $\beta =\alpha _{i}=0$ and $\ ^{1}\beta =\ ^{1}\alpha _{\tau
}=0$ for such configurations, see formulas (\ref{c1}), (\ref{ca1}) and (\ref%
{c2}), (\ref{ca2}).

\subsection{ The LC--conditions}

\label{zt} We can consider nonholonomic frame deformations of the
N--coefficients and ansatz (\ref{ansk}) when all coefficients of a
nonholonomically induced torsion (\ref{dtors}) are zero and $\ _{\shortmid
}\Gamma _{\ \alpha _{s}\beta _{s}}^{\gamma _{s}}=\widehat{\mathbf{\Gamma }}%
_{\ \alpha _{s}\beta _{s}}^{\gamma _{s}}.$ For simplicity, we analyze such
conditions for  4--d spacetime (generalizations to extra jet shell can be
performed recurrently as we explained in section \ref{s3}).

The trivial coefficients of d--torsion (\ref{dtors}) are $\widehat{T}_{\
jk}^{i}=\widehat{L}_{jk}^{i}-\widehat{L}_{kj}^{i}=0,~\widehat{T}_{\ ja}^{i}=%
\widehat{C}_{jb}^{i}=0,~\widehat{T}_{\ bc}^{a}=\ \widehat{C}_{bc}^{a}-\
\widehat{C}_{cb}^{a}=0$ for any ansatz (\ref{ansk}). Let us compute the
nontrivial coefficients $\widehat{T}_{aj}^{c}=\widehat{L}%
_{aj}^{c}-e_{a}(N_{j}^{c})$ and $\widehat{T}_{\ ji}^{a}=-\ ^{N}J_{\ ji}^{a}.$
For a 2+2 spacetime splitting, the values
\begin{equation*}
\widehat{L}_{bi}^{a} =\partial _{b}N_{i}^{a}+\frac{1}{2}h^{ac}(\partial
_{i}h_{bc}-N_{i}^{e}\partial _{e}h_{bc}-h_{dc}\partial
_{b}N_{i}^{d}-h_{db}\partial _{c}N_{i}^{d}),\ \widehat{T}_{aj}^{c} = \frac{1%
}{2}h^{ac}(\partial _{i}h_{bc}-N_{i}^{e}\partial _{e}h_{bc}-h_{dc}\partial
_{b}N_{i}^{d}-h_{db}\partial _{c}N_{i}^{d}).
\end{equation*}%
are computed for $N_{i}^{3}=n_{i}(x^{k},y^{4}),N_{i}^{4}=w_{i}(x^{k},y^{4});$
$h_{bc}=diag[h_{3}(x^{k},y^{4}),h_{4}(x^{k},y^{4})];$ $\
h^{ac}=diag[(h_{3})^{-1},(h_{4})^{-1}].$ We write
\begin{eqnarray*}
\widehat{T}_{bi}^{3} &=&\frac{1}{2}h^{3c}(\partial
_{i}h_{bc}-N_{i}^{e}\partial _{e}h_{bc}-h_{dc}\partial
_{b}N_{i}^{d}-h_{db}\partial _{c}N_{i}^{d})=\frac{1}{2h_{3}}(\partial
_{i}h_{b3}-w_{i}\partial _{4}h_{b3}-h_{3}\partial _{b}n_{i}), \\
\mbox{ i.e. }\widehat{T}_{3i}^{3} &=&\frac{1}{2h_{3}}(\partial
_{i}h_{3}-w_{i}\partial _{4}h_{3}),\ \widehat{T}_{4i}^{3}=\frac{1}{2}%
\partial _{4}n_{i}.
\end{eqnarray*}%
In a similar form, we compute
\begin{eqnarray*}
&&\widehat{T}_{bi}^{4}=\frac{1}{2}h^{4c}(\partial
_{i}h_{bc}-N_{i}^{e}\partial _{e}h_{bc}-h_{dc}\partial
_{b}N_{i}^{d}-h_{db}\partial _{c}N_{i}^{d})=\frac{1}{2h_{4}}(\partial
_{i}h_{b4}-w_{i}\partial _{4}h_{b4}-h_{4}\partial _{b}w_{i}-h_{3b}\partial
_{4}n_{i}-h_{4b}\partial _{4}w_{i}) \\
&&\mbox{ i.e. }\widehat{T}_{3i}^{4}=-\frac{h_{3}}{2h_{4}}\partial
_{4}n_{i},\ \widehat{T}_{4i}^{4}=\frac{1}{2h_{4}}(\partial
_{i}h_{4}-w_{i}\partial _{4}h_{4})-\partial _{4}w_{i}.
\end{eqnarray*}

The coefficients of the N--connection curvature $\ ^{N}J_{ij}^{a}=\mathbf{e}%
_{j}\left( N_{i}^{a}\right) -\mathbf{e}_{i}(N_{j}^{a})$ are expressed as
\begin{equation*}
\ ^{N}J_{ij}^{a} =\mathbf{\partial }_{j}\left( N_{i}^{a}\right) -\partial
_{i}(N_{j}^{a})-N_{j}^{b}\partial _{b}N_{i}^{a}+N_{i}^{b}\partial
_{b}N_{j}^{a} = \mathbf{\partial }_{j}\left( N_{i}^{a}\right) -\partial
_{i}(N_{j}^{a})-w_{j}\partial _{4}N_{i}^{a}+w_{i}\partial _{4}N_{j}^{a}
\end{equation*}%
with  nontrivial values:
\begin{equation}
\ ^{N}J_{12}^{3} =-\ ^{N}J_{21}^{3}=\mathbf{\partial }_{2}n_{1}-\partial
_{1}n_{2}-w_{2}\partial _{4}n_{1}+w_{1}\partial _{4}n_{2}{}, \ \
^{N}J_{12}^{4} = -\ ^{N}J_{21}^{4}=\mathbf{\partial }_{2}w_{1}-\partial
_{1}w_{2}-w_{2}\partial _{4}w_{1}+w_{1}\partial _{4}w_{2}.  \label{omeg}
\end{equation}

Summarizing the above formulae for $\partial _{4}n_{i}=0$ and $\mathbf{\partial }%
_{2}n_{1}-\partial _{1}n_{2}=0,$ we get the condition for zero torsion for the
ansatz (\ref{ansk}) with $n_{k}=\partial _{k}n(x^{i}),$%
\begin{eqnarray}
\frac{1}{2h_{3}}(\partial _{i}h_{3}-w_{i}\partial _{4}h_{3}) =0,\ \frac{1}{%
2h_{4}}(\partial _{i}h_{4}-w_{i}\partial _{4}h_{4})&=&\partial _{4}w_{i},
\label{qa2} \\
\mathbf{\partial }_{2}w_{1}-\partial _{1}w_{2}-w_{2}\partial
_{4}w_{1}+w_{1}\partial _{4}w_{2} &=&0.  \label{qa3}
\end{eqnarray}%
From this, we can define a LC--configuration. The final step is to impose the
condition that the coefficients $n_{k}$ do not depend on $y^{4}.$ This can
be fixed for  $_{1}n_{k}(x^{i})=\partial _{k}n(x^{i})$ and $_{2}n_{k}=0,$
i.e. $n_{k}=\partial _{k}n(x^{i}).$

Finally, we note that the LC-conditions can be formulated recurrently, in
similar forms, for higher order shells of jet coordinates using the partial
derivative operator $\eth $ both for zero and non-zero sources.

\end{document}